\title{Church's thesis and related axioms\\ in Coq's type theory} 
\titlerunning{Church's thesis and related axioms in Coq's type theory}
\author{Yannick Forster}{Saarland University, Saarland Informatics Campus, Saarbrücken, Germany}{forster@cs.uni-saarland.de}{https://orcid.org/0000-0002-8676-9819}{}
\authorrunning{Y. Forster} %
\keywords{Church's thesis, constructive type theory, constructive reverse mathematics, synthetic computability theory, Coq} %
\newcommand\definedas{:=}
\newcommand{\lipicsNumber}[1]{\text{{\color{lipicsGray}\sffamily\textbf{#1}}}}
\newcommand{\introterm}[1]{\emph{#1}}
\newcommand\CC{{\textnormal{\textsf{-ACC}}}}
\definecolor{sb@dcyan}{rgb}{0.31,0.31,0.33}%
\begin{document}

\maketitle

\begin{abstract}
  ``Church's thesis'' ($\mathsf{CT}$) as an axiom in constructive logic states that every total function of type $\mathbb{N} \to \mathbb{N}$ is computable, i.e.\ definable in a model of computation.
  $\mathsf{CT}$ is inconsistent both in classical mathematics and in Brouwer's intuitionism since it contradicts weak Kőnig's lemma and the fan theorem, respectively.
  Recently, $\mathsf{CT}$ was proved consistent for (univalent) constructive type theory.

  Since neither weak Kőnig's lemma nor the fan theorem is a consequence of just logical axioms or just choice-like axioms assumed in constructive logic, it seems likely that $\CT$ is inconsistent only with a combination of classical logic and choice axioms.
  We study consequences of $\mathsf{CT}$ and its relation to several classes of axioms in Coq's type theory, a constructive type theory with a universe of propositions which proves neither classical logical axioms nor strong choice axioms.

  We thereby provide a partial answer to the question as to which axioms may preserve computational intuitions inherent to type theory, and which~certainly~do~not.
  The paper can also be read as a broad survey of axioms in type theory, with all results mechanised in the Coq proof assistant.
\end{abstract}

\section{Introduction}%

The intuition that the concept of a constructively defined function and a computable function can be identified is prevalent in intuitionistic logic since the advent of recursion theory and is maybe most natural in constructive type theory, where computation is primitive. %

A formalisation of the intuition is the axiom $\CT$ (``Church's thesis''), stating that every function is computable, i.e.\ definable in a model of computation.
$\CT$ is well-studied as part of Russian~constructivism~\cite{markov1954theory} and in the field of constructive reverse mathematics~\cite{IshiharaCRM,dienerConstructiveReverseMathematics2020}.%

$\CT$ allows proving results of recursion theory without extensive references to a model of computation, since one can reason with functions instead.
While such synthethic developments of computability theory~\cite{bridges1987varieties,richman1983church,BauerSyntCT} can be carried out in principle without assuming any axioms~\cite{forster2019synthetic}, assuming $\CT$ allows stronger results:
$\CT$ essentially provides a universal machine w.r.t.\ all functions in the logic, allowing to show the non-existence of certain deciding functions -- whose existence is logically independent with no axioms present.

It is easy to see that $\CT$ is in conflict with traditional classical mathematics, since the law of excluded middle $\LEM$ together with a form of the axiom of countable choice $\AC_{\nat,\nat}$ allows the definition of non-computable functions~\cite{troelstra1988constructivism}.
This observation can be sharpened in various ways:
To define a non-computable function directly, the weak limited principle of omniscience $\WLPO$ and the countable unique choice axiom $\AUC_{\nat,\bool}$ suffice.
Alternatively, Kleene noticed that there is a decidable tree predicate with infinitely many nodes but no computable infinite path~\cite{kleene1953recursive}.
If functions and computable functions are identified via $\CT$, a Kleene tree is in conflict with weak Kőnig's lemma $\WKL$ and with Brouwer's fan theorem.

It is however well-known that $\CT$ is consistent in Heyting arithmetic with Markov's principle $\MP$~\cite{kleene1945interpretation} which given $\CT$ states that termination of computation is stable under double negation.
Recently, Swan and Uemura~\cite{swan2019church} proved that $\CT$ is consistent in univalent type theory with propositional truncation and $\MP$.

While predicative Martin-Löf type theory as formalisation of Bishop's constructive mathematics proves the full axiom of choice $\AC$, univalent type theory usually only proves the axiom of unique choice $\AUC$.
But since $\AUC_{\nat,\bool}$ suffices to show that $\LEM$ implies $\neg \CT$, classical logic is incompatible with $\CT$ in both predicative and in univalent type theory.

In the (polymorphic) calculus of (cumulative) inductive constructions, a constructive type theory with a separate, impredicative universe of propositions as implemented by the proof assistant Coq~\cite{Coq}, none of $\AC$, $\AUC$, and $\AUC_{\nat,\bool}$ are provable.
This is because large eliminations on existential quantifications are not allowed in general~\cite{paulin1993inductive}, meaning one can not recover a function in general from a proof of $\forall x.\exists y.\;R x y$.
However, choice axioms as well al $\LEM$ can be consistently assumed in Coq's type theory~\cite{werner1997sets}. %
Furthermore, it seems likely that the consistency proof for $\CT$ in~\cite{swan2019church} can be adapted for Coq's type theory.

This puts Coq's type theory in a special position:
Since to disprove $\CT$ one needs a (weak) classical logical axiom and a (weak) choice axiom, assuming just classical logical axioms or just choice axioms might be consistent with $\CT$.
This paper is intended to serve as a preliminary report towards this consistency question, approximating it by surveying results from intuitionistic logic and constructive reverse mathematics in constructive type theory with a separate universe of propositions, with a special focus on $\CT$ and other axioms based on notions from computability theory.
Specifically, we discuss these propositional axioms:
\sfcommand{Homeo}
\begin{itemize}
\item computational enumerability axioms ($\EA, \EPF$) and Kleene trees (\KT) in \Cref{sec:kleene}
\item extensionality axioms like functional extensionality ($\Fext$), propositional extensionality (\Pext), and proof irrelevance (\PI) in \Cref{sec:ext}
\item classical logical axioms like the principle of excluded middle (\PEM, \WLEM), independence of premises (\IP), and limited principles of omniscience (\LPO, \WLPO, \LLPO) in \Cref{sec:class}
\item axioms of Russian constructivism like Markov's principle (\MP) in \Cref{sec:russ}%
\item choice axioms like the axiom of choice (\AC), countable choice ($\ACC$, $\AC_{\nat,\nat}$, $\AC_{\nat,\bool}$), dependent choice ($\ADC$), and unique choice ($\AUC, \AUC_{\nat,\bool}$) in~\Cref{sec:choice}
\item axioms on trees like weak Kőnig's lemma (\WKL) and the fan theorem (\FAN) in \Cref{sec:trees}
\item axioms regarding continuity and Brouwerian principles ($\Homeo$, %
  \Cont, \WCN) in \Cref{sec:cont}
\end{itemize}
The following hyper-linked diagram displays provable implications and incompatible axioms.
\begin{figure}[h]
  \centering
\adjustbox{scale=0.85}{
\begin{tikzcd}[column sep=tiny]
  \DNE \ar[d]\ar[r,leftrightarrow] & \LEM \ar[r]\ar[d] & \DGP \ar[r]\ar[dr] & \WLEM\ar[dl] & \ADC \ar[d] & \AC \ar[l]\ar[d]\ar[llll, swap, bend right=10, "\Fext"] \\
  \MP & \LPO\ar[l]\ar[r] & \WLPO\ar[r] \ar[l,"\MP", bend right]& \LLPO \ar[l,bend left, "\PFP"] \arrow[d, "\compl\semidecidable\texttt{-}\AC_{\nat,\bool}"{name=A}] & \ACC \ar[d] & \AC_{\nat \to \nat,\nat} \ar[dl]\ar[dddl,bend left=65,red,dashed,dash,"\color{black}\Fext"] \ar[d,swap,"\Cont"{name=cont}] \\
  & {\Homeo(\bool^\nat,\nat^\nat)} \ar[dr] & {\Homeo(\nat^\nat,\bool^\nat)} & \WKL \ar[d] & \AC_{\nat,\nat} \ar[d] \arrow[to=A] & \WCN \ar[ddl,red,dash,dashed]\\ %
  & & \KT \ar[lu] \ar[u] & \FAN \ar[l,dashed,red,dash] & \AUC_{\nat,\bool} \ar[ld,swap,dashed,red,dash,"\color{black}\WLPO"]  \\
  & & \EPF \ar[u] & \EA \ar[l,leftrightarrow] & \CT \ar[l]
\end{tikzcd}
}
\vspace{-1\baselineskip}
\caption{Overview of results. $\rightarrow$ are implications, 	 \protect\tikz[baseline]{\protect\draw[line width=0.2mm,densely dashed, red] (0,.8ex)--++(0.5,0);}~denotes incompatible axioms.}
\end{figure}
\newpage
All results in this paper are mechanised in the Coq proof assistant and the proof scripts are accessible at
\url{https://github.com/uds-psl/churchs-thesis-coq}.
The statements in this document are hyperlinked to their Coq proof, indicated by a \includegraphics[height=1em]{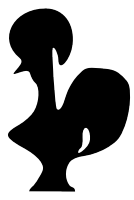}-symbol.

\textbf{Outline.}~~
\Cref{sec:prelim} establishes necessary preliminaries regarding Coq's type theory and introduces the notions of (synthetic) decidability, enumerability, and semi-decidability.
\Cref{sec:CT} introduces $\CT$ formally, together with the related synthetic axioms $\EA$ and $\EPF$.
\Cref{sec:synth} contains undecidability proofs %
based on $\CT$.
\Cref{sec:kleene} introduces decidable binary trees and constructs a Kleene tree. %
The connection of $\CT$ to the classes of axioms as listed above is surveyed in~\Cref{sec:ext,sec:class,sec:russ,sec:choice,sec:trees,sec:cont}.
\Cref{sec:conclusion} contains concluding remarks.

\section{Preliminaries}
\label{sec:prelim}

We work in the polymorphic calculus of cumulative inductive constructions as implemented by the Coq proof assistant~\cite{Coq}, which we will refer to as ``Coq's type theory''.
The calculus is a constructive type theory with a cumulative hierarchy of types $\Type_i$ (where $i$ is a natural number, but we leave out the index from now on), an impredicative universe of propositions $\Prop \subseteq \Type$, and inductive types in every universe.
\mbox{The inductive types of interest in this paper are}\label{def:map}\label{def:Some}\label{def:None}
\begin{align*}
  n : \nat &::= 0 \mid \succN\,n &
  b : \bool &::= \bfalse \mid \btrue \\
  o : \option A &::= \Noneintro \mid \Someintro a \quad \textit{where $a : A$} &
  l : \List A &::= [] \mid a :: l \quad \textit{where $a : A$} \\
  A + B &:= \inlintro a \mid \inrintro b \quad \textit{where $a : A$ and $b : B$}  &
  A \times B &:= (a, b) \quad \textit{where $a : A$ and $b : B$} 
\end{align*}
\newcommand\sigpair[2]{(#1,#2)}%
One can easily construct a pairing function $\langle \_ \,,\, \_ \rangle : \nat \to \nat \to \nat$ and for all $f : \nat \to \nat \to X$ an inverse construction $\lambda \langle  n, m \rangle .\; f n m$ of type $\nat \to X$ s.t.\ $(\lambda \langle n, m \rangle.\; f n m) \langle n, m \rangle = f n m$.

We write $n =_\bool m$ for the boolean equality decider on $\nat$, and $\neg_\bool$ for boolean negation.

If $l : \List A$ then $l[n] : \option A$ denotes the $n$-th element of $l$.
If $n < \length l$ we can assume $l[n] : A$.

We write $\forall x : X.\; A x$ for both dependent functions and logical universal quantification, $\exists x : X.\; A x$ where $A : X \to \Prop$ for existential quantification and $\Sigma x : X.\; A x$ where $A : X \to \Type$ for dependent pairs, with elements $\sigpair x y$.
Dependent pairs can be eliminated into arbitrary types, i.e.\ there is an elimination principle of type $\forall p : (\Sigma x.\; A x) \to \Type.\; (\forall (x : X) (y : A x). \; p \sigpair x y) \to \forall (s : \Sigma x.\; A x).\; p s.$
We call such a principle eliminating a proposition into arbitrary types a \introterm{large elimination principle}, following the terminology ``large elimination'' for Coq's case analysis construct \texttt{match}~\cite{paulin1993inductive}.
Crucially, Coq's type theory proves a large elimination principle for the falsity proposition $\bot$, i.e. explosion applies to arbitrary types: $\forall A : \Type.\; \bot \to A$.
In contrast, existential quantification can only be eliminated for $p : (\exists x.\; A x) \to \Prop$, but the following more specific large elimination principle is provable:%
\setCoqFilename{Axioms.axioms}%
\begin{lemma}[][mu_nat]
  There is a guarded minimisation function %
  $\mu_\nat$ of the following type: \[\mu_{\nat} : \forall f : \nat \to \bool.\;(\exists n.\; f n = \btrue) \to \Sigma n .\; f n = \btrue \land \forall m.\; f m = \btrue \to m \geq n.\]
\end{lemma}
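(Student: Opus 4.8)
The plan is to construct $\mu_\nat$ by unbounded linear search, defining a helper that starts from $0$ and inspects $f\,0, f\,1, f\,2, \dots$ until it finds the first index where $f$ returns $\btrue$. The key point is that although we are given only a \emph{proof} of $\exists n.\; f n = \btrue$ (living in $\Prop$, which cannot be eliminated into $\Type$ directly), the search itself is a transparent computation that does not need to inspect the content of this proof — the proof is needed only to justify \emph{termination}. The standard device is to phrase the search as a function defined by well-founded recursion on an accessibility predicate, where the existential witness guarantees that the starting point is accessible.

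Concretely, I would first define an inductive accessibility-style predicate $\mathsf{search}\_\mathsf{acc}\, f\, k$, intended to capture ``the linear search for $f$ started at $k$ terminates''. It has a single constructor taking a proof that either $f\,k = \btrue$, or else $f\,k = \bfalse$ together with $\mathsf{search}\_\mathsf{acc}\, f\, (\succN\, k)$. Crucially this predicate can live in $\Prop$ but still admits a large elimination into $\Type$, because it is a \emph{subsingleton} predicate in the relevant sense: Coq permits large elimination on an inductive proposition whose single constructor carries, besides proofs, only data that is itself recoverable — here the recursive structure lets us compute the witness. I would then write, by structural recursion on a proof $H : \mathsf{search}\_\mathsf{acc}\, f\, k$, a function returning $\Sigma n.\; f n = \btrue \land k \leq n \land \forall m.\; k \leq m \to f m = \btrue \to m \geq n$, namely: inspect $f\,k$; if $\btrue$, return $k$ with the obvious proofs; if $\bfalse$, recurse on the predecessor proof for $\succN\,k$ and propagate the results, using that no index below $\succN\,k$ satisfies $f$ since $f\,k = \bfalse$.

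Next I must discharge the hypothesis by showing that the existential premise $\exists n.\; f n = \btrue$ implies $\mathsf{search}\_\mathsf{acc}\, f\, 0$. This I would prove by first establishing, for any witness $n$ with $f\,n = \btrue$, that $\mathsf{search}\_\mathsf{acc}\, f\, k$ holds for all $k \leq n$ by downward induction (equivalently, induction on $n - k$), then instantiating at $k = 0$. Because $\mathsf{search}\_\mathsf{acc}\, f\, 0$ is a proposition, I may eliminate the $\exists$ into it without restriction. Feeding this into the recursive function of the previous step yields $\Sigma n.\; f n = \btrue \land 0 \leq n \land \forall m.\; 0 \leq m \to f m = \btrue \to m \geq n$, from which the stated minimality conjunct (dropping the vacuous $0 \leq m$ hypotheses) follows immediately.

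The main obstacle is entirely in the first step: justifying that the search-accessibility predicate admits the large elimination needed to produce a $\Sigma$-type, given that arbitrary existentials do \emph{not}. The reason it works here — and the subtle point the proof rests on — is that this particular inductive proposition is accepted by Coq's elimination-restriction because its constructor's non-propositional content can be reconstructed, so the match compiles into $\Type$; contrast this with the general $\exists$, whose witness is genuinely erased. Getting the precise inductive formulation so that Coq's singleton-elimination criterion applies is the delicate part; once the predicate is set up correctly, both the recursive extraction of the witness and the downward-induction supplying the accessibility proof are routine.
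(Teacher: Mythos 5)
Your construction is, in outline, exactly the one the paper's proof points to (Coq's \href{https://coq.inria.fr/library/Coq.Logic.ConstructiveEpsilon.html}{ConstructiveEpsilon}, in the style surveyed in~\cite{larcheybraga,bertot2013interactive}): linear search from $0$, a $\Prop$-valued termination certificate obtained from the existential witness by downward induction, and a recursion on that certificate that returns the least witness together with its minimality proof. The decomposition, the strengthened invariant indexed by the start point $k$, and the downward induction are all correct.

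The genuine problem is the mechanism you yourself flag as ``the subtle point the proof rests on''. Coq's singleton-elimination criterion is purely syntactic --- at most one constructor, all of whose arguments are proofs of propositions --- and has nothing to do with content being ``recoverable'' or ``reconstructible''. Your predicate, whose single constructor packs the proposition $f\,k = \btrue \lor (f\,k = \bfalse \land \mathsf{acc}\,f\,(\succN\,k))$, may even pass that syntactic test, but this buys you nothing: a $\Type$-valued match on it would only hand you back that disjunction, and $\lor$, being a two-constructor proposition, emphatically does not eliminate into $\Type$, so neither the branch taken nor the witness can be read off the proof term. What actually makes the definition type-check --- and what your middle paragraph implicitly does --- is different: the informative branching comes from case analysis on the boolean $f\,k$ (data, not proof); the certificate serves only as the structurally decreasing argument of the fixpoint, which Coq permits even for $\Prop$ arguments; and the recursive call is fed a $\Prop$-to-$\Prop$ inversion of the certificate, which the guard checker accepts as a subterm. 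No match of a proposition into $\Type$ ever occurs; this fixpoint-on-a-proof device is what the paper means by a ``(recursive) large elimination principle for the accessibility predicate'' (it is exactly how $\textsf{Acc\_rect}$ is derived). If you implemented your final paragraph literally --- compiling a match on the certificate into $\Type$ and computing the witness from it --- Coq would reject it; the construction survives only because your second paragraph quietly does the right thing, in contradiction with the explanation you attach to it.
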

There are various implementations of such a minimisation function in \href{https://coq.inria.fr/library/Coq.Logic.ConstructiveEpsilon.html}{Coq's Standard Library}\rlap.\footnote{The idea was conceived independently by Benjamin Werner and Jean-François Monin in the 1990s.}
One uses a (recursive) large elimination principle for the accessibility predicate, see e.g.\ \cite[\S 2.7, \S 4.1, \S 4.2]{larcheybraga} and~\cite[\S 14.2.3, \S 15.4]{bertot2013interactive} for a contemporary overview how to implement large eliminations principles.
We will not need any other large elimination principle in this paper.
A restriction of large elimination in general is necessary for consistency of Coq~\cite{coquand:inria-00075471}.
As a by-product, the computational universe $\Type$ is separated from the logical universe $\Prop$, allowing classical logic in $\Prop$ to be assumed while the computational intuitions for $\Type$ remain intact.

\subsection{Partial Functions}

All definable functions in type theory are total by definition.
To model partiality, one often resorts to functional relations $R : A \to B \to \Prop$ or step-indexed functions $A \to \nat \to \option B$, as for instance pioneered by Richman~\cite{richman1983church} in constructive logic, see e.g.~\cite{escard_et_al:LIPIcs:2017:7682} for a comprehensive overview.%

For our purpose, we simply assume a type $\partial A$ for $A : \Type$ and a definedness relation $\hasvalue : \partial A \to A \to \Prop$ and write $A \pfun B$ for $A \to \partial B$.
We assume monadic structure for $\partial$ ($\retintro$ and $\bind$),\label{def:ret}
an undefined value ($\textsf{undef}$),
a minimisation operation ($\mu$), and a step-indexed evaluator ($\textsf{seval}$).
The operations and their specifications are listed in \Cref{fig:partial}.
 
\begin{figure}\centering
  
  \begin{tabular}{l l l}
    $\partial A : \mathbb{T}$ & partial values over $A : \mathbb{T}$ & \\
    $\hasvalue : \partial A \to A \to \Prop$ & definedness of values & $x \hasvalue a_1 \to x \hasvalue a_2 \to a_1 = a_2$ \\
    $\ter {(x : \partial A)} : \Prop$ & & $\ter x := \exists a.\; x \hasvalue a$ \\ 
    $\equiv_{\partial A} \,: \partial A \to \partial A \to \Prop $ & equivalence & $x \equiv_{\partial A} y := (\forall a.\; x \hasvalue a \leftrightarrow y \hasvalue a)$ \\
    \hline
    $\ret : A \to \partial A$ & monadic return & $\ret a \hasvalue a$ \\
    $\textsf{undef} : \partial A$ & undefined value & $\nexists a.\textsf{undef} \hasvalue a$ \\
    $\bind : \partial A \to (A \to \partial B) \to \partial B$ & monadic bind & $x \bind f \hasvalue b \leftrightarrow (\exists a.\; x \hasvalue a \land f a \hasvalue b)$ \\
    \hline
    $\mu : (\nat \to \bool) \to \partial \nat$ & unbounded search & $\begin{array}{ll}\mu f \hasvalue n \leftrightarrow f n = \btrue\, \land \\ \hspace{1.5cm}\forall m < n.\;f m = \bfalse\end{array}$ \\
    \hline
    $\textsf{seval} : \partial A \to \nat \to \option A$ & step-indexed evaluation & $x \hasvalue a \leftrightarrow \exists n.\;\textsf{seval}\, x n = \Some a$ \\
  \end{tabular}

  \caption{A monad for partial values}
  \label{fig:partial}
\end{figure}

\subsection{Equivalence relations on functions}

Besides intensional equality ($=$), we will consider other more extensional equivalence relations in this paper.
For instance, extensional equality of functions $f, g$ ($\forall x.\;f x = g x$), extensional equivalence of predicates $p, q$ ($\forall x.\;p x \leftrightarrow q x$), or range equivalence of functions $f, g$ ($\forall x.\; (\exists y.\;f y = x) \leftrightarrow (\exists y.\;g y = x)$).
We will denote all of these equivalence relations with the symbol $\equiv$ and indicate what is meant by an index.
For discrete $X$ (e.g.\ $\nat$, $\option \nat$, $\List \bool$, \dots), $\equivwrt{X}$ denotes equality, $\equivwrt{\Prop}$ denotes logical equivalence, $\equivwrt{A \to B}$ denotes an extensional lift of $\equivwrt{B}$, $\equivwrt{A \to \Prop}$ denotes extensional equivalence, and $\equivwrt{\textsf{ran}}$ denotes range equivalence.

Assuming the existence of surjections $A \to (A \to B)$ may or may not be consistent, depending on the particular equivalence relation.
We introduce the notion of \emph{surjection w.r.t.\ $\equivwrt{B}$} as $\forall b : B.~\exists a : A. f a \equivwrt{B} b$.
We call a function $f : A \to B$ an \emph{injection w.r.t.\ $\equivwrt{A}$ and $\equivwrt{B}$} if $\forall a_1 a_2.~f a_1 \equivwrt{B} f a_2 \to a_1 \equivwrt{A} a_2$ and a \emph{bijection} if it is an injection and surjection.

One formulation of Cantor's theorem is that there is no surjection $\nat \to (\nat \to \nat)$ w.r.t.\ $=$.
However, the same proof can be used for the following strengthening of Cantor's theorem:

\setCoqFilename{Axioms.axioms}
\begin{fact}[Cantor][Cantor]
  There is no surjection $\nat \to (\nat \to \nat)$ w.r.t.~$\equivwrt{\nat \to \nat}$.
\end{fact}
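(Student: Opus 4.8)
The plan is to run the classical diagonal argument, observing that it only ever inspects the candidate surjection along its diagonal and hence needs only the extensional equivalence $\equivwrt{\nat \to \nat}$ rather than intensional equality $=$. Suppose toward a contradiction that $f : \nat \to (\nat \to \nat)$ is a surjection w.r.t.\ $\equivwrt{\nat \to \nat}$, i.e.\ for every $g : \nat \to \nat$ there is some $n$ with $f\,n \equivwrt{\nat \to \nat} g$, which unfolds to $\forall m.\; f\,n\,m = g\,m$.

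First I would define the diagonal function $d : \nat \to \nat$ by $d\,n := \succN\,(f\,n\,n)$; since all functions in the type theory are total and this one is honestly definable, $d$ is a legitimate target for the surjection. Next, applying the surjectivity assumption to $d$ yields an index $n$ with $f\,n \equivwrt{\nat \to \nat} d$, i.e.\ $f\,n\,m = d\,m$ for all $m$. Instantiating this pointwise equality at the single point $m = n$ gives $f\,n\,n = d\,n = \succN\,(f\,n\,n)$.

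The final step is to derive falsity from $f\,n\,n = \succN\,(f\,n\,n)$, which contradicts the fact that no natural number equals its own successor (immediate from disjointness of the constructors $0$ and $\succN$, or by a short induction). The point worth emphasising — and the reason this strengthens the usual Cantor statement — is that the argument never uses the full functional identity $f\,n = d$: it consults the equation only at the diagonal argument $m = n$, so extensional equivalence of functions is exactly as strong as intensional equality for the purposes of the diagonalisation. There is essentially no technical obstacle here; the only subtlety is to use $\succN\,(f\,n\,n)$ (rather than, say, a boolean flip) so that the contradiction $x = \succN\,x$ is available directly over $\nat$.
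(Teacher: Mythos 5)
Your proof is correct and is essentially the paper's own argument: the paper explicitly notes that the standard diagonal proof (for surjections w.r.t.\ $=$) carries over unchanged to the extensional statement, which is exactly your observation that the diagonalisation only consults $f\,n \equivwrt{\nat\to\nat} d$ at the single point $m = n$. The choice of $\succN\,(f\,n\,n)$ as the diagonal and the contradiction $x = \succN\,x$ are the standard instantiation, so nothing differs in substance.
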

\subsection{Decidability, Semi-decidability, Enumerability, Reducibility}

We define decidability, (co-)semi-decidability, and enumerability for predicates $p : X \to \Prop$:
\[\begin{array}{lll@{\hspace{1cm}}r}
  \decidable p & := \exists f : X \to \bool.&\forall x.~p x \leftrightarrow f x = \textsf{true} &\text{(``$p$ is decidable'')} \\
  \semidecidable p & := \exists f : X \to \nat \to \bool.&\forall x.~p x \leftrightarrow \exists n. f x n = \btrue &\text{(``$p$ is semi-decidable'')} \\
  \compl\semidecidable p & := \exists f : X \to \nat \to \bool.&\forall x.~p x \leftrightarrow \forall n. f x n = \bfalse &\text{(``$p$ is co-semi-decidable'')} \\
  \enumerable p & := \exists f : \nat \to \option X.&\forall x.~p x \leftrightarrow \exists n. f n = \Some x &\text{(``$p$ is enumerable'')} 
\end{array}\]

Although all notions are defined on unary predicates, we use them on $n$-ary relations via (implicit) uncurrying.
We write $\compl p$ for the complement $\lambda x.\;\neg p x$ of $p$.
We call a type $X$ \introterm{discrete} if its equality relation $=_{X}$ is decidable and \introterm{enumerable} if the predicate $\lambda x.\top$ is enumerable.

Traditionally, propositions $P$ s.t.\ $P \leftrightarrow (\exists n.\; f n =\btrue)$ for some $f$ are often called $\Sigma^0_1$ or ``simply existential'', and $P$ s.t.\ $P \leftrightarrow (\forall n.\; f n = \bfalse)$ are called $\Pi^0_1$ or ``simply universal''.
Semi-decidable predicates are pointwise $\Sigma^0_1$, and co-semi-decidable predicates are pointwise $\Pi^0_1$.
Note that neither $\compl\semidecidable p \to \semidecidable \compl p$ nor the converse is provable, only the following connections:

\setCoqFilename{Synthetic.SemiDecidabilityFacts}%
\begin{lemma}[][decidable_semi_decidable]
  The following hold:
  \begin{enumerate}
    \coqitem[decidable_semi_decidable] Decidable predicates are semi-decidable and co-semi-decidable.
    \setCoqFilename{Synthetic.EnumerabilityFacts}%
    \coqitem[semi_decidable_enumerable] Semi-decidable predicates on enumerable types are enumerable.
    \coqitem[enumerable_semi_decidable] Enumerable predicates on discrete types are semi-decidable.
    \setCoqFilename{Synthetic.SemiDecidabilityFacts}%
  \coqitem[sdec_co_sdec_comp] The complement of semi-decidable predicates is co-semi-decidable.
  \end{enumerate}
\end{lemma}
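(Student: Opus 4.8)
The plan is to prove all four statements by direct construction: in each case I unfold the existential witness supplied by the hypothesis (a deciding, semi-deciding, or enumerating function) and build the witness required for the conclusion, relying throughout on the fact that boolean equalities are decidable, so that $\neg_\bool b = \bfalse \leftrightarrow b = \btrue$ and $\neg(b = \btrue) \leftrightarrow b = \bfalse$ hold constructively by case analysis on $b : \bool$. I would treat the four claims in order, and none of them needs any axiom.

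For (1), let $f : X \to \bool$ decide $p$, so $p\,x \leftrightarrow f\,x = \btrue$. Semi-decidability follows from the step-index-ignoring function $g\,x\,n := f\,x$, since $\exists n.\, g\,x\,n = \btrue$ is equivalent to $f\,x = \btrue$; co-semi-decidability follows from $g\,x\,n := \neg_\bool(f\,x)$, since $\forall n.\, \neg_\bool(f\,x) = \bfalse$ is equivalent to $f\,x = \btrue$. For (4), let $f$ semi-decide $p$, so $p\,x \leftrightarrow \exists n.\, f\,x\,n = \btrue$. I would reuse the very same $f$ as a co-semi-decider for $\compl p$: the goal reduces to the equivalence $(\forall n.\, f\,x\,n = \bfalse) \leftrightarrow \neg\exists n.\, f\,x\,n = \btrue$, whose right-to-left direction applies $\neg(f\,x\,n = \btrue) \to f\,x\,n = \bfalse$ pointwise. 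This boolean double-negation step is the only place where a (trivial) decidability argument is genuinely invoked, and it is the part I would flag as the least mechanical.

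The remaining two claims exploit the extra structure on $X$. For (3), let $g : \nat \to \option X$ enumerate $p$ and let $=_X$ witness discreteness of $X$. I would define the semi-decider $f\,x\,n$ to return $\btrue$ exactly when $g\,n = \Some x$, testing this equality with the decider for $=_X$; then $\exists n.\, f\,x\,n = \btrue$ is equivalent to $\exists n.\, g\,n = \Some x$, i.e.\ to $p\,x$. For (2), let $f : X \to \nat \to \bool$ semi-decide $p$ and let $e : \nat \to \option X$ enumerate $X$ (so $\forall x.\, \exists n.\, e\,n = \Some x$). Here I would use the pairing function to run the enumeration of $X$ and the unbounded search of the semi-decider simultaneously: define an enumerator that on input $\langle n, m\rangle$ inspects $e\,n$, outputting $\Some x$ whenever $e\,n = \Some x$ with $f\,x\,m = \btrue$, and $\Noneintro$ otherwise. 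Then $p\,x$ holds iff there are $n, m$ with $e\,n = \Some x$ and $f\,x\,m = \btrue$, which is exactly the existence of an index $\langle n, m\rangle$ hit by the new enumerator. The only mild bookkeeping is the repackaging of the two nested existentials through the pairing function; I expect this combinatorial step, together with the boolean argument in (4), to be the sum total of the non-routine work.
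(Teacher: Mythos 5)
Your constructions are correct and are exactly the ones used in the paper's Coq mechanisation (the paper states this lemma without a prose proof): the constant semi-decider and negated co-semi-decider for (1), the reuse of the semi-decider together with the boolean fact $(\forall n.\,f\,n = \bfalse) \leftrightarrow \neg(\exists n.\,f\,n = \btrue)$ for (4), the equality test against the enumerator for (3), and the paired search over enumeration index and step index for (2). No gaps; the boolean case analysis and the pairing bookkeeping you flag are indeed the only non-trivial steps.
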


\setCoqFilename{Synthetic.DecidabilityFacts}
\begin{lemma}[][dec_compl]
  Decidable predicates are closed under complementation.
  Decidable, enumerable, and semi-decidable predicates are closed under (pointwise) conjunction and disjunction.
\end{lemma}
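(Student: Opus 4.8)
The plan is to prove each closure property by exhibiting, from the boolean deciders, step-functions, or enumerators witnessing the hypotheses, a corresponding witness for the conclusion; every verification then reduces to a short constructive equivalence-chase. I would begin with the decidable cases, which are immediate: if $f$ decides $p$ and $g$ decides $q$, then $\lambda x.\,\neg_\bool(f x)$ decides $\compl p$, and the pointwise boolean ``and'' and ``or'' of $f$ and $g$ decide $p \land q$ and $p \lor q$ respectively. Correctness uses only that $\bool$ has decidable equality, so that $\neg(f x = \btrue) \leftrightarrow f x = \bfalse$ and the boolean connectives mirror the logical ones.

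For the semi-decidable cases, let $f,g : X \to \nat \to \bool$ witness $p\,x \leftrightarrow \exists n.\,f\,x\,n = \btrue$ and $q\,x \leftrightarrow \exists n.\,g\,x\,n = \btrue$. For disjunction I take the pointwise-or step-function $h\,x\,n$ that returns $\btrue$ exactly when $f\,x\,n = \btrue$ or $g\,x\,n = \btrue$, relying on the constructively valid equivalence $(\exists n.\,(A\,n \lor B\,n)) \leftrightarrow (\exists n.\,A\,n) \lor (\exists n.\,B\,n)$. For conjunction I instead search over pairs, setting $h\,x\,\langle n,m\rangle$ to $\btrue$ exactly when $f\,x\,n = \btrue$ and $g\,x\,m = \btrue$; here surjectivity of the pairing function, together with $(\exists n\,m.\,(A\,n \land B\,m)) \leftrightarrow (\exists n.\,A\,n) \land (\exists m.\,B\,m)$, yields the claim.

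For enumerability, let $f,g : \nat \to \option X$ be the enumerators. Disjunction is handled by interleaving, e.g.\ $h(2n) := f\,n$ and $h(2n+1) := g\,n$, so that the range of $h$ is exactly the union of the ranges of $f$ and $g$; this needs no assumption on $X$. The main obstacle is the conjunction of enumerable predicates: to legitimately emit an element produced by $f$ into an enumerator for $p \land q$, one must confirm that the same element also lies in the range of $g$, and for an arbitrary type $X$ no such comparison is available. I would therefore carry out this case assuming $X$ is discrete --- the same hypothesis already needed in the preceding lemma to turn enumerable predicates into semi-decidable ones --- so that $=_X$ is decidable. Then $h\,\langle n,m\rangle$ can be defined to return $\Some x$ when $f\,n = \Some x$, $g\,m = \Some y$, and $x =_X y$, and $\Noneintro$ otherwise; decidability of $=_X$ is precisely what makes this a total definable function, and correctness again follows from surjectivity of the pairing function and the two-existential manipulation, giving $\exists k.\,h\,k = \Some x \leftrightarrow p\,x \land q\,x$.
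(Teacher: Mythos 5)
Your proof is correct and follows essentially the same route as the paper's (mechanised) proof: boolean negation and the boolean connectives for the decidable cases, pointwise disjunction and a pairing-based search for the semi-decidable cases, interleaving for enumerable disjunction, and a pairing-plus-equality-test enumerator for enumerable conjunction. Your observation that the enumerable-conjunction case needs discreteness of $X$ is exactly right and is not a weakness of your argument but a sharpening of the statement as printed: the paper's prose silently omits this hypothesis, while the corresponding lemma in the Coq development does assume a discrete type, precisely because without a decidable equality the comparison of the two enumerators' outputs cannot be made into a definable function.
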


\section{Church's thesis in type theory}
\label{sec:CT}

Church's thesis for total functions ($\CT$) states that every function of type $\nat \to \nat$ is algorithmic.
Thus $\CT$ is a relativisation of the function space $\nat \to \nat$ w.r.t.\ a given (Turing-complete) %
model of computation%
, reminiscent of the axiom $V = L$ in set theory~\cite{kreisel1970church}.

We first define $\CT$ by abstracting away from a concrete model of computation and work with an \introterm{abstract model of computation}, consisting of an \emph{abstract computation function} $T c x n$ (with $T : \mathbb{N} \to \mathbb{N} \to \mathbb{N} \to \option \nat$),
assigning to a code $c$ (to be interpreted as the code of a partial recursive function in a model of computation), an input number $x$, and a step index $n$ an output number $y$ if the code terminates in $n$ steps on $x$ with value $y$. %
The function $T c x$ is assumed to be monotonic, i.e.\ increasing the step index does not change the potential value:
\[ T c x n_1 = \Some y \to \forall n_2 \geq n_1.~T c x n_2 = \Some y. \]

Based on $T$ we define a computability relation between $c : \nat$ and $f : \nat \to \nat$:
\[ c \sim f := \forall x. \exists n.~T c x n = \Some (f x). \] %

Since $T$ is monotonic, $\sim$ is extensional, i.e.\ $n \sim f_1 \to n \sim f_2 \to \forall x.\;f_1 x = f_2 x$.
We define Church's thesis for total functions relative to an abstract computation function~$T$:%
\label{def:CT}%
\[ \CTintro_T := \forall f : \nat \to \nat.\exists n : \nat.~n\sim f \]

Note that $\CT_T$ is clearly not consistent for every choice of $T$.
If we write $\CT$ without index, we mean $T$ to be the step-indexed evaluation function of a concrete, Turing-complete model of computation.
For the mechanisation  we could for instance pick the equivalent models of Turing machines~\cite{forster2020verified}, $\lambda$-calculus~\cite{forster2019call}, $\mu$-recursive functions~\cite{larchey2017typing}, or register machines~\cite{forster2019certified,larchey2020hilbert}.
It seems likely that the consistency proof of $\CT$ in \cite{swan2019church} can be adapted to Coq.

Since specific properties of the model of computation are not needed, we develop and mechanise all results of this paper parameterised in an arbitrary $T$.
Thus, we could also state all results in terms of a fully synthetic Church's thesis axiom $\Sigma T.\CT_T$.
\begin{fact}
  $\CT \to \Sigma T.\CT_T$
\end{fact}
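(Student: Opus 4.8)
The plan is to observe that this implication is essentially a repackaging rather than a construction: by convention the undecorated $\CT$ \emph{is} the proposition $\CT_{T_0}$ for the particular $T_0$ that we fixed as the step-indexed evaluator of a concrete Turing-complete model (say the $\lambda$-calculus of~\cite{forster2019call}). Hence the witness needed for the dependent pair $\Sigma T.\CT_T$ is not something we must reconstruct from a proof; it is a \emph{concrete, definable} function $T_0 : \nat \to \nat \to \nat \to \option\nat$ to which we have direct syntactic access. This is the crucial point that makes the implication go through rather than get blocked. Recall that the goal $\Sigma T.\CT_T$ lives in $\Type$ while the hypothesis $\CT$ lives in $\Prop$, so in general crossing from a propositional existential to a dependent pair would require extracting computational data across the $\Prop$/$\Type$ boundary — exactly what Coq's restriction on large elimination forbids. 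Here no such extraction happens, because $T_0$ is given to us directly and never hidden behind an $\exists$.

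Concretely I would proceed as follows. First, unfold the meaning of the unindexed $\CT$, exhibiting the fixed evaluator $T_0$ and recalling that $\CT$ is literally $\forall f : \nat \to \nat.\,\exists n.\,n \sim f$ with $\sim$ computed through $T_0$. Second, supply $T_0$ as the first component of the pair. Third, discharge the side condition that $T_0$ be a legitimate abstract computation function, namely the monotonicity law $T_0\,c\,x\,n_1 = \Some y \to \forall n_2 \ge n_1.\,T_0\,c\,x\,n_2 = \Some y$, which holds by construction of any reasonable step-indexed evaluator and is already available from the concrete model. Finally, place the hypothesis $\CT$ itself as the remaining component, since it is exactly $\CT_{T_0}$.

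The only thing worth checking carefully — and what I expect to be the sole, minor obstacle — is a bookkeeping matter rather than a mathematical one: ensuring that the data bundled inside $\Sigma T.\CT_T$ matches what the concrete model provides. If the synthetic axiom packages the monotonicity proof alongside $T$, the witness is a triple $(T_0, \text{monotonicity}, \CT)$; if $\CT_T$ is read purely as $\forall f.\,\exists n.\,n \sim f$, then the pair $(T_0, \CT)$ suffices. Either way no genuine computational content has to be synthesised, so the proof is immediate once the definitions are aligned.
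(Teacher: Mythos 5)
Your proposal is correct and matches the paper's own (implicit) argument: the paper states this fact without proof precisely because the unindexed $\CT$ is by definition $\CT_{T_0}$ for a concrete, definable step-indexed evaluator $T_0$, so the pair $(T_0, h)$ — together with the model's monotonicity proof — directly inhabits $\Sigma T.\CT_T$ with no elimination of a propositional existential into $\Type$. Your observation that this avoids the large-elimination restriction is exactly the right point, and the remark about bundling the monotonicity witness is the only bookkeeping involved.
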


Note that the implication is strict:
An abstract computation function does not rule out oracles for e.g.\ the halting problem of Turing machines, whereas $\CT$ -- with $T$ defined in terms of a standard, Turing-complete model of computation -- proves the undecidability of the Turing machine halting problem.

\subsection{Bauer's enumerability axiom $\EA$}

In proofs of theorems with $\CT_T$ as assumption, $T$ can be used as replacement for a \emph{universal machine}.
Bauer~\cite{BauerSyntCT} develops computability theory synthetically using the axiom ``the set of enumerable sets of natural numbers is enumerable'', which is equivalent to $\Sigma T.\CT_T$ and thus strictly weaker than $\CT$, but can also be used in place of a universal machine.
We introduce Bauer's axiom in our setting as $\EA'$ and immediately introduce a strengthening~$\EA$ s.t.\ $(\Sigma T.\CT_T) \leftrightarrow \EA$ and $\EA \to \EA'$:
\newcommand\W{\mathcal{W}}%
\[ \EAintro' := \Sigma \W : \nat \to (\nat \to \mathbb{P}).\forall p : \nat \to \mathbb{P}.~
  \enumerable p \leftrightarrow \exists c.~\W c \equivwrt{\nat \to \Prop} p\]
That is, $\EA'$ states that there is an enumerator $\W$ of all enumerable predicates, up to extensionality.
In contrast, $\EA$ poses the existence of an enumerator of all possible enumerators, up to range equivalence:
\label{def:EA}%
\[  \EAintro := \Sigma \varphi : \nat \to (\nat \to \option\nat).\forall f : \nat \to \option \nat. \exists c.\; \varphi c \equiv_{\textsf{ran}} f \]

That is, $\varphi$ is a surjection w.r.t.\ range equivalence $f \equiv_{\textsf{ran}} g$, where $\varphi c \equiv_{\textsf{ran}} f \leftrightarrow \forall x.(\exists n.\varphi c n = \Some x) \leftrightarrow (\exists n. f n = \Some x)$.

Note the two different roles of natural numbers in the two axioms:
If we would consider predicates over a general type $X$ we would have $\W : \nat \to (X \to \Prop)$ and $\varphi : \nat \to (\nat \to \option X)$, i.e.\ $\W c$ would be an enumerable predicate and $\varphi c$ an enumerator of a predicate $X \to \Prop$.

We start by proving $\CT_T \to \EA$ by constructing $\varphi$ from an arbitrary $T$:
\[\varphi c \langle n,m \rangle \definedas \iteis{T c n m}{\Some x}{\succN x}{0}\]
\vspace{-1.6\baselineskip}
\setCoqFilename{Axioms.axioms}
\begin{lemma}[][CT_to_EA']
  If $\CT_T$ then $\forall f : \nat \to \option \nat. \exists c.\; \varphi c \equiv_{\textsf{ran}} f$.
\end{lemma}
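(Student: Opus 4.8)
The plan is to reduce the claim to $\CT_T$ by turning the partial enumerator $f$ into a genuine total function on $\nat$, feeding that to $\CT_T$, and then reading the resulting code back through $\varphi$. First I would encode $f : \nat \to \option\nat$ as a total function $g : \nat \to \nat$ by setting $g\,k \definedas \succN x$ whenever $f\,k = \Someintro x$, and $g\,k \definedas 0$ whenever $f\,k = \Noneintro$. The shift by $\succN$ is deliberate: it reserves the value $0$ as a marker for ``$f$ is undefined here'', so that every point at which $f$ yields $\Noneintro$ is recorded by $g$ as $0$, a value which $\varphi$ will not count towards the range rather than report as a genuine output.

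Next I would apply $\CT_T$ to the total function $g$ to obtain a code $c$ with $c \sim g$, that is $\forall x.\exists n.\,T\,c\,x\,n = \Someintro(g\,x)$. It then remains to verify $\varphi\,c \equiv_{\textsf{ran}} f$, i.e.\ that for every $x$ there is a $k$ with $\varphi\,c\,k = \Someintro x$ exactly when there is a $k$ with $f\,k = \Someintro x$. Unfolding $\varphi$ together with the retraction $(\lambda\langle n,m\rangle.\,\ldots)\langle n,m\rangle$ of the pairing, the output $\varphi\,c\,\langle n,m\rangle$ equals $\Someintro x$ precisely when $T\,c\,n\,m$ reports the shifted value $\succN x$. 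For the forward direction I would, from $\varphi\,c\,\langle n,m\rangle = \Someintro x$, read off that $T\,c\,n\,m$ reports $\succN x$, then use $c \sim g$ together with functionality of values and monotonicity of $T$ to conclude $g\,n = \succN x$, and hence $f\,n = \Someintro x$ by the definition of $g$. For the backward direction, from $f\,k = \Someintro x$ I get $g\,k = \succN x$, so $c \sim g$ supplies an $m$ with $T\,c\,k\,m$ reporting $\succN x$, giving $\varphi\,c\,\langle k,m\rangle = \Someintro x$.

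The main obstacle is getting this range correspondence to hold on the nose, rather than up to an off-by-one or a spurious extra element, and the delicate point is the handling of the undefined outputs of $f$. Because $g$ records them as $0$, and $\varphi$ treats the numeric value $0$ as $\Noneintro$ when reinterpreting $T$'s output in $\option\nat$, the $\Noneintro$-points of $f$ never enter the range of $\varphi\,c$; this is exactly what makes the equivalence hold uniformly. In particular it settles the degenerate case where $f$ enumerates the empty set: there $g$ is constantly $0$, so $T\,c\,n\,m$ only ever reports $0$ and $\varphi\,c$ is nowhere defined, whence both sides of the biconditional are empty. The remaining work is routine: unfolding $\varphi$ and the pairing, and the two appeals to monotonicity of $T$ needed to pass between a single step index and the existential quantifier over step indices.
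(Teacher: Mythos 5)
Your proposal is correct and is essentially the paper's own proof: you use the same successor-shift encoding of $f$ into a total function (reserving $0$ for $\Noneintro$), apply $\CT_T$ to it, and prove range equivalence where the forward direction rests on functionality of $T$ (two defined outputs at different step indices agree, a consequence of monotonicity) and the backward direction is immediate -- precisely the two points the paper's terse proof cites. The only cosmetic slip is your claim that both directions appeal to monotonicity; the backward direction needs only $c \sim g$ and the definition of $\varphi$.
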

\begin{proof}
  The direction from left to right to establish $\equiv_{\textsf{ran}}$ is based on the fact that if $T c x n_1 = \Some y_1$ and $T c x n_2 = \Some y_2$ then $y_1 = y_2$.
  The other direction is straightforward.
\end{proof}

\begin{theorem}[][CT_to_EA]
  $\forall T.\;\CT_T \to \EA$
\end{theorem}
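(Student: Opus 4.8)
The goal is to prove $\CTintro_T \to \EAintro$ for an arbitrary abstract computation function $T$. Recall that $\EAintro$ asks us to \emph{construct} a specific $\varphi : \nat \to (\nat \to \option\nat)$ together with a proof that every $f : \nat \to \option\nat$ is matched by some code $c$ up to range equivalence. The preceding \Cref{CT_to_EA'} already supplies exactly the range-surjectivity property for the canonical $\varphi$ defined by $\varphi c \langle n,m \rangle \definedas \iteis{T c n m}{\Some x}{\succN x}{0}$. So the content of this theorem is almost entirely packaging.

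The plan is to take $\varphi$ to be the witness for the $\Sigma$-type in $\EAintro$, and then discharge the universally quantified proof obligation $\forall f.\exists c.\;\varphi c \equiv_{\textsf{ran}} f$ by a direct appeal to \Cref{CT_to_EA'}. Concretely, I would first unfold the definition of $\EAintro$, which is a dependent pair: its first component is a function of type $\nat \to (\nat \to \option\nat)$ and its second component is the surjectivity statement. I provide $\varphi$ as the first component. The second component is then precisely the conclusion of \Cref{CT_to_EA'}, which holds under the hypothesis $\CTintro_T$. Hence the entire proof reduces to invoking the prior lemma.

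The one genuine subtlety, and the step I expect to require the most care, concerns the \emph{large elimination} issue flagged throughout the preliminaries. Since $\EAintro$ is a $\Sigma$-type living in $\Type$ (not an existential in $\Prop$), producing its first component $\varphi$ demands an actual function, not merely a propositional witness. Here this is unproblematic because $\varphi$ is defined explicitly and computationally from $T$ by the displayed formula, with the inner pattern match $\iteis{T c n m}{\Some x}{\succN x}{0}$ being an ordinary (large) elimination on the $\option\nat$ value $T c n m$; no choice principle is needed to build it. The encoding convention is worth noting: outputs are shifted by $\succN$ so that the sentinel $0$ can signal ``no output yet,'' and this shift is exactly what makes the range of $\varphi c$ correspond to the values enumerated by $T c$.

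Thus the theorem is proved as follows. Define $\varphi$ by the displayed equation and exhibit the pair $\sigpair{\varphi}{H}$, where $H : \forall f.\exists c.\;\varphi c \equiv_{\textsf{ran}} f$ is obtained from \Cref{CT_to_EA'} applied to the assumption $\CTintro_T$. Since $T$ was arbitrary, this establishes $\forall T.\;\CTintro_T \to \EAintro$. The only thing to verify is that the $\varphi$ used in \Cref{CT_to_EA'} is definitionally the same as the one we supply as the first projection, which it is by construction, so no further reasoning is required.
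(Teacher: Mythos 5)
Your proposal is correct and matches the paper exactly: the paper states this theorem as an immediate consequence of the preceding lemma, whose content is precisely the surjectivity of the explicitly defined $\varphi$, so the proof is just the pairing of $\varphi$ with that lemma's conclusion. Your added observation that the first component of the $\Sigma$-type poses no large-elimination problem (because $\varphi$ is defined computationally from $T$) is accurate and consistent with the paper's setup.
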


We now prove $\EA \to \EA'$ by constructing $\W$ from $\varphi$: $ \W c x := \exists n. \varphi c n = \Some x$.
\begin{lemma}[][EA_to_EA'_prf]
  If $\EA$ then $\forall p : \nat \to \mathbb{P}.~\enumerable p \leftrightarrow \exists c.~\W c \equivwrt{\nat \to \Prop} p$.
\end{lemma}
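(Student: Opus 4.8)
The plan is to unfold both sides of the equivalence into their defining existential statements and to observe that $\W c$ is, by construction, precisely the predicate enumerated by the enumerator $\varphi c$. Range equivalence of enumerators ($\equiv_{\textsf{ran}}$) then corresponds exactly to extensional equivalence ($\equivwrt{\nat \to \Prop}$) of the predicates they enumerate, so the whole statement should reduce to rearranging definitions plus a single appeal to the surjectivity property supplied by $\EA$.

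First, for the direction from left to right, I would assume $\enumerable p$, which yields an enumerator $f : \nat \to \option \nat$ with $\forall x.\; p x \leftrightarrow \exists n.\; f n = \Some x$. Applying $\EA$ to this $f$ produces a code $c$ with $\varphi c \equiv_{\textsf{ran}} f$, that is $\forall x.\; (\exists n.\; \varphi c n = \Some x) \leftrightarrow (\exists n.\; f n = \Some x)$. Since $\W c x$ is by definition $\exists n.\; \varphi c n = \Some x$, chaining this equivalence with the enumerator specification of $f$ gives $\forall x.\; \W c x \leftrightarrow p x$, which is exactly $\W c \equivwrt{\nat \to \Prop} p$. Hence this $c$ witnesses the right-hand side.

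For the converse, I would assume a code $c$ with $\W c \equivwrt{\nat \to \Prop} p$, i.e.\ $\forall x.\; (\exists n.\; \varphi c n = \Some x) \leftrightarrow p x$. Then $\varphi c$ itself already serves as an enumerator for $p$: instantiating the enumerator in the definition of $\enumerable p$ with $f \definedas \varphi c$, the required equivalence $\forall x.\; p x \leftrightarrow \exists n.\; \varphi c n = \Some x$ is just the assumption read in the other direction. This establishes $\enumerable p$.

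Neither direction presents a genuine obstacle; the proof is essentially definitional. The only point requiring care is to keep the two roles of $\varphi$ separate — as an enumerator $\varphi c$ of option-values on the one hand, and as the induced predicate $\W c$ on the other — and to invoke the surjectivity of $\varphi$ w.r.t.\ range equivalence (from $\EA$) exactly once, in the forward direction, in order to turn an arbitrary enumerator $f$ of $p$ into a code $c$.
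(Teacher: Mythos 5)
Your proof is correct and takes essentially the same route as the paper: the paper writes the argument as a single chain of biconditionals (unfold $\enumerable$, apply the surjectivity of $\varphi$ from $\EA$, fold the definition of $\W$ and $\equivwrt{\nat\to\Prop}$), which is exactly your two directions compressed into one display. The only cosmetic difference is that you spell out the forward and backward directions separately, making explicit that $\EA$'s surjectivity is used only in the forward direction while the backward direction just reads $\varphi c$ itself as the enumerator.
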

\begin{proof}$\begin{aligned}[t] \enumerable p
    \leftrightarrow~ & \exists f : \nat \to \option \nat.\forall x.\; p x \leftrightarrow \exists n.\; f n = \Some x &&\text{(def. $\enumerable$)} \\
    \leftrightarrow~ & \exists c.\forall x.\; p x \leftrightarrow \exists n.\; \varphi c n = \Some x &&\text{(\EA) } \\
    \leftrightarrow~ & \exists c.\;\W c \equivwrt{\nat\to\Prop} p &&\text{(def.
      $\equivwrt{\nat\to\Prop})$} \qquad\qquad\qquad\qquad~~\popQED
  \end{aligned}$
\end{proof}

\begin{theorem}[][EA_to_EA']
  $\EA \to \EA'$
\end{theorem}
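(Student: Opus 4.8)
The plan is to observe that the analytic content of the theorem has already been isolated in the preceding lemma, so that what remains is purely the packaging of the two $\Sigma$-types involved. Assuming $\EA$, I first project out of the dependent pair the concrete enumerator $\varphi : \nat \to (\nat \to \option\nat)$ together with its defining property $\forall f.\;\exists c.\;\varphi c \equiv_{\textsf{ran}} f$. This projection is legitimate precisely because dependent pairs admit the large elimination principle into arbitrary types recalled in \Cref{sec:prelim}.

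With $\varphi$ in hand I define the witness for $\EA'$ explicitly as $\W c x := \exists n.\;\varphi c n = \Some x$, which is exactly the $\W$ appearing in the preceding lemma. It then suffices to inhabit the $\Sigma$-type $\EA'$ by pairing this $\W$ with a proof of $\forall p.\;\enumerable p \leftrightarrow \exists c.\;\W c \equivwrt{\nat \to \Prop} p$. That proof is supplied verbatim by the preceding lemma, whose chain of equivalences unfolds $\enumerable p$ by definition, replaces the quantified enumerator $f$ by a code $c$ using the range-surjectivity of $\varphi$ granted by $\EA$, and finally folds the result back into $\W c \equivwrt{\nat\to\Prop} p$. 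No further computation is required beyond this assembly.

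The one step that genuinely deserves care — and the only place where the argument could fail — is the very first projection. It is essential that $\EA$ is phrased as a $\Sigma$-type living in $\Type$ rather than as a propositional existential: large elimination of dependent pairs lets me turn the assumed $\varphi$ into a genuine function from which $\W$ is built, whereas a bare existential in $\Prop$ would give no way to extract such a $\varphi$, since Coq's type theory proves neither $\AUC$ nor unique choice. Granting this, the theorem is just the repackaging of the preceding lemma into the required dependent pair $(\W, \_)$.
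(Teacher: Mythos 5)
Your proposal is correct and matches the paper's approach exactly: the paper likewise defines $\W c x := \exists n.\;\varphi c n = \Some x$ from the $\varphi$ extracted out of the $\Sigma$-type $\EA$ and obtains the defining property of $\EA'$ directly from the preceding lemma (\Cref{coq:EA_to_EA'_prf}). Your added remark about why the $\Sigma$-formulation (rather than a propositional existential) is essential for projecting out $\varphi$ is accurate and consistent with the paper's discussion of large eliminations.
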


\subsection{Richman's Enumerability of Partial Functions $\EPF$}

Richman~\cite{richman1983church} introduces a different purely synthetic axiom as replacement for a universal machine and assumes that ``partial functions are countable'', which is equivalent to $\EA$.
\label{def:EPF}
\[ \EPFintro := \Sigma e : \nat \to (\nat \pfun \nat).\forall f : \nat \pfun \nat. \exists n.\;e n \equivwrt{\nat\pfun\nat} f \]

\begin{theorem}[][EPF_to_EA]
  $\EPF \to \EA$
\end{theorem}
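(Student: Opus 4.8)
The plan is to build the enumerator $\varphi$ demanded by $\EA$ directly from the enumerator $e : \nat \to (\nat \pfun \nat)$ supplied by $\EPF$. The guiding observation is that, as far as range equivalence $\equiv_{\textsf{ran}}$ can see, a function $f : \nat \to \option\nat$ and a partial function $g : \nat \pfun \nat$ carry the same information: each determines the set of its outputs, and that set is all that $\equiv_{\textsf{ran}}$ inspects. So I would translate partial functions into honest functions $\nat \to \option\nat$ (to feed $\EA$) and, conversely, encode any $f$ as a partial function (to invoke $\EPF$), making sure the output sets agree at every step.

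First I would use the step-indexed evaluator to turn each $e\,c$ into a function enumerating its outputs, setting
\[ \varphi\,c\,\langle x, n\rangle := \textsf{seval}\,((e\,c)\,x)\,n. \]
Since every natural number decodes uniquely via the pairing inverse, $\varphi\,c$ is total. Using the specification $x \hasvalue a \leftrightarrow \exists n.\,\textsf{seval}\,x\,n = \Some a$ together with surjectivity of $\langle\_,\_\rangle$, one gets that $y$ lies in the range of $\varphi\,c$ iff $\exists x.\,(e\,c)\,x \hasvalue y$; this is the only place $\textsf{seval}$ enters and it is a routine unfolding. Second, for the converse I would show every $f : \nat \to \option\nat$ is the output set of a partial function: define $g : \nat \pfun \nat$ by $g\,n := \ret\,y$ when $f\,n = \Some\,y$ and $g\,n := \textsf{undef}$ when $f\,n = \Noneintro$. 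From the specifications of $\ret$, $\textsf{undef}$, and functionality of $\hasvalue$ one obtains $g\,n \hasvalue y \leftrightarrow f\,n = \Some\,y$, so $\{y \mid \exists n.\,g\,n \hasvalue y\} = \{y \mid \exists n.\,f\,n = \Some\,y\}$.

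Finally I would combine the pieces. Given $f$, form $g$ as above and apply $\EPF$ to get $c$ with $e\,c \equivwrt{\nat\pfun\nat} g$; unfolding this equivalence gives $(e\,c)\,x \hasvalue y \leftrightarrow g\,x \hasvalue y$ for all $x,y$, so $e\,c$ and $g$ have the same output set. Chaining the three equalities — range of $\varphi\,c$ equals outputs of $e\,c$, which equal outputs of $g$, which equal the range of $f$ — yields $\varphi\,c \equiv_{\textsf{ran}} f$, and hence $\EA$ with witness $\varphi$. Because $\EPF$ already delivers the data $e$ and every construction above is explicit, the $\Sigma$-component $\varphi$ of $\EA$ is produced directly, and the pointwise part is a $\Prop$-goal proved by eliminating the $\Prop$-existential of $\EPF$.

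I expect no serious obstacle: the definition of $\varphi$ is essentially forced once $\textsf{seval}$ is read as the bridge from partial values to enumerations, and the encoding of $f$ as a partial function is immediate from $\ret$ and $\textsf{undef}$. The only mild care required is bookkeeping — verifying that the several incarnations of ``output set'' along the chain $f \leadsto g$, $g \leftrightarrow e\,c$, $e\,c \leadsto \varphi\,c$ genuinely coincide, each link being a one-line equivalence of $\Sigma^0_1$ predicates.
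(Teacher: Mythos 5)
Your proposal is correct and matches the paper's proof, which uses exactly the same construction $\varphi\,c\,\langle n, m\rangle := \textsf{seval}\,(e\,c\,n)\,m$ and leaves the verification implicit. The details you supply — encoding $f$ as a partial function via $\ret$/$\textsf{undef}$, invoking $\EPF$ on it, and chaining the range equivalences (noting that the $\Prop$-existential of $\EPF$ may be eliminated since the goal is propositional) — are precisely the routine steps the paper's one-line proof omits.
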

\begin{proof}
  Let $e$ be given.
  $\varphi c \langle n, m \rangle := \textsf{seval}\ (e c n)\ m$ is the wanted enumerator.
\end{proof}

\begin{theorem}[][EA_to_EPF]
  $\EA \to \EPF$
\end{theorem}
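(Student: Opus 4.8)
The plan is to reuse the enumerator $\varphi$ coming from $\EA$ to build the enumerator $e$ required for $\EPF$, and the crux is bridging the gap between the equivalence notions: $\varphi$ reproduces functions $\nat \to \option\nat$ only up to range equivalence $\equiv_{\textsf{ran}}$, whereas $\EPF$ demands the pointwise definedness equivalence $\equivwrt{\nat\pfun\nat}$. The key idea is to represent a partial function by an \emph{enumeration of its graph}, so that range information becomes sufficient to recover the function.

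First I would turn each partial function into a plain function enumerating its graph. Given $f : \nat \pfun \nat$, define $g_f : \nat \to \option\nat$ by setting $g_f\,\langle x, m\rangle := \Some\langle x, y\rangle$ when $\textsf{seval}\,(f x)\,m = \Some y$, and $g_f\,\langle x, m\rangle := \Noneintro$ otherwise. Using the specification $x \hasvalue a \leftrightarrow \exists n.\;\textsf{seval}\,x\,n = \Some a$, the range of $g_f$ is exactly the graph $\{\langle x, y\rangle : f x \hasvalue y\}$.

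Second, I would define $e$ directly from $\varphi$, independently of any particular $f$. For a code $c$ and input $x$, the value $e c\,x$ searches the outputs of $\varphi c$ for a pair whose first component is $x$ and returns its second component. Concretely, let $b_{c,x} : \nat \to \bool$ test whether $\varphi c\,n$ has the form $\Some z$ with $\pi_1 z = x$ (with projections $\pi_1,\pi_2$ obtained from the pairing inverse), and set $e c\,x := \mu\,b_{c,x} \bind (\lambda n.\;\ret(\pi_2 z_n))$, where $z_n$ is the payload of $\varphi c\,n$ (returning $\textsf{undef}$ when $\varphi c\,n = \Noneintro$). By the $\mu$- and $\bind$-specifications, $e c\,x \hasvalue y$ holds exactly when there is a least index $n$ with $b_{c,x}(n) = \btrue$ and $\pi_2 z_n = y$, i.e.\ when $\langle x,y\rangle$ occurs in the range of $\varphi c$ as the first pair with first component $x$. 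In particular, when no such pair exists, $\mu\,b_{c,x}$ diverges and $e c\,x$ is undefined.

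Third, given an arbitrary $f : \nat \pfun \nat$, I apply $\EA$ to $g_f$ to obtain $c$ with $\varphi c \equiv_{\textsf{ran}} g_f$, and claim $e c \equivwrt{\nat\pfun\nat} f$, i.e.\ $\forall x\, y.\; e c\,x \hasvalue y \leftrightarrow f x \hasvalue y$. The main obstacle — and the step I expect to require the most care — is that $e$'s first-match search could in principle return an arbitrary value; I resolve this by transporting \emph{functionality} across the range equivalence. Since the range of $\varphi c$ equals the range of $g_f$, which is the graph of $f$, and $\hasvalue$ is functional on $\partial\nat$ (so $f x \hasvalue y_1 \to f x \hasvalue y_2 \to y_1 = y_2$), the range of $\varphi c$ contains at most one pair with any given first component. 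Hence the choice of the least index is immaterial: if $f x \hasvalue y$ then $\langle x,y\rangle$ lies in the range of $\varphi c$, the search succeeds, and by functionality the value it returns must be $y$; conversely, any $y$ returned satisfies $\langle x,y\rangle \in \mathrm{range}(\varphi c) = \mathrm{graph}(f)$, so $f x \hasvalue y$. This establishes the pointwise equivalence and thus $\EPF$.
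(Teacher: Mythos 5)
Your proposal is correct and matches the paper's proof: the paper defines exactly the same enumerator $e\,c\,x$, using $\mu$ to search the outputs of $\varphi\,c$ for a pair with first component $x$ and $\bind$ to return its second component. The paper leaves the verification implicit, whereas you spell out the two ingredients it relies on — encoding $f$ as a graph enumerator $g_f$ via $\textsf{seval}$ before applying $\EA$, and using functionality of $\hasvalue$ to see that the first-match search necessarily returns the right value — both of which are sound and correspond to what the mechanisation does.
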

\begin{proof}
  Let $\varphi$ be given. Then
  \begin{align*}
    e c x := &\left( \mu \left(\lambda n.~\iteis{\varphi c n}{\Some \langle x', y' \rangle}{x =_{\bool} x'}{\bfalse}\right)\right) \bind \\ & \quad \lambda n.~\iteis{\varphi c n}{\Some \langle x', y' \rangle}{\ret y'}{\textsf{undef}}
  \end{align*}
  is the wanted enumerator.
\end{proof}

$\EPF$ implies the fully synthetic version of $\CT$:

\begin{lemma}[][EPF_to_CT]
  \label{lem:epf_induces_model}%
  $\EPF \to \Sigma T.\;\CT_T$
\end{lemma}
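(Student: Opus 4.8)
The plan is to let $\EPF$ hand us an enumerator $e : \nat \to (\nat \pfun \nat)$ and to read off the abstract computation function $T$ from the step-indexed evaluator $\textsf{seval}$ applied to $e$. The natural first guess is
\[ T\,c\,x\,n := \textsf{seval}\,(e\,c\,x)\,n, \]
so that a code $c$ is just an $\EPF$-index and ``running'' it amounts to evaluating the partial value $e\,c\,x$ for $n$ steps. With this definition the specification of $\textsf{seval}$ immediately gives $e\,c\,x \hasvalue a \leftrightarrow \exists n.\;T\,c\,x\,n = \Some a$, which is exactly the bridge I will need between $\hasvalue$ and $\sim$.

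The one property an abstract computation function must satisfy that is not handed to us for free is monotonicity, $T\,c\,x\,n_1 = \Some y \to \forall n_2 \geq n_1.\;T\,c\,x\,n_2 = \Some y$, since the figure only fixes the correctness spec of $\textsf{seval}$ and says nothing about its behaviour as the step count grows. If $\textsf{seval}$ is already monotonic the guess above works verbatim; otherwise I would monotonize by bounded search, redefining $T\,c\,x\,n$ to return $\textsf{seval}\,(e\,c\,x)\,m$ for the \emph{least} $m \leq n$ with $\textsf{seval}\,(e\,c\,x)\,m \neq \Noneintro$, and $\Noneintro$ when no such $m$ exists. Monotonicity is then immediate: the least defined step below $n_1$ is still the least one below any $n_2 \geq n_1$, so the returned value cannot change. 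Functionality of $\hasvalue$, together with $\textsf{seval}\,x\,m = \Some a \to x \hasvalue a$, guarantees that all defined outputs of $\textsf{seval}\,(e\,c\,x)\,\cdot$ agree, so the monotonized $T$ still satisfies $e\,c\,x \hasvalue a \leftrightarrow \exists n.\;T\,c\,x\,n = \Some a$.

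It then remains to verify $\CT_T$, i.e.\ that every total $f : \nat \to \nat$ has a code. I would totalize $f$ into the partial function $f' := \lambda x.\;\ret (f\,x) : \nat \pfun \nat$, which is everywhere defined with $f'\,x \hasvalue f\,x$. Applying $\EPF$ to $f'$ yields a $c$ with $e\,c \equivwrt{\nat\pfun\nat} f'$, that is $\forall x\,a.\;e\,c\,x \hasvalue a \leftrightarrow f'\,x \hasvalue a$. In particular $e\,c\,x \hasvalue f\,x$ for every $x$, and by the equivalence established above this unfolds to $\forall x.\exists n.\;T\,c\,x\,n = \Some (f\,x)$, i.e.\ $c \sim f$. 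Thus the constructed $T$ together with its monotonicity proof is the required element of $\Sigma T.\;\CT_T$.

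The only real obstacle is the monotonicity requirement; everything else is bookkeeping with the monad laws and the functionality of $\hasvalue$. If one cannot rely on $\textsf{seval}$ being monotonic, the bounded-search monotonization is the crux, and one must check both that taking the least defined step preserves the computed value and that it leaves the correctness equivalence intact — both of which follow from functionality of $\hasvalue$.
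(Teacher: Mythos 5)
Your proposal is correct and takes essentially the same route as the paper: the paper also defines $T\,c\,x\,n := \textsf{seval}\,(e\,c\,x)\,n$ and obtains $c \sim f$ by applying $\EPF$ to the totalisation $\lambda x.\;\ret(f\,x)$. The only difference is that the paper dismisses monotonicity as ``straightforward'' (relying on the concrete partiality monad of the mechanisation), whereas you correctly observe that the stated interface for $\textsf{seval}$ does not guarantee it and supply the bounded-search monotonisation, which makes your argument robust against any implementation of the interface.
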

\begin{proof}
  Assume $e : \nat \to (\nat \pfun \nat)$ surjective w.r.t.\ $\equivwrt{\nat\pfun\nat}$.
  Define $T c x n := \mathsf{seval}~(e c x)~n$.
  It is straightforward to prove that $T$ is monotonic and that $\CT$ holds.
\end{proof}

The axiom $\EPF$ can be weakened to cover just boolean functions:
\[\EPF_\bool := \Sigma e : \nat \to (\nat \pfun \bool).\forall f : \nat \pfun \bool. \exists n.\;e n \equivwrt{\nat\pfun\bool} f\]

\begin{lemma}[][EPF_to_EPF_bool]
  $\EPF \to \EPF_\bool $
\end{lemma}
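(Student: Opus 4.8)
The plan is to reduce partial Boolean functions to partial natural-number functions through the evident embedding of $\bool$ into $\nat$, let $\EPF$ supply a code for the lifted function, and then post-compose with a decoder.

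First I fix an encoder $\iota : \bool \to \nat$ with $\iota\,\bfalse := 0$ and $\iota\,\btrue := \succN 0$, together with a partial decoder $d : \nat \to \partial\bool$ defined by $d\,0 := \ret\bfalse$, $d\,(\succN 0) := \ret\btrue$ and $d\,(\succN(\succN n)) := \textsf{undef}$. Writing $e : \nat \to (\nat \pfun \nat)$ for the enumerator provided by $\EPF$, I define the candidate Boolean enumerator uniformly by
\[ e'\,c\,x := (e\,c\,x)\bind d. \]
Since $e'$ depends only on $e$, it is a legitimate first component of the $\Sigma$-type $\EPF_\bool$ once surjectivity is shown.

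Next, given an arbitrary $f : \nat \pfun \bool$, I lift it to $\hat f : \nat \pfun \nat$ with $\hat f\,x := (f\,x)\bind(\lambda b.\ \ret(\iota\,b))$ and apply $\EPF$ to obtain a code $c$ with $e\,c \equivwrt{\nat\pfun\nat} \hat f$. Unfolding the specification of $\bind$ and using functionality of $\hasvalue$ yields $\hat f\,x \hasvalue n \leftrightarrow \exists b.\ f\,x\hasvalue b \land n = \iota\,b$, so every value of $e\,c\,x$ — being equivalent to $\hat f\,x$ — lies in $\{0,\succN 0\}$, where $d$ is the exact inverse of $\iota$. Chaining the two $\bind$ specifications then gives $e'\,c\,x\hasvalue b \leftrightarrow \exists n.\ e\,c\,x\hasvalue n \land d\,n\hasvalue b \leftrightarrow f\,x\hasvalue b$, that is $e'\,c \equivwrt{\nat\pfun\bool} f$, which is exactly surjectivity of $e'$ with respect to $\equivwrt{\nat\pfun\bool}$.

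I do not expect a genuine obstacle here; the one point requiring care is that post-composition with $d$ neither drops nor invents values. This is guaranteed because the equivalence $e\,c \equiv \hat f$ pins the outputs of $e\,c$ to the image of $\iota$, on which $d$ inverts $\iota$ exactly, while any hypothetical output outside $\{0,\succN 0\}$ would leave $d$ undefined and therefore cannot occur. Everything else is routine unfolding of the monad specifications collected in \Cref{fig:partial}.
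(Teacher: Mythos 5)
Your proof is correct and follows the route the paper's mechanisation takes (the paper states this lemma without prose proof): embed $\bool$ into $\nat$, apply $\EPF$ to the lifted function, and post-compose the resulting code with a decoder via $\bind$. The key steps — unfolding the $\bind$ specification, using uniqueness of $\hasvalue$-values, and noting that the decoder inverts the encoder on its image — are exactly what is needed, and your observation that values outside $\{0,\succN 0\}$ cannot occur is handled correctly by the equivalence $e\,c \equivwrt{\nat\pfun\nat} \hat f$.
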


The reverse direction seems not to be provable.

\section{Halting Problems}
\label{sec:synth}
\label{def:K}

For this section we assume $\EA$, i.e.\ $\varphi : \nat \to (\nat \to \option\nat)$ s.t.\ $\forall f : \nat \to \option \nat. \exists c.\;\varphi c \equiv_{\textsf{ran}} f$.
Recall \Cref{coq:EA_to_EA'_prf} stating that
$\forall p : \nat \to \mathbb{P}.~\enumerable p \leftrightarrow \exists c.~\W c \equivwrt{\nat \to \Prop} p$.

We define $\Kintro_0 n \definedas \W n n$ and prove our first negative result:

\setCoqFilename{Axioms.halting}
\begin{lemma}[][K0_enumerable]
  $\neg \enumerable \compl{\K_0}$
\end{lemma}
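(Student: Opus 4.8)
The plan is the standard diagonal argument against enumerability of $\compl{\K_0}$. First I would assume, towards a contradiction, that $\enumerable \compl{\K_0}$ holds. By the characterisation of enumerable predicates through $\W$ recalled in \Cref{coq:EA_to_EA'_prf}, namely $\enumerable p \leftrightarrow \exists c.\;\W c \equivwrt{\nat \to \Prop} p$, this yields a code $c$ with $\W c \equivwrt{\nat \to \Prop} \compl{\K_0}$, that is $\forall x.\;\W c x \leftrightarrow \neg \K_0 x$.

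Next I would unfold the definition $\K_0 n := \W n n$, so that the equivalence reads $\forall x.\;\W c x \leftrightarrow \neg \W x x$. Instantiating $x := c$ collapses the diagonal and gives $\W c c \leftrightarrow \neg \W c c$.

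Finally I would derive $\bot$ from a proposition equivalent to its own negation: assuming $\W c c$ gives $\neg \W c c$ via the forward direction and hence a contradiction, so $\neg \W c c$ holds; feeding this back through the backward direction gives $\W c c$, contradicting $\neg \W c c$. This step is constructively valid and needs no classical assumptions.

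I do not expect a genuine obstacle here. The only points that require care are to invoke the correct left-to-right direction of the enumerability characterisation in order to extract the code $c$, and to observe that the terminal contradiction $P \leftrightarrow \neg P$ is available intuitionistically. Everything else is definitional unfolding together with the self-application $x := c$ that makes the diagonalisation go through.
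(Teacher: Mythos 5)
Your proof is correct and is essentially identical to the paper's: both extract a code $c$ with $\forall n.\;\W c n \leftrightarrow \neg \W n n$ from the characterisation of enumerability via $\W$ (\Cref{coq:EA_to_EA'_prf}), instantiate at $n := c$, and conclude from the intuitionistically contradictory $\W c c \leftrightarrow \neg \W c c$. No gaps; your explicit remark that $P \leftrightarrow \neg P$ yields $\bot$ constructively is exactly the step the paper leaves implicit.
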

\begin{proof}
  Assume $\enumerable(\lambda n. \neg \W n n)$.
  By specification of $\W$ there is $c$ s.t.\ $\forall n.\W c n \leftrightarrow \neg \W n n$.
  In particular, $\W c c \leftrightarrow \neg \W c c$, which is contradictory.
\end{proof}
\begin{corollary}[][K0_undec]
  $\neg \decidable \K_0$, $\neg \decidable \compl{\K_0}$, $\neg \decidable \W$ and $\neg \decidable \compl \W$.
\end{corollary}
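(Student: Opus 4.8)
The plan is to reduce each of the four claims to the preceding lemma $\neg\enumerable\compl{\K_0}$ (\Cref{coq:K0_enumerable}), using only the closure and implication facts already available.

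First I would dispatch $\neg\decidable\compl{\K_0}$. Assuming $\decidable\compl{\K_0}$, the chain ``decidable $\Rightarrow$ semi-decidable'' together with ``semi-decidable predicates on enumerable types are enumerable'' (\Cref{coq:decidable_semi_decidable}) yields $\enumerable\compl{\K_0}$, because $\nat$ is an enumerable type. This contradicts \Cref{coq:K0_enumerable}. The claim $\neg\decidable\K_0$ then follows immediately, since decidable predicates are closed under complementation (\Cref{coq:dec_compl}): a decider for $\K_0$ would give a decider for $\compl{\K_0}$, which we have just ruled out.

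Next I would handle $\W$ and $\compl\W$ by diagonalisation at the level of deciders. Since $\K_0 n \definedas \W n n$, the predicate $\K_0$ is the restriction of the uncurried relation $\W$ along the diagonal $n \mapsto (n,n)$. Hence if $d : \nat \times \nat \to \bool$ decides $\W$, then $\lambda n.\; d\,(n,n)$ decides $\K_0$, giving $\decidable\W \to \decidable\K_0$; by the previous paragraph this is impossible, so $\neg\decidable\W$. The same diagonal works for the complement, as $\compl\W\,(n,n) = \neg\,\W n n = \compl{\K_0}\,n$, so a decider for $\compl\W$ restricts to a decider for $\compl{\K_0}$, yielding $\decidable\compl\W \to \decidable\compl{\K_0}$ and hence $\neg\decidable\compl\W$.

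All four steps are routine manipulations of the synthetic notions; the only place demanding a little attention is the implication ``semi-decidable $\Rightarrow$ enumerable'', which is available precisely because the relevant type is $\nat$ and thus enumerable. I would therefore make sure to invoke it only for predicates over $\nat$ (namely $\compl{\K_0}$), and to pass to the diagonal first in the cases of $\W$ and $\compl\W$, where decidability transfers by plain function composition and no enumerability assumption on $\nat \times \nat$ is needed. I expect no genuine obstacle beyond this bookkeeping.
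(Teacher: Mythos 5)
Your proposal is correct and matches the route the paper intends for this corollary: everything reduces to $\neg\enumerable\compl{\K_0}$ via the closure lemmas (decidable $\Rightarrow$ complement decidable $\Rightarrow$ semi-decidable $\Rightarrow$ enumerable over $\nat$), with $\W$ and $\compl\W$ handled by restriction to the diagonal, exactly as licensed by the paper's convention of treating relations via implicit uncurrying. No gaps; the bookkeeping you flag (invoking semi-decidable $\Rightarrow$ enumerable only over the enumerable type $\nat$) is handled correctly.
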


Intuitively, $\K_0$ can be seen as analogous to the self-halting problem: $\K_0 n$ states that $n$ considered as an enumerator outputs itself in its range (rather than halting on itself).

It is also easy to show that $\W$ and thus $\K_0$ are enumerable:
\begin{lemma}[][enumerable_W]
  $\enumerable \W$
\end{lemma}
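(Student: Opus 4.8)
The plan is to enumerate $\W$ — viewed, via the implicit uncurrying, as a predicate on the discrete and enumerable type $\nat \times \nat$ — by dovetailing over codes $c$ and step indices $n$ and reporting the pair $(c,x)$ exactly when $\varphi c n$ yields the value $x$. Unfolding $\W c x$ to $\exists n.\; \varphi c n = \Some x$ shows that this search captures precisely the defining condition. Concretely, using the inverse pairing $\lambda\langle c,n\rangle.\,(\dots)$ from the preliminaries, I would define the enumerator
\[ f := \lambda\langle c, n\rangle.\; \iteis{\varphi c n}{\Some x}{\Some (c, x)}{\Noneintro} \;:\; \nat \to \option(\nat \times \nat). \]

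It then remains to verify $\W c x \leftrightarrow \exists k.\; f k = \Some (c,x)$ for all $c,x$. For the forward direction, $\W c x$ provides a step index $n$ with $\varphi c n = \Some x$; the computation rule for the inverse pairing yields $f\langle c, n\rangle = \Some (c, x)$, so $k := \langle c, n\rangle$ witnesses the right-hand side. For the backward direction, given $k$ with $f k = \Some(c,x)$, decompose $k = \langle c', n\rangle$; the definition of $f$ forces $\varphi c' n = \Some x'$ for some $x'$ with $(c', x') = (c, x)$, whence $\varphi c n = \Some x$ and thus $\W c x$.

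There is no genuine obstacle here. The only points requiring a little care are the implicit uncurrying that turns the binary relation $\W$ into a predicate over $\nat \times \nat$, and the appeal to the computation rule $(\lambda\langle c,n\rangle.\,g\,c\,n)\,\langle c,n\rangle = g\,c\,n$ so that both directions actually reduce. The case analysis on $\varphi c' n$ in the backward direction is the step that extracts the reported value, but it is entirely routine, so I expect the whole argument to amount to little more than exhibiting $f$ and discharging the two directions by unfolding definitions.
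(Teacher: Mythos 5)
Your proposal is correct and coincides with the paper's proof: the paper also defines the enumerator $f \langle n,m \rangle \definedas \iteis{\varphi n m}{\Some k}{\Some (n,k)}{\None}$, which is exactly your $f$ up to renaming of variables. The verification of the two directions, which you spell out, is exactly the routine unfolding the paper leaves implicit.
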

\begin{proof}
  Via $f \langle n,m \rangle \definedas \iteis{\varphi n m}{\Some k}{\Some (n,k)}{\None}$.
\end{proof}
\begin{corollary}[][K0_enum]
  $\enumerable \K_0$
\end{corollary}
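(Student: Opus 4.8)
The plan is to derive $\enumerable \K_0$ from the immediately preceding \Cref{coq:enumerable_W} by exploiting that $\K_0$ is nothing but the restriction of the binary relation $\W$ to its diagonal: by definition $\K_0\,n = \W\,n\,n$. Since the diagonal $\{(a,b) \mid a = b\}$ is decidable (via $=_{\bool}$) and carries a computable section $a \mapsto (a,a)$, an enumerator of $\W$ can be converted into one of $\K_0$ by a single post-processing step, with no further appeal to $\EA$ needed.

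Concretely, I would take the enumerator $f : \nat \to \option(\nat \times \nat)$ witnessing $\enumerable \W$ and define $g : \nat \to \option \nat$ as follows: on input $m$, inspect $f\,m$; if $f\,m = \Some(a,b)$ with $a =_{\bool} b$, return $\Some a$, and in every other case (namely $f\,m = \None$, or $a \neq b$) return $\None$. The remaining task is to verify the enumerator characterization, i.e.\ that $\exists m.\,g\,m = \Some n$ holds exactly when $\K_0\,n$ does. Unfolding the definition of $g$, the left-hand side is equivalent to the existence of some $m$ with $f\,m = \Some(n,n)$, which by the specification of $f$ as an enumerator of $\W$ is precisely $\W\,n\,n$, that is $\K_0\,n$.

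The main obstacle is essentially nonexistent; the content is bookkeeping. The only point to get right is that an enumerator is permitted to emit $\None$ on inputs that contribute nothing to the enumerated set (and to repeat values), so the filtering step is legitimate: every diagonal pair $(n,n)$ produced by $f$ is retained, and no spurious element is introduced. As an alternative route of the same length, one could bypass $\W$ entirely and enumerate $\K_0$ directly from $\varphi$, since $\K_0\,n \leftrightarrow \exists m.\,\varphi\,n\,m = \Some n$, defining $g\langle n,m\rangle$ to be $\Some n$ when $\varphi\,n\,m = \Some n$ and $\None$ otherwise; I would present whichever of the two phrasings reads more cleanly alongside \Cref{coq:enumerable_W}.
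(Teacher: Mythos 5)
Your proposal is correct and matches the paper's route: the corollary is stated immediately after \Cref{coq:enumerable_W} precisely so that $\enumerable \K_0$ follows by restricting the enumerator of $\W$ to its diagonal, which is exactly your filtering construction (and your verification that $\exists m.\,g\,m = \Some n$ unfolds to $\exists m.\,f\,m = \Some(n,n)$, i.e.\ $\W\,n\,n = \K_0\,n$, is the whole content). Your alternative direct enumerator from $\varphi$ is equally fine and is essentially the paper's own enumerator for $\W$ specialised to the diagonal.
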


Since Bauer~\cite{BauerSyntCT} bases his development on $\EA'$, he needs the axiom of countable choice to prove that $\W$ is enumerable, whereas $\EA$ allows an axiom-free proof of this fact.

Another well-known traditional result is that a problem is enumerable if and only if it many-one reduces to the halting problem $\K$, which can be proved without reference to $\EA$.
\begin{align*}
  &p \redm q := \exists f : X \to Y.\forall x.\;p x \leftrightarrow q (f x) &&\K (f : \nat \to \bool) := \exists n.~f n = \btrue
\end{align*}

\setCoqFilename{Synthetic.reductions}
\begin{fact}[][semi_decidable_red_K_iff]
  For all $p : X \to \Prop$, $p \redm \K \leftrightarrow \semidecidable p$.
\end{fact}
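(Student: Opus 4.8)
The plan is to unfold all three definitions and observe that the two sides of the biconditional become literally the same existential statement, so that a single function serves simultaneously as the many-one reduction and as the semi-decider. Recall that $p \redm \K$ unfolds to $\exists f : X \to (\nat \to \bool).\;\forall x.\;p x \leftrightarrow \K(f x)$, and by the definition of $\K$ this is $\exists f : X \to (\nat \to \bool).\;\forall x.\;p x \leftrightarrow \exists n.\;(f x)\,n = \btrue$. On the other side, $\semidecidable p$ unfolds to $\exists g : X \to \nat \to \bool.\;\forall x.\;p x \leftrightarrow \exists n.\;g\,x\,n = \btrue$. The key observation is that these two statements coincide once one identifies the reduction's target type with the semi-decider's type.

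For the direction from left to right, I would take the reduction function $f$ witnessing $p \redm \K$ and use it directly as the semi-decider, noting that $X \to (\nat \to \bool)$ and $X \to \nat \to \bool$ are the same type because the function-space arrow associates to the right, so $(f x)\,n$ and $f\,x\,n$ coincide; the required equivalence $\forall x.\;p x \leftrightarrow \exists n.\;f\,x\,n = \btrue$ is then exactly the hypothesis. The right-to-left direction is symmetric: the semi-decider $g$ from $\semidecidable p$ is already a function into $\nat \to \bool$, and $\K(g x)$ is by definition $\exists n.\;g\,x\,n = \btrue$, so $g$ witnesses $p \redm \K$ with the very same pointwise equivalence.

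I expect no genuine obstacle here: the content of the fact is entirely definitional, and the only thing one has to see is the currying identification between the type $X \to (\nat \to \bool)$ of reductions into $\K$ and the type $X \to \nat \to \bool$ of semi-deciders. Note also that this is why the statement needs no assumption on $X$ (unlike the enumerability transfers in the earlier lemma), and why it holds independently of $\EA$. In the mechanisation this reduces to $\beta$-reduction together with the right-associativity of the arrow, so the proof is a matter of unfolding the definitions and applying the hypothesis on each side.
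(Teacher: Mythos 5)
Your proof is correct and matches the paper's treatment: the paper states this as a \emph{fact} with no written proof precisely because, as you observe, after unfolding $\redm$, $\K$, and $\semidecidable$ the two sides are the same statement up to the currying identification $X \to (\nat \to \bool) = X \to \nat \to \bool$, so one function witnesses both. Your side remarks (no assumption on $X$, no reference to $\EA$) are also consistent with the paper, which emphasises that this equivalence is proved without $\EA$.
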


\setCoqFilename{Axioms.halting}
\begin{corollary}[][semi_decidable_K]
  $\semidecidable \K$
\end{corollary}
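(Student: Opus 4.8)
The plan is to read this off immediately from the preceding \Cref{coq:semi_decidable_red_K_iff}, which asserts $p \redm \K \leftrightarrow \semidecidable p$ for every predicate $p : X \to \Prop$. Instantiating this equivalence at $X := \nat \to \bool$ and at the predicate $p := \K$ itself reduces the goal $\semidecidable \K$ to exhibiting the many-one reduction $\K \redm \K$, i.e.\ to showing that $\K$ reduces to itself.

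For that reduction I would simply take the identity function as the witness: with $f := \lambda g.\,g$ of type $(\nat \to \bool) \to (\nat \to \bool)$, the reduction condition $\forall g.\;\K g \leftrightarrow \K (f g)$ collapses to $\forall g.\;\K g \leftrightarrow \K g$, which holds by reflexivity. Hence $\K \redm \K$, and the right-to-left direction of the Fact yields $\semidecidable \K$. Alternatively one can bypass the Fact entirely and unfold the definition of $\semidecidable$ directly: a semi-decider for $\K$ must have type $(\nat \to \bool) \to \nat \to \bool$, and the evident candidate is the evaluation map $\lambda g\, n.\; g\, n$. With this choice the required equivalence $\K g \leftrightarrow \exists n.\; g\, n = \btrue$ coincides with the very definition $\K g := \exists n.\; g\, n = \btrue$, so nothing further is left to check.

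There is essentially no obstacle here; both routes are one-liners. The only subtlety worth flagging is a bookkeeping one: the ``input'' type $X$ over which $\K$ is a predicate is itself the function space $\nat \to \bool$, so one must read the generic formulations of $\redm$ and $\semidecidable$ with $X := \nat \to \bool$ rather than $X := \nat$. Once the types are aligned, reflexivity of $\redm$ (respectively the identity reduction) carries the entire argument.
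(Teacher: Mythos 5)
Your proposal is correct and matches the paper's intended argument: the corollary follows immediately from the preceding Fact by instantiating $p := \K$ with the identity reduction $\K \redm \K$ (equivalently, by noting that $\lambda g\,n.\,g\,n$ is literally a semi-decider for $\K$ by definition). Both of your routes are valid one-liners, and the type-alignment remark ($X := \nat \to \bool$) is the only point of care, exactly as you flagged.
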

\begin{corollary}[][enumerable_red_K_iff]
  For all $p : \nat \to \Prop$, $p \redm \K \leftrightarrow \enumerable p$.
\end{corollary}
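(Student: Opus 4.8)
The plan is to reduce the statement to facts already established in the excerpt and then chain two equivalences. By \Cref{coq:semi_decidable_red_K_iff}, instantiated at $X = \nat$, we already have $p \redm \K \leftrightarrow \semidecidable p$. Hence it suffices to prove $\semidecidable p \leftrightarrow \enumerable p$ for predicates $p : \nat \to \Prop$ and to transport the reduction equivalence across it.

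The crucial observation is that $\nat$ is at once \emph{discrete} — its equality $=_\nat$ is decidable — and \emph{enumerable}, the latter witnessed by the enumerator $\lambda n.\;\Some n$ of the always-true predicate. This places $p$ in the scope of both relevant closure facts from \Cref{coq:decidable_semi_decidable}. The forward implication $\semidecidable p \to \enumerable p$ is then an instance of ``semi-decidable predicates on enumerable types are enumerable'' (\Cref{coq:semi_decidable_enumerable}), applied to the enumerable type $\nat$; the backward implication $\enumerable p \to \semidecidable p$ is an instance of ``enumerable predicates on discrete types are semi-decidable'' (\Cref{coq:enumerable_semi_decidable}), applied to the discrete type $\nat$.

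Combining $p \redm \K \leftrightarrow \semidecidable p$ with the resulting $\semidecidable p \leftrightarrow \enumerable p$ yields $p \redm \K \leftrightarrow \enumerable p$. I expect no genuine obstacle: the only side conditions to discharge are that $\nat$ is discrete and enumerable, both immediate, and the only conceptual point is that the restriction from an arbitrary $X$ (as in \Cref{coq:semi_decidable_red_K_iff}) to $\nat$ is precisely what makes semi-decidability and enumerability interchangeable, since over a type that is both discrete and enumerable the two notions coincide.
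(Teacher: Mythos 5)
Your proof is correct and matches the paper's intended argument: the corollary is stated there without explicit proof precisely because it follows by chaining \Cref{coq:semi_decidable_red_K_iff} with the equivalence of semi-decidability and enumerability over $\nat$, using \Cref{coq:semi_decidable_enumerable} and \Cref{coq:enumerable_semi_decidable} together with the discreteness and enumerability of $\nat$, exactly as you do.
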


Using the non-enumerability of $\compl{\K_0}$ we can now prove our first negative result by reduction:

\begin{corollary}[][K0_red_K]
  $\K_0 \redm \K$, and thus $\neg \enumerable \compl \K$, $\neg \decidable \K$, and $\neg \decidable \compl \K$.
\end{corollary}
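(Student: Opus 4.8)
The plan is to obtain $\K_0 \redm \K$ and then derive the three negative statements from it using three standard facts about many-one reductions: they are transitive, complementing both sides sends $p \redm q$ to $\compl p \redm \compl q$ through the same reduction function (negate the defining equivalence $p\,x \leftrightarrow q(f\,x)$), and both decidability and semi-decidability transport backwards along $\redm$ by precomposing the (semi-)decider with the reduction function, with no side condition. For the reduction itself I would combine $\enumerable \K_0$ (\Cref{coq:K0_enum}) with the enumerability-to-reducibility direction of \Cref{coq:enumerable_red_K_iff}.

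For $\neg\decidable \K$ I would note that a decider for $\K$, precomposed with the reduction function, decides $\K_0$, contradicting \Cref{coq:K0_undec}. For $\neg\decidable \compl\K$ I would deliberately avoid closure under complementation (\Cref{coq:dec_compl}): that route only produces decidability of $\compl{\compl\K} = \neg\neg\K$, and passing back to $\K$ would require Markov's principle. Instead I complement the reduction to $\compl{\K_0} \redm \compl\K$ and transport decidability backwards, contradicting $\neg\decidable \compl{\K_0}$, which is also part of \Cref{coq:K0_undec}.

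The claim $\neg\enumerable \compl\K$ is where I expect the real work to sit. The natural attempt pulls enumerability of $\compl\K$ back along $\compl{\K_0} \redm \compl\K$ to contradict \Cref{coq:K0_enumerable}, but here the transport argument breaks: enumerability transports backwards only when the \emph{domain of the target predicate} is discrete — turning an enumerator of the target into one of the source requires recognising which enumerated values are $f$-images — whereas that domain is $\nat \to \bool$, on which equality is undecidable. Semi-decidability, by contrast, transports freely, so the same reduction cleanly refutes $\semidecidable \compl\K$ (transport a hypothetical semi-decider to $\compl{\K_0}$, then use \Cref{coq:semi_decidable_enumerable} on the enumerable type $\nat$ together with \Cref{coq:K0_enumerable}). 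Over the non-discrete codomain $\nat \to \bool$, however, an enumerator of a predicate need not yield a semi-decider, so this only gives $\neg\semidecidable \compl\K$, which is genuinely weaker than what is claimed.

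Closing that gap — recovering non-enumerability itself rather than merely non-semi-decidability — is the step I expect to be the main obstacle. I would try to resolve it with an argument tailored to the specific reduction, exploiting that it sends each $n$ to a concrete semi-decision function built from decidable data on $\nat$, so that the comparisons needed to read off an enumerator of $\compl{\K_0}$ can be kept within $\option\nat$ rather than forced into equality tests on $\nat \to \bool$; failing a clean such lemma, I would fall back on a direct refutation of any purported enumerator of $\compl\K$.
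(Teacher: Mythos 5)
Your handling of the reduction and of the two decidability claims is exactly the paper's (implicit) argument: the paper prints no proof for this corollary, and the only route its surrounding lemmas support is the one you take -- obtain $\K_0 \redm \K$ from $\enumerable \K_0$ (\Cref{coq:K0_enum}) and the right-to-left direction of \Cref{coq:enumerable_red_K_iff}, then complement the reduction and pre-compose (co-)deciders to contradict $\neg\decidable \K_0$ and $\neg\decidable \compl{\K_0}$ from \Cref{coq:K0_undec}.

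Your analysis of the remaining claim is also correct, and it in fact exposes a problem with the statement itself rather than a gap you should keep trying to fill. As you say, backward transport of enumerability along $\compl{\K_0} \redm \compl\K$ needs discreteness of the target domain $\nat \to \bool$, and what the reduction genuinely yields is $\neg\semidecidable \compl\K$. But you should abandon both of your proposed repairs, because $\neg\enumerable\compl\K$ is not provable from $\EA$ at all, assuming -- as the paper itself does -- that $\EA$ is consistent with $\Fext$. Indeed, $\compl\K\, g$ is equivalent to $\forall n.\; g\, n = \bfalse$, so under $\Fext$ every member of $\compl\K$ is \emph{equal} to $\lambda \_.\, \bfalse$, and the constant function $\lambda n.\; \Some\, (\lambda \_.\, \bfalse)$ is an enumerator of $\compl\K$. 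Hence $\Fext$ proves $\enumerable\compl\K$, so any proof of $\neg\enumerable\compl\K$ from $\EA$ would show $\EA \to \neg\Fext$; this would contradict the paper's conjecture in \Cref{sec:ext} that $\CT$ (which implies $\EA$) is consistent with $\Fext$, and it would also make the choice hypothesis in \Cref{coq:AC_1_0_Fext_incompat} redundant. So no argument tailored to the specific reduction, and no diagonalisation against a purported enumerator, can exist: intensional equality on $\nat \to \bool$ cannot be refuted pointwise between two everywhere-false functions. The defensible conclusion is precisely the one you derived, $\K_0 \redm \K$ together with $\neg\semidecidable\compl\K$, $\neg\decidable\K$, and $\neg\decidable\compl\K$; the printed ``$\neg\enumerable\compl\K$'' should be read as a slip for the semi-decidability statement, a conflation made natural by the fact that for $\compl{\K_0}$, which lives on the discrete and enumerable type $\nat$, the two notions coincide.
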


We can also define $\K_\nat \definedas \lambda f : \nat \to \nat.\; \exists n.\; f n \neq 0$:
\begin{fact}[][K_nat_equiv]
  $\K \redm \K_\nat$, $\K_\nat \redm \K$, $\compl{\K_\nat} \equivwrt{(\nat\to\nat) \to \Prop} \lambda f.\; \forall n.\; f n = 0$, and thus\label{coq:forall_PCO_undec}
  $\neg \decidable (\lambda f.\; \forall n.\; f n = 0)$.
\end{fact}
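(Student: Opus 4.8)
The statement bundles four claims, and the plan is to establish the two many-one reductions first, then read off the extensional equivalence, and finally combine everything with the already-proved undecidability of $\compl\K$ to obtain the last consequence.

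For $\K \redm \K_\nat$ I would send $f : \nat \to \bool$ to the function $g_f : \nat \to \nat$ defined by $g_f\,n := 1$ when $f\,n = \btrue$ and $g_f\,n := 0$ otherwise. Since $f\,n$ is either $\btrue$ or $\bfalse$, we have $f\,n = \btrue \leftrightarrow g_f\,n \neq 0$ pointwise, hence $(\exists n.\,f\,n = \btrue) \leftrightarrow (\exists n.\,g_f\,n \neq 0)$, which is exactly $\K f \leftrightarrow \K_\nat(g_f)$. Symmetrically, for $\K_\nat \redm \K$ I would send $f : \nat \to \nat$ to $\lambda n.\;\neg_\bool (f\,n =_\bool 0) : \nat \to \bool$; this boolean equals $\btrue$ precisely when $f\,n \neq 0$, so the two existentials are again equivalent. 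Both reductions are pointwise translations between the boolean and numeric presentations of ``being nonzero'', and require no further input.

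Next I would prove the extensional equivalence $\compl{\K_\nat} \equivwrt{(\nat\to\nat)\to\Prop} \lambda f.\;\forall n.\;f\,n = 0$. Unfolding, $\compl{\K_\nat} f$ is $\neg\exists n.\;f\,n \neq 0$, and I claim this is equivalent to $\forall n.\;f\,n = 0$. The direction from $\forall n.\;f\,n = 0$ to $\neg\exists n.\;f\,n\neq 0$ is immediate. For the converse, given $\neg\exists n.\;f\,n\neq 0$ and an arbitrary $n$, I case-split on the decidable proposition $f\,n = 0$ (using that $\nat$ is discrete): the case $f\,n \neq 0$ contradicts the hypothesis via the witness $n$, leaving $f\,n = 0$. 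This appeal to decidability of $\nat$-equality to eliminate the double negation is the only step that needs care.

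Finally, for $\neg\decidable(\lambda f.\;\forall n.\;f\,n = 0)$ I would chain the pieces together. Decidability transfers backward along $\redm$ (precompose a decider with the reduction function) and is invariant under extensional equivalence of predicates (the same decider works). From $\K \redm \K_\nat$ via the map $f \mapsto g_f$ above I also obtain $\compl\K \redm \compl{\K_\nat}$ using the very same function, since $\neg\K f \leftrightarrow \neg\K_\nat(g_f)$. Now suppose $\lambda f.\;\forall n.\;f\,n = 0$ were decidable; by the extensional equivalence just proved, $\compl{\K_\nat}$ would be decidable, and then by backward transfer along $\compl\K \redm \compl{\K_\nat}$ so would $\compl\K$, contradicting $\neg\decidable\compl\K$ from \Cref{coq:K0_red_K}. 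Hence the ``always zero'' predicate is undecidable. The main obstacle, such as it is, lies only in correctly lifting the reduction to complements and in keeping track of which transfer direction decidability enjoys; everything else is routine pointwise manipulation.
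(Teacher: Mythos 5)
Your proof is correct and follows exactly the route the paper intends for this (unproved, routine) Fact: the pointwise boolean/numeric translations for the two reductions, double-negation elimination via decidability of equality on $\nat$ for the extensional equivalence, and backward transfer of decidability along the reduction (lifted to complements) to contradict $\neg\decidable\compl\K$ from \Cref{coq:K0_red_K}. Nothing is missing; the care you take with the direction of transfer and with lifting the reduction to complements is precisely where the bookkeeping lies.
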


\section{Kleene Trees}
\label{sec:kleene}
\setCoqFilename{Axioms.kleenetree}

In a lecture in 1953 Kleene~\cite{kleene1953recursive} gave an example how the axioms of Brouwer's intuitionism fail if all functions are considered computable by constructing an infinite decidable binary tree with no computable infinite path.
The existence of such a Kleene tree ($\KT$) is in contradiction to Brouwer's fan theorem, which we will discuss later.
We prove that $\EPF_\bool$ implies $\KT$.

For this purpose, we call a predicate $\tau : \List\bool \to \Prop$ a (decidable) \introterm{binary tree} if
\begin{enumerate}[(a)]
\item $\tau$ is decidable: $\exists f. \forall u. \tau u \leftrightarrow f u = \btrue$
\item $\tau$ is non-empty: $\exists u. \tau u$
\item $\tau$ is prefix-closed: If $\tau u_2$ and $u_1 \sqsubseteq u_2$ then $\tau u_1$ (where $u_1 \sqsubseteq u_2 := \exists u'.\;u_2 = u_1 \app u'$).
\end{enumerate}

We will just speak of trees instead of decidable binary trees in the following.

\begin{fact}[][tree_nil]
  For every tree $\tau$, $\tau []$ holds.
\end{fact}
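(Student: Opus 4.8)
The plan is to combine the non-emptiness condition (b) with the prefix-closure condition (c), exploiting the fact that the empty list is a prefix of every list. First I would invoke (b) to obtain some witness $u : \List\bool$ with $\tau u$. Then I would observe that $[] \sqsubseteq u$: unfolding the definition $u_1 \sqsubseteq u_2 := \exists u'.\;u_2 = u_1 \app u'$, it suffices to supply $u' := u$, and indeed $u = [] \app u$ since the empty list is a left unit for concatenation. Finally, I would instantiate the prefix-closure property (c) with $u_1 := []$ and $u_2 := u$ to conclude $\tau []$ directly from $\tau u$ and $[] \sqsubseteq u$.

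The argument uses only (b) and (c); the decidability clause (a) plays no role. There is really no obstacle here — the only point requiring any care is the equation $[] \app u = u$, which holds by the computation rule for $\app$ and hence needs no auxiliary lemma. The fact is worth stating separately because it identifies $[]$ as the guaranteed root of every tree, which is implicitly relied upon when constructing and reasoning about infinite paths in the Kleene tree developed in the remainder of the section.
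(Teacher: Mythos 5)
Your proof is correct and is exactly the intended argument: the paper omits the proof as immediate, and combining non-emptiness (b) with prefix-closure (c) via $u = [] \app u$ is the only natural route. The observation that decidability (a) is not needed is also accurate.
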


Furthermore, a decidable binary tree $\tau$ \dots
\begin{itemize}
\item \dots is \introterm{bounded} if $\exists n. \forall u. |u| \geq n \to \neg \tau u$
\item \dots is \introterm{well-founded} if $\forall f. \exists n. \neg \tau [f 0, \dots, f n]$
\item \dots has an \introterm{infinite path} if $\exists f. \forall n. \tau [f 0, \dots, f n]$
\end{itemize}

\begin{fact}[][not_bounded_infinite_iff]
  A tree is not bounded if and only if it is \introterm{infinite}, defined as $\forall n. \exists u.\; |u| \geq n \land \tau u$.
\end{fact}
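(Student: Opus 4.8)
The plan is to funnel both ``bounded'' and ``infinite'' through a single \emph{decidable} predicate on natural numbers, and then use decidability to discharge a double negation. First I would note that because $\tau$ is prefix-closed, the property $\exists u.\ |u| \geq n \land \tau u$ is equivalent to $\exists u.\ |u| = n \land \tau u$. Indeed, given a witness $u$ with $|u| \geq n$ and $\tau u$, its length-$n$ prefix $u'$ satisfies $u' \sqsubseteq u$, so $\tau u'$ by prefix-closedness, and $|u'| = n$; the converse direction is immediate since $|u'| = n$ gives $|u'| \geq n$. Writing $B(n) := \exists u.\ |u| = n \land \tau u$, this shows that ``infinite'', i.e.\ $\forall n.\ \exists u.\ |u| \geq n \land \tau u$, is exactly $\forall n.\ B(n)$.

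Next I would rephrase boundedness through the same predicate. On one hand, $\forall u.\ |u| \geq n \to \neg \tau u$ yields $\neg B(n)$ by instantiating at lists of length exactly $n$. On the other hand, $\neg B(n)$ yields $\forall u.\ |u| \geq n \to \neg \tau u$: if some $u$ with $|u| \geq n$ had $\tau u$, then its length-$n$ prefix would witness $B(n)$, again by prefix-closedness. Hence ``bounded'', i.e.\ $\exists n.\ \forall u.\ |u| \geq n \to \neg \tau u$, is equivalent to $\exists n.\ \neg B(n)$, and so ``not bounded'' is equivalent to $\forall n.\ \neg\neg B(n)$.

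The crucial step is that $B(n)$ is decidable. There are only finitely many (namely $2^n$) boolean lists of length $n$, and they can be enumerated explicitly; since $\tau$ is decidable by assumption (a), one decides $B(n)$ by testing each such list and taking the disjunction. With decidability in hand, $\neg\neg B(n)$ collapses to $B(n)$, and therefore ``not bounded'' $\leftrightarrow \forall n.\ \neg\neg B(n) \leftrightarrow \forall n.\ B(n) \leftrightarrow$ ``infinite''.

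I expect the only genuinely non-routine point to be this decidability argument, specifically constructing the finite list of all length-$n$ boolean vectors (e.g.\ by recursion on $n$, doubling the list at each step) and proving it complete. Everything else is constructive and follows directly from the definitions and prefix-closedness. It is worth recording that the direction ``infinite $\to$ not bounded'' needs no decidability at all: instantiating infiniteness at a purported bound $n$ immediately contradicts $\forall u.\ |u| \geq n \to \neg \tau u$. Decidability is required only for the converse, to eliminate the double negation.
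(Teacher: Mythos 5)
Your proof is correct and follows what is essentially the paper's own route (the paper delegates this fact to its Coq development, where the argument is exactly this constructive one): reduce ``depth $\geq n$'' to ``depth exactly $n$'' via prefix-closedness, observe that the resulting predicate is decidable by enumerating the finitely many boolean lists of length $n$ and using the decider for $\tau$, and then eliminate the double negation arising from ``not bounded'' by stability of decidable propositions. Your closing remark that only the direction ``not bounded $\to$ infinite'' needs decidability is also accurate.
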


\begin{fact}[][bounded_to_wellfounded]\label{lem:easyfacts_trees}
  Every bounded tree is well-founded and every tree with an infinite path is infinite.
\end{fact}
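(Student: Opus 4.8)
The plan is to prove both implications by simply unfolding the four definitions (bounded, well-founded, infinite path, infinite) and supplying the evident witnesses; none of the tree axioms (decidability, non-emptiness, prefix-closedness) are needed. The only point requiring a moment's care is the off-by-one in list lengths, since the list $[f 0, \dots, f n]$ has length $n+1$.

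For the first implication, suppose $\tau$ is bounded, so there is some $N$ with $\forall u.\; |u| \geq N \to \neg \tau u$. To establish well-foundedness I fix an arbitrary $f$ and must produce an $n$ with $\neg \tau [f 0, \dots, f n]$. I take $n \definedas N$: the list $[f 0, \dots, f N]$ has length $N+1 \geq N$, so the boundedness hypothesis applies directly and yields $\neg \tau [f 0, \dots, f N]$.

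For the second implication, suppose $\tau$ has an infinite path, witnessed by some $f$ with $\forall n.\; \tau [f 0, \dots, f n]$. To establish that $\tau$ is infinite I fix an arbitrary $n$ and must produce a $u$ with $|u| \geq n$ and $\tau u$. I take $u \definedas [f 0, \dots, f n]$, whose length is $n+1 \geq n$, and $\tau u$ holds immediately by instantiating the path property at $n$.

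There is no genuine obstacle here: the statement is labelled a \emph{Fact} precisely because each direction is settled by choosing the obvious witness. The only thing to keep track of is that instantiating the bound (respectively the path index) at $n$ rather than $n-1$ keeps the lists long enough, so no edge case at small lengths ever arises.
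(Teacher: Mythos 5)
Your proof is correct and is exactly the evident argument the paper relies on (the Fact is stated without a written proof, deferring to the Coq mechanisation): unfold the definitions and supply the obvious witnesses, with the off-by-one in list lengths handled as you do. Nothing more is needed.
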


Note that both implications are strict:
In our setting we cannot prove boundedness from well-foundedness nor obtain an infinite path from infiniteness, as can be~seen~from~a~Kleene~tree:%
\label{def:KT}%
\[\KTintro \definedas \textit{There exists an infinite, well-founded, decidable binary tree.}\]
We follow Bauer~\cite{bauer2006konig} to construct a Kleene tree.

\begin{lemma}[][diag]
  Given $\EPF_\bool$ one can construct $d : \nat \pfun \bool$ s.t.\ $\forall f : \nat \to \bool.\exists n b.\; d n \hasvalue b \land f n \neq b$.
\end{lemma}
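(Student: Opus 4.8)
The plan is to diagonalise against the enumeration $e$ of partial functions supplied by $\EPF_\bool$. Concretely, I would define $d$ by running the $n$-th partial function on its own index and negating the resulting boolean:
\[ d\,n := (e\,n\,n) \bind \left(\lambda b.\; \ret(\neg_\bool b)\right). \]
By the specifications of $\bind$ and $\ret$ in \Cref{fig:partial}, one has $d\,n \hasvalue b' \leftrightarrow (e\,n\,n) \hasvalue \neg_\bool b'$; in particular, whenever $e\,n\,n$ is defined with value $v$, then $d\,n$ is defined with value $\neg_\bool v$.

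Next I would fix a total $f : \nat \to \bool$ and make it amenable to the surjectivity part of $\EPF_\bool$ by regarding it as a (total) partial function $\hat f := \lambda x.\;\ret(f x) : \nat \pfun \bool$, which satisfies $\hat f\,x \hasvalue a \leftrightarrow a = f x$. By $\EPF_\bool$ there is a code $n$ with $e\,n \equivwrt{\nat\pfun\bool} \hat f$, i.e.\ $e\,n$ and $\hat f$ have the same defined values at every argument. Instantiating this equivalence at the self-referential argument $x = n$ yields $e\,n\,n \hasvalue f n$.

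Feeding this into the computation of $d$ gives $d\,n \hasvalue \neg_\bool(f n)$. Taking $b := \neg_\bool(f n)$ then discharges the goal: $d\,n \hasvalue b$ holds, and $f n \neq b$ holds because boolean negation has no fixpoint ($\neg_\bool c \neq c$ for every $c$).

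I expect this to be essentially routine; the only point needing care is that the diagonal value $d\,n$ is actually defined at the index $n$ obtained from $f$. This hinges on two facts: $f$ is total, so its encoding $\hat f$ is everywhere defined, and $\equivwrt{\nat\pfun\bool}$ transports definedness, so $e\,n$ is defined (with matching value) at $n$. Were one to drop totality and diagonalise against arbitrary partial $f$, definedness of $e\,n\,n$ could fail and the argument would break — which is precisely why the statement quantifies over total $f : \nat \to \bool$.
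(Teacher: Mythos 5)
Your proposal is correct and matches the paper exactly: the paper's proof consists precisely of the definition $d\,n := e\,n\,n \bind \lambda b.\;\ret(\neg_\bool b)$, leaving the verification implicit, and your spelled-out verification (lifting $f$ to $\lambda x.\;\ret(f x)$, invoking surjectivity of $e$, instantiating at the code $n$ itself, and using that boolean negation has no fixpoint) is exactly the intended argument.
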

\begin{proof}
  Define $d n := e n n \bind \lambda b.\; \ret (\neg_\bool b) $.
\end{proof}

We define $\tau_K u \definedas \forall n < \length u.\forall x.\;\textsf{seval}~ (d n) ~ \length u = \Some x \to u[n] = \Some x$.
Intuitively, $\tau_K$ contains all paths $u = [b_0, b_1, \dots, b_n]$ which might be prefixes of $d$ given $n$ as step index, i.e.\ where $n$ does not suffice to verify that $d$ is no prefix of $d$.
An infinite path through $\tau_K$ would be a totalisation of $d$.

\begin{theorem}[][T_K]
  $\EPF_\bool \to \KT$
\end{theorem}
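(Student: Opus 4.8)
The plan is to verify that the predicate $\tau_K$ just defined is simultaneously a decidable binary tree, infinite, and well-founded, which is precisely the content of $\KT$. I would split the argument into the three defining clauses of a tree, then treat infiniteness and well-foundedness separately.

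First I would check that $\tau_K$ is a tree. Decidability is immediate: $\tau_K u$ is a bounded quantification over positions $n < \length u$, and for each such $n$ the body merely compares the option values $\textsf{seval}~(d n)~\length u$ and $u[n]$, so the whole statement reduces to a boolean test. Non-emptiness holds because $\tau_K []$ is vacuously true (the quantifier over $n < 0$ is empty). For prefix-closure, suppose $\tau_K u_2$ and $u_1 \sqsubseteq u_2$, say $u_2 = u_1 \app u'$. Given $n < \length{u_1}$ with $\textsf{seval}~(d n)~\length{u_1} = \Some x$, monotonicity of $\textsf{seval}$ in the step index together with $\length{u_1} \leq \length{u_2}$ yields $\textsf{seval}~(d n)~\length{u_2} = \Some x$; then $\tau_K u_2$ forces $u_2[n] = \Some x$, and since $n < \length{u_1}$ we have $u_1[n] = u_2[n] = \Some x$, as required.

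Next, infiniteness. By the earlier fact that a tree is unbounded iff infinite, it suffices to produce, for every $n$, a member of $\tau_K$ of length $n$. I would build the list $u = [c_0, \dots, c_{n-1}]$ by setting $c_k \definedas x$ when $\textsf{seval}~(d k)~n = \Some x$ and $c_k \definedas \bfalse$ otherwise. Then $\length u = n$, and for every $k < \length u = n$ the definition guarantees that $\textsf{seval}~(d k)~\length u = \Some x$ entails $u[k] = \Some x$; hence $\tau_K u$ holds. The crux is well-foundedness, where the diagonal function is used. Given an arbitrary $f : \nat \to \bool$, feeding it to the preceding diagonalisation lemma yields $n_0$ and $b$ with $d n_0 \hasvalue b$ and $f n_0 \neq b$. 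From $d n_0 \hasvalue b$ the specification of $\textsf{seval}$ supplies a step index $m$ with $\textsf{seval}~(d n_0)~m = \Some b$. Choosing $N \definedas \max(n_0, m)$ and the path prefix $u = [f 0, \dots, f N]$ of length $N + 1$, monotonicity lifts the evaluation to $\textsf{seval}~(d n_0)~(N+1) = \Some b$; were $\tau_K u$ to hold, it would force $u[n_0] = \Some b$, i.e.\ $f n_0 = b$, contradicting $f n_0 \neq b$. Thus $\neg \tau_K [f 0, \dots, f N]$, witnessing well-foundedness.

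I expect the only real friction to be the bookkeeping around step indices, which relies on $\textsf{seval}$ being monotone in its index (so that a value observed at one length persists at any larger length) and on reconciling the step index $\length u$ hard-wired into $\tau_K$ with the length of the witness in the infiniteness proof and with the disagreement index handed over by the diagonalisation lemma in the well-foundedness proof. The genuine mathematical content — that no total boolean function totalises $d$, so $\tau_K$ admits no infinite path, while every finite level can still be populated consistently with $d$ — is supplied entirely by the diagonalisation lemma, and therefore ultimately by $\EPF_\bool$.
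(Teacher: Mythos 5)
Your proof is correct and follows essentially the same route as the paper's: verify that $\tau_K$ is a decidable tree, witness infiniteness by building, for each length, the list whose entries agree with whatever $\textsf{seval}~(d\,k)$ yields at that step index, and derive well-foundedness from the diagonalisation lemma (the sole place $\EPF_\bool$ enters). The only difference is presentational: you spell out prefix-closure and decidability (correctly invoking monotonicity of $\textsf{seval}$ in the step index, which the paper also uses implicitly but dismisses as ``immediate''), and you take $\max(n_0,m)$ where the paper takes $n+k$ for the bad prefix length.
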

\begin{proof}
  We show that $\tau_K$ is a Kleene tree.
  That $\tau_K$ is a decidable tree is immediate.
  To show that $\tau_K$ is infinite let $k$ be given. We define $f 0 \definedas []$ and $f (S n) \definedas f n \app [\iteis{D k n}{\Some x}{x}{\bfalse}]$.
  We have $\length{f n} = n$.
  In particular, $\length{f k} \geq k$ and $\tau_K (f k)$.

  For well-foundedness let $f : \nat \to \bool$ be given.
  There is $n$ s.t.\ $d n \hasvalue b$ and $f n \neq b$.
  Thus there is $k$ s.t.\ \mbox{$\textsf{seval} ~ (d n) ~ k = \Some b$}.
  Now $\neg \tau_K u$ for $u \definedas [f 0, \dots, f (n + k)]$. %
\end{proof}

\section{Extensionality Axioms}
\label{sec:ext}
\setCoqFilename{Axioms.principles}

Coq's type theory is intensional, i.e.\ $f \equivwrt{A \to B} g$ and $f = g$ do not coincide.
Extensionality properties can however be consistently assumed as axioms.
In this section we briefly discuss the relationship between $\CT$ and functional extensionality $\Fext$, propositional extensionality $\Pext$ and proof irrelevance $\PI$, defined as follows:
\label{def:Fext}\label{def:Pext}\label{def:PI}
\begin{align*}
  \Fextintro &\definedas \forall A B.\forall f g : A \to B.\; (\forall a.f a = g a) \to f = g \\
  \Pextintro &\definedas \forall P Q : \Prop.\; (P \leftrightarrow Q) \to P = Q \\
  \PIintro &\definedas \forall P : \Prop. \forall (x_1 x_2 : P).\; x_1 = x_2
\end{align*}

\begin{fact}[][Pext_to_PI]
  $\Pext \to \PI$
\end{fact}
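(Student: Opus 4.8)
The plan is to exploit $\Pext$ to collapse any inhabited proposition to $\top$ and then use that $\top$ has a unique proof. Fix $P : \Prop$ and two proofs $x_1, x_2 : P$. Since $x_1$ witnesses that $P$ is inhabited, we have $P \leftrightarrow \top$: the forward implication is the constant map into $\top$, and the backward implication maps the (canonical) proof of $\top$ to $x_1$. Applying $\Pext$ to this biimplication yields a propositional equality $e : P = \top$ in the type $\Prop$.

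Next I would transport along $e$. Let $\phi : P \to \top$ be the coercion obtained from $e$ via the eliminator of $\mathord{=}$ (i.e.\ $\texttt{match}\ e$ with motive $\lambda T.\; T$), and let $\psi : \top \to P$ be the coercion obtained from $\textsf{eq\_sym}\ e : \top = P$. A routine induction on the equality proof (destructing $e$, after which both coercions reduce to the identity) shows $\psi(\phi\,x) = x$ for every $x : P$. Thus $\psi$ is a left inverse of $\phi$, so $\phi$ is injective. No axioms beyond the primitive eliminator for $\mathord{=}$ are needed for this step.

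It then remains to observe that $\top$ has a unique inhabitant, $\forall t : \top.\; t = I$, where $I$ denotes the canonical proof. This is provable directly by dependent case analysis on $t$ with the $\Prop$-valued motive $\lambda t.\; t = I$; since this is an ordinary (small) elimination of the singleton proposition $\top$, it requires no further assumptions. Applying this uniqueness to $\phi\,x_1$ and $\phi\,x_2$ gives $\phi\,x_1 = I = \phi\,x_2$, and injectivity of $\phi$ yields $x_1 = x_2$, as required.

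I expect essentially no serious obstacle here; the only delicate point is the bookkeeping of the transport and its inverse, namely arranging $\phi$ and $\psi$ so that $\psi \circ \phi$ reduces to the identity once $e$ is destructed. Everything else — the biimplication $P \leftrightarrow \top$, the appeal to $\Pext$, and the uniqueness of proofs of $\top$ — is immediate. It is worth noting that the argument uses neither excluded middle nor any choice principle: the proof $x_1$ itself supplies precisely the inhabitation that lets us identify $P$ with $\top$.
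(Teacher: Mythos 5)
Your proof is correct and is essentially the proof behind the paper's fact (the paper states it without a textual proof; its Coq proof does exactly this): use the inhabitant $x_1$ to obtain $P \leftrightarrow \top$, apply $\Pext$ to get $P = \top$, and conclude because all proofs of $\top$ are equal to its canonical proof, which is provable by dependent case analysis with a $\Prop$-valued motive and hence needs no restricted elimination. Your explicit pair of transports $\phi, \psi$ together with the left-inverse/injectivity argument is slightly more bookkeeping than necessary -- one can instead revert $x_1, x_2$ and rewrite the goal along $P = \top$ before destructing -- but this is a cosmetic difference, not a mathematical one.
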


Swan and Uemura~\cite{swan2019church} prove that intensional predicative Martin-Löf type theory remains consistent if $\CT$, the axiom of univalence, and propositional truncation are added.
Since functional extensionality and propositional extensionality are a consequence of univalence, and propositions are semantically defined as exactly the irrelevant types, $\Fext$, $\Pext$, and $\PI$ hold in this extension of type theory.
It seems likely that the consistency result can then be adapted to Coq's type theory, yielding a consistency proof for $\CT$ with $\Fext$, $\Pext$, and $\PI$.

It is however crucial to formulate $\CT$ using $\exists$ instead of $\Sigma$.
The formulation as $\CT_\Sigma := \forall f.~\Sigma n.~n\sim f$ is inconsistent with functional extensionality $\Fext$, as already observed in~\cite{troelstra1988constructivism}.

\begin{lemma}[][CT_Sigma_wrong]
  $\CT_\Sigma \to \Fext \to \bot$
\end{lemma}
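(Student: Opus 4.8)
The plan is to exploit that the $\Sigma$-quantifier lets us \emph{project}: from a proof of $\CT_\Sigma$ we obtain an honest function $\Phi : (\nat \to \nat) \to \nat$ together with a proof that $\Phi f \sim f$ for every $f$. (With the $\exists$-formulation no such $\Phi$ is available, which is exactly why that version survives.) Since $T$ is monotonic, $\sim$ is extensional, so $\Phi f_1 = \Phi f_2$ already forces $f_1 \equivwrt{\nat\to\nat} f_2$; feeding this through $\Fext$ turns $\Phi$ into a genuine injection, i.e.\ $\Phi f_1 = \Phi f_2 \to f_1 = f_2$.

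First I would use this injectivity, together with $\Fext$, to decide self-halting of $T$. For a code $c$ let $f_c\,x := \succN 0$ when $T c c x = \Some y$ for some $y$, and $f_c\,x := 0$ otherwise; then $f_c$ is total and $f_c \equivwrt{\nat\to\nat}(\lambda x.0)$ holds exactly when $c$ never halts on itself. Now $\Fext$ gives $f_c \equivwrt{\nat\to\nat}(\lambda x.0) \leftrightarrow f_c = \lambda x.0$, while extensionality of $\sim$ gives $\Phi f_c = \Phi(\lambda x.0) \to f_c \equivwrt{\nat\to\nat} \lambda x.0$, so altogether $c$ halts on $c$ iff $\Phi f_c \neq \Phi(\lambda x.0)$ -- a decidable test, since $\Phi f_c$ and $\Phi(\lambda x.0)$ are natural numbers. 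Hence $D\,c := \neg_\bool(\Phi f_c =_\bool \Phi(\lambda x.0))$ decides whether $c$ halts on itself.

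Second I would diagonalise against $D$. Using $D$ I can turn the otherwise-partial ``value of $c$ on $c$'' into data: define a \emph{total} $u : \nat \to \nat$ by setting $u\,c := \succN y$ whenever $D\,c = \btrue$, where $y$ is extracted from $T c c n = \Some y$ via guarded minimisation $\mu_\nat$ (the premise of $\mu_\nat$ is supplied by the correctness of $D$), and $u\,c := 0$ otherwise; the value $y$ is independent of $n$ by monotonicity of $T$. Let $c_0 := \Phi u$. From $c_0 \sim u$ we obtain $T c_0 c_0 n = \Some(u\,c_0)$ for some $n$, hence $D\,c_0 = \btrue$ and, by uniqueness of values, the minimised $y$ equals $u\,c_0$; therefore $u\,c_0 = \succN(u\,c_0)$, which is absurd.

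The main obstacle is precisely the one that keeps the $\exists$-version consistent: decoding a code into the number it outputs is only a \emph{partial} operation, so the naive diagonal function is not total and cannot be handed to $\Phi$. The whole role of $\Fext$ here is to remove this obstacle -- it collapses the intensional codes enough to make self-halting decidable, and that decision is exactly the termination certificate $\mu_\nat$ needs to render the diagonal total.
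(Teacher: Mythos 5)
Your opening move is exactly the paper's key insight: from $G : \forall f.\;\Sigma c.\;c \sim f$ and $\Fext$, comparing the natural numbers $\pi_1(G f)$ and $\pi_1(G (\lambda x.0))$ decides extensional equality to the zero function, since extensionality of $\sim$ gives one direction and $\Fext$ gives the other. The paper stops essentially there: this test decides $\lambda f.\;\forall n.\;f n = 0$, which is undecidable under $\EA$ (\Cref{coq:forall_PCO_undec}), and $\CT_\Sigma$ implies $\EA$ -- contradiction. Crucially, both directions of that decider are constructive because the predicate being decided is the $\Pi^0_1$ one.

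Your proof instead tries to decide the $\Sigma^0_1$ predicate ``$c$ halts on $c$'' and then to extract witnesses, and there it has a genuine gap. The claimed equivalence ``$c$ halts on $c$ iff $\Phi f_c \neq \Phi(\lambda x.0)$'' holds constructively only from left to right. From $\Phi f_c \neq \Phi(\lambda x.0)$, the contrapositive of ($f_c \equivwrt{\nat\to\nat} \lambda x.0 \to f_c = \lambda x.0 \to \Phi f_c = \Phi(\lambda x.0)$) yields only $\neg \forall n.\;T c c n = \None$, i.e.\ $\neg\neg\exists n.\;T c c n \neq \None$; passing from this to $\exists n.\;T c c n \neq \None$ is an instance of Markov's principle $\MP$, which in this setting is an independent axiom (\Cref{sec:russ}), not a theorem. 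This is fatal for your second step: to define the total function $u$ you must supply, for every $c$ with $D\,c = \btrue$, the premise $\exists n.\dots$ of $\mu_\nat$, and only its double negation is available. Nor can you appeal to the fact that the overall goal is $\bot$: the missing existentials are needed inside the definition of the term $u : \nat \to \nat$ that is handed to $\Phi$, not inside a proof of a proposition, and trading $\forall c.\,(D\,c = \btrue \to \neg\neg\exists n.\dots)$ for $\neg\neg\forall c.\,(D\,c = \btrue \to \exists n.\dots)$ would be a double-negation shift, likewise unavailable. As written, your argument establishes only $\CT_\Sigma \to \Fext \to \MP \to \bot$. To repair it without $\MP$, do what the paper does: apply your code comparison to an arbitrary $f : \nat \to \nat$ rather than to $f_c$, so that it decides the $\Pi^0_1$ predicate $\lambda f.\;\forall n.\;f n = 0$ (both directions constructive), and derive the contradiction from the $\EA$-based undecidability of that predicate (\Cref{coq:forall_PCO_undec}), whose proof rests on the purely propositional diagonal $\W c c \leftrightarrow \neg \W c c$ and needs no witness extraction at all.
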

\begin{proof}
  Since $\CT_\Sigma$ implies $\EA$, it suffices to prove that $\lambda f. \forall n.\; f n = 0$ is decidable by \Cref{coq:forall_PCO_undec}.
  Assume $G : \forall f.~\Sigma c.~c\sim f$ and let $F f := \ite{\pi_1 (G f) = \pi_1 (G (\lambda x.0))}{\btrue}{\bfalse}$.
  
  If $F f = \btrue$, then $\pi_1(G f) = \pi_1 ( G (\lambda x.0))$ and by extensionality of $\sim$, $f n = (\lambda x.0) n = 0$.

  If $\forall n.~f n = 0$, then $f = \lambda x.\; 0$ by $\Fext$, thus $\pi_1(G f) = \pi_1 ( G (\lambda x.\; 0))$ and $F f = \btrue$.  
\end{proof}

\newcommand\sat[1]{(\exists n.~#1 n = \btrue)}
\section{Classical Logical Axioms}
\label{sec:class}

In this section we consider consequences of the law of excluded middle $\LEM$.\label{def:DNE}
Precisely, besides $\LEM$, we consider the weak law of excluded middle $\WLEM$, the Gödel-Dummett-Principle~$\DGP$\rlap,\footnote{We follow Diener~\cite{dienerConstructiveReverseMathematics2020} in using the abbreviation $\DGP$ instead of $\mathsf{GDP}$.} and the principle of independence of premises $\IP$, together with their respective restriction of propositions to the satisfiability of boolean functions, resulting in the limited principle of omniscience $\LPO$, the weak limited principle of omniscience $\WLPO$, and the lesser limited principle of omniscience $\LLPO$.

\label{def:LEM}\label{def:WLEM}\label{def:DGP}
\label{def:LPO}\label{def:WLPO}\label{def:LLPO}\label{def:IP}
\lapbox{-0.8cm}{\scalebox{0.88}{\parbox{\linewidth}{%
\begin{align*}
  \LEMintro &:= \forall P : \Prop.~P \lor \neg P & \LPO &:= \forall f : \nat \to \bool.\; \sat f \lor \neg \sat f \\
  \WLEMintro &:= \forall P : \Prop.~\neg\neg P \lor \neg P &\WLPO &:= \forall f : \nat \to \bool.~\neg\neg\sat f \lor \neg \sat f \\
  \DGPintro &:= \forall P Q : \Prop. (P \to Q) \lor (Q \to P) &\LLPO &:= \forall f g : \nat \to \bool.\; (\sat f \to \sat g)  \\
  &&&\hspace{2.2cm}~ \lor (\sat g \to \sat f) \\  %
  \IPintro &:= &&\hspace{-5.15cm}\forall P : \Prop.\forall q : \nat \to \Prop.\; (P \to \exists n. q n) \to \exists n.\; P \to q n 
\end{align*}
}}}
\vspace{-0.6\baselineskip}
\begin{fact}[][LEM_to_DGP]
  $\LEM \to \DGP$, $\DGP \to \WLEM$, $\LEM \to \IP$.
\end{fact}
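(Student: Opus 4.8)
The plan is to prove each of the three implications separately, since they are all elementary unfoldings of the definitions. Recall the definitions: $\LEM$ asserts $\forall P.\;P \lor \neg P$, the Gödel-Dummett principle $\DGP$ asserts $\forall P Q.\;(P \to Q) \lor (Q \to P)$, $\WLEM$ asserts $\forall P.\;\neg\neg P \lor \neg P$, and $\IP$ asserts $\forall P.\forall q.\;(P \to \exists n.\,q\,n) \to \exists n.\;P \to q\,n$.

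For $\LEM \to \DGP$, I would fix propositions $P$ and $Q$ and case-split on $P \lor \neg P$ using $\LEM$. If $P$ holds, then I can prove $Q \to P$ by the constant function discarding the hypothesis $Q$, so the right disjunct of $\DGP$ holds. If $\neg P$ holds, then $P \to Q$ holds by explosion: from $P$ and $\neg P$ we derive $\bot$, and then conclude $Q$ via the large elimination principle for $\bot$ mentioned in the preliminaries. Thus the left disjunct holds. For $\DGP \to \WLEM$, I would fix $P$ and instantiate $\DGP$ at the pair $(P, \neg P)$, yielding $(P \to \neg P) \lor (\neg P \to P)$. In the first case, $P \to \neg P$ gives $\neg P$ directly (if $P$ held, then $\neg P$ would hold, contradicting $P$; more simply, $\lambda p.\; (P \to \neg P)\,p\,p$ inhabits $\neg P$), so the right disjunct $\neg P$ of $\WLEM$ holds. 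In the second case, $\neg P \to P$ yields $\neg\neg P$: assuming $\neg P$, we apply it to $P$ obtained from $\neg P \to P$, giving $\bot$; hence the left disjunct of $\WLEM$ holds.

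For $\LEM \to \IP$, I would fix $P$, $q$, and a hypothesis $h : P \to \exists n.\,q\,n$, and case-split on $P \lor \neg P$. If $P$ holds, then $h\,P$ gives $\exists n.\,q\,n$; I extract a witness $n$ with $q\,n$ and produce $n$ together with the constant proof $\lambda\_.\;q\,n$ of $P \to q\,n$. Extracting the witness here uses only elimination of the existential into a $\Prop$, which is permitted, so no large elimination is needed. If $\neg P$ holds, then for any fixed $n$ the implication $P \to q\,n$ holds vacuously by explosion, so I produce, say, $0$ together with $\lambda p.\;\text{elim}_\bot (\neg P\,p)$.

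None of these steps presents a real obstacle; the arguments are routine propositional reasoning. The only point worth care is that all the ``vacuous implication'' steps rely on $\bot$-elimination, which the preliminaries explicitly guarantee holds into arbitrary types, and that the witness extraction in the $\IP$ proof stays within $\Prop$-elimination and therefore does not require the large elimination principle that Coq's type theory restricts.
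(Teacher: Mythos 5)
Your proof is correct, and it is the standard argument that the paper's (unstated, mechanised) proof of this Fact necessarily follows: case analysis on $\LEM$ for the first and third implications, and instantiating $\DGP$ at $(P,\neg P)$ for the second. Your side remarks are also accurate -- the vacuous-implication steps need only $\bot$-elimination, and the witness extraction in the $\IP$ case eliminates an existential into the propositional goal $\exists n.\,P \to q\,n$, so no large elimination is involved.
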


The converses are likely not provable:
Diener constructs a topological model where $\DGP$ holds but not $\LEM$, and one where $\WLEM$ holds but not $\DGP$~\cite[Proposition 8.5.3]{dienerConstructiveReverseMathematics2020}.
Pédrot and Tabareau~\cite{PedrotMP} construct a syntactic model where $\IP$ holds, but $\LEM$ does not.

\begin{fact}[][LPO_to_WLPO]
  $\LPO \to \WLPO$ and $\WLPO \to \LLPO$.
\end{fact}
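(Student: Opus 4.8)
The plan is to prove the two implications $\LPO \to \WLPO$ and $\WLPO \to \LLPO$ separately by unfolding the definitions of the omniscience principles and reasoning directly with the predicate $\sat{f} := \exists n.\; f n = \btrue$. Both implications are instances of the schematic pattern that a stronger decidability property about boolean-satisfiability entails a weaker one, so the proofs should be short and purely propositional once the definitions are in place.

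For $\LPO \to \WLPO$, I would fix $f : \nat \to \bool$ and apply $\LPO$ to get $\sat{f} \lor \neg \sat{f}$. In the left case I conclude $\neg\neg\sat{f}$ by the trivial map $\sat{f} \to \neg\neg\sat{f}$ (i.e.\ $\lambda h k.\; k h$), which lands in the left disjunct of $\WLPO$; in the right case I already have $\neg\sat{f}$, the right disjunct. So the case split from $\LPO$ maps directly onto the weaker disjunction required by $\WLPO$.

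For $\WLPO \to \LLPO$, I would fix $f, g : \nat \to \bool$ and apply $\WLPO$ to $f$, obtaining $\neg\neg\sat{f} \lor \neg\sat{f}$. If $\neg\sat{f}$, then $\sat{f} \to \sat{g}$ holds vacuously by explosion (using the large elimination for $\bot$, though here we only need it at the propositional level), giving the left disjunct of $\LLPO$. The other case, where we only have $\neg\neg\sat{f}$, is the delicate one: we cannot extract $\sat{f}$ constructively, so we cannot directly prove $\sat{f} \to \sat{g}$ or $\sat{g} \to \sat{f}$. The natural fix is to also apply $\WLPO$ to $g$ and do a further case split. If $\neg\sat{g}$ we get $\sat{g} \to \sat{f}$ vacuously (the right disjunct); the remaining subcase has $\neg\neg\sat{f}$ and $\neg\neg\sat{g}$ simultaneously.

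The main obstacle is precisely this last subcase, $\neg\neg\sat{f} \land \neg\neg\sat{g}$, where neither disjunct of $\LLPO$ is obviously derivable. I expect the clean resolution to exploit the boolean structure: rather than splitting on $f$ and $g$ independently, one should feed a single cleverly chosen boolean function into $\WLPO$, for instance an interleaving $h$ with $h\,\langle n,m\rangle := f\,n \mathbin{\&\&} \ldots$ or a function encoding ``$f$ witnesses at or before $g$ does,'' so that $\neg\neg\sat{h}$ together with a decidable comparison of first witnesses yields one of the two implications. Given that the paper states this as a \textsf{Fact} with a one-line Coq reference, I anticipate the intended argument keeps the two parts lightweight and that $\WLPO \to \LLPO$ is handled by reducing the disjunction to a single application of $\WLPO$ on a combined function rather than by the naive double case split, thereby sidestepping the stubborn $\neg\neg\sat{f} \land \neg\neg\sat{g}$ subcase entirely.
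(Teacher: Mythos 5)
Your first implication, $\LPO \to \WLPO$, is correct and complete: it is exactly the trivial argument via $P \to \neg\neg P$, and matches what is needed.

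The second implication has a genuine gap. You correctly diagnose that the naive double case split strands you in the subcase $\neg\neg\sat{f} \land \neg\neg\sat{g}$, and you correctly guess that the fix is a single application of $\WLPO$ to a combined ``race'' function --- but you never define that function or check that the two resulting cases close, and this is where the entire content of $\WLPO \to \LLPO$ lies. Concretely, take $h : \nat \to \bool$ with $h\,n = \btrue$ iff $f\,n = \btrue$ and $g\,k = \bfalse$ for all $k \le n$ (a decidable condition for each $n$, hence definable as a boolean). Apply $\WLPO$ to $h$. If $\neg\sat{h}$, then $\sat{f} \to \sat{g}$: from a witness $f\,n = \btrue$, the failure of $h\,n = \btrue$ together with decidability of the bounded quantifier yields some $k \le n$ with $g\,k = \btrue$. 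If $\neg\neg\sat{h}$, then $\sat{g} \to \sat{f}$: here you never extract anything from the double negation; instead, the antecedent gives an actual witness $g\,m = \btrue$, and $\exists k \le m.\; f\,k = \btrue$ is decidable by bounded search. If the search succeeds you are done; if it fails, then $\neg\sat{h}$ holds outright (any $n$ with $h\,n = \btrue$ has $g$ false up to $n$, so $n < m$ since $g\,m = \btrue$, and then $f\,n = \btrue$ contradicts the failed search), which contradicts $\neg\neg\sat{h}$, and you conclude by explosion. These two points --- that the real witness comes from the antecedent of the implication being proved, not from $\neg\neg\sat{h}$, and that the comparison against it is a decidable bounded search whose failure refutes $\sat{h}$ --- are precisely what your sketch is missing; ``decidable comparison of first witnesses'' cannot be taken literally, since under $\neg\neg\sat{h}$ neither $h$ nor $f$ has a first witness to compare. (Alternatively, one can factor the proof through the De Morgan form of $\LLPO$, item 2 of \Cref{lem:LLPO_equivs}: $\WLPO$ yields the De Morgan rule in two lines, but converting the De Morgan rule into the G\"odel--Dummett form stated here requires the same race construction, so the trick is unavoidable either way.)
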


The converses are likely not provable:
Both implications are strict over $\mathsf{IZF}$ with dependent choice \cite[Theorem 5.1]{hendtlassSEPARATINGFRAGMENTSWLEM2016}.%

$\LPO$ is $\Sigma^0_1\text{-}\LEM$ and $\WLPO$ is simultaneously $\Sigma^0_1\text{-}\WLEM$ and $\Pi^0_1\text{-}\LEM$, due to the following:

\begin{fact}[][forall_neg_exists]\label{fact:forall_neg}
  $(\forall n. f n = \bfalse) \leftrightarrow \neg (\exists n. f n = \btrue)$
\end{fact}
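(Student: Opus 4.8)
The plan is to prove both directions of the biconditional directly, the only ingredient being the discreteness of $\bool$, namely that $\btrue \neq \bfalse$ and that every boolean equals either $\btrue$ or $\bfalse$. This is an instance of the general constructive equivalence $(\forall n.\;\neg P n) \leftrightarrow \neg(\exists n.\; P n)$ for the predicate $P n := (f n = \btrue)$; what makes it hold unconditionally is that $\neg(f n = \btrue)$ and $f n = \bfalse$ coincide for booleans.

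For the forward direction I would assume $\forall n.\; f n = \bfalse$ and derive $\neg(\exists n.\; f n = \btrue)$. To this end I assume a witness $n$ with $f n = \btrue$; together with the hypothesis $f n = \bfalse$ this gives $\btrue = \bfalse$, which is absurd, so the existential is refuted.

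For the backward direction I would assume $\neg(\exists n.\; f n = \btrue)$, fix an arbitrary $n$, and show $f n = \bfalse$. Here I perform case analysis on the boolean value $f n$: in the case $f n = \btrue$ the number $n$ witnesses $\exists n.\; f n = \btrue$, contradicting the assumption; the remaining case is exactly $f n = \bfalse$. This case analysis is the single point where a property of $\bool$ enters, and it is available constructively precisely because $\bool$ is discrete.

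I do not expect any genuine obstacle: both directions are immediate once the case split on $f n$ is in place, and no classical or choice principle is needed. The proof is entirely routine and serves only to record the $\Sigma^0_1$/$\Pi^0_1$ duality used to justify the reading of $\LPO$ and $\WLPO$ in terms of $\LEM$ and $\WLEM$.
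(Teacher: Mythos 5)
Your proof is correct and is essentially the proof behind the paper's (unproved, immediate) fact: the forward direction by deriving $\btrue = \bfalse$ from a witness, and the backward direction by boolean case analysis on $f n$, using only the constructively valid De Morgan direction $(\forall n.\,\neg P n) \leftrightarrow \neg(\exists n.\, P n)$ together with discreteness of $\bool$. No gap; nothing further is needed.
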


Both can also be formulated for predicates:

\begin{fact}[][LPO_semidecidable_iff]\label{lem:LPO_equivs}
  The following equivalences hold:
  \begin{enumerate}
  \coqitem[LPO_semidecidable_iff] $\LPO \phantom{\textsf{W}}\leftrightarrow \forall X.\forall (p : X \to \Prop).~\semidecidable  p \to \forall x. \phantom{\neg}p x \lor \neg p x$
  \coqitem[WLPO_semidecidable_iff] $\WLPO \leftrightarrow \forall X.\forall (p : X \to \Prop).~\semidecidable  p \to \forall x. \neg p x \lor \neg \neg p x$
  \coqitem[WLPO_cosemidecidable_iff] $\WLPO \leftrightarrow \forall X.\forall (p : X \to \Prop).~\compl\semidecidable  p \to \forall x. \phantom{\neg}p x \lor \neg p x$
  \end{enumerate}
\end{fact}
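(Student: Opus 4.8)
The plan is to prove all three biconditionals by the same two-step recipe. In each forward direction I instantiate the function-level omniscience principle at the boolean sequence $\lambda n.\; f x n$ extracted from a (co-)semi-decider $f$ of $p$, and rewrite along the (co-)semi-decidability equivalence; in each backward direction I instantiate the predicate-level statement at a canonical (co-)semi-decidable predicate to recover the original function-level principle. Throughout, the arbitrary quantifier $\forall X.\forall (p : X \to \Prop)$ is what lets the backward direction pick a convenient $X$.

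For the first equivalence, assuming $\LPO$ and a semi-decider $f$ with $p x \leftrightarrow \exists n.\; f x n = \btrue$, I fix $x$ and apply $\LPO$ to the sequence $\lambda n.\; f x n$, obtaining $(\exists n.\; f x n = \btrue) \lor \neg(\exists n.\; f x n = \btrue)$; rewriting both disjuncts yields $p x \lor \neg p x$. Conversely, $\K : (\nat \to \bool) \to \Prop$ is semi-decidable (with decider $\lambda f n.\; f n$), so applying the right-hand side to $\K$ and an arbitrary $f$ gives $\K f \lor \neg\K f$, which unfolds to $\sat f \lor \neg\sat f$, i.e.\ $\LPO$.

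The second equivalence has the identical shape, replacing $P \lor \neg P$ by $\neg\neg P \lor \neg P$ everywhere: the forward direction applies $\WLPO$ to $\lambda n.\; f x n$ and rewrites, the backward direction instantiates at $\K$ again, and the only cosmetic adjustment is that the right-hand side is written $\neg p x \lor \neg\neg p x$, so the two disjuncts must be commuted.

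The third equivalence is where the single genuinely non-trivial step lives. Here $p$ is co-semi-decidable, $p x \leftrightarrow \forall n.\; f x n = \bfalse$, so I first invoke \Cref{fact:forall_neg} to rewrite this as $p x \leftrightarrow \neg(\exists n.\; f x n = \btrue)$. Applying $\WLPO$ to $\lambda n.\; f x n$ gives $\neg\neg(\exists n.\; f x n = \btrue) \lor \neg(\exists n.\; f x n = \btrue)$, whose right disjunct is exactly $p x$ and whose left disjunct is $\neg p x$, so the disjunction reads $p x \lor \neg p x$. For the converse I take $X = \nat \to \bool$ and $p f := \forall n.\; f n = \bfalse$, which is co-semi-decidable with decider $\lambda f n.\; f n$; the right-hand side then yields $p f \lor \neg p f$, and \Cref{fact:forall_neg} converts this into $\neg\sat f \lor \neg\neg\sat f$, i.e.\ $\WLPO$ after commuting. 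The main obstacle, such as it is, is keeping the nested negations straight in this last item and correctly using \Cref{fact:forall_neg} to bridge the $\Pi^0_1$ form $\forall n.\; f n = \bfalse$ and the negated-$\Sigma^0_1$ form $\neg\sat f$; everything else is routine rewriting.
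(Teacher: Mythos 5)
Your proposal is correct and follows essentially the same route as the paper's (mechanised) proof: forward directions by instantiating the omniscience principle at the sequence $\lambda n.\; f x n$ obtained from the (co-)semi-decider, backward directions by specialising the predicate-level statement to the canonical (co-)semi-decidable predicates on $\nat \to \bool$, with \Cref{fact:forall_neg} serving as the bridge between the $\Pi^0_1$ form and the negated $\Sigma^0_1$ form in the third item — exactly the role the paper assigns to that fact when it remarks that $\WLPO$ is simultaneously $\Sigma^0_1\text{-}\WLEM$ and $\Pi^0_1\text{-}\LEM$.
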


In our formulation, $\LLPO$ is the Gödel-Dummet rule for $\Sigma^0_1$ propositions.
It can also be formulated as $\Sigma^0_1$ or $\semidecidable$ De Morgan rule (\lipicsNumber{2}, \lipicsNumber{3} in the following Lemma), $\semidecidable$-\DGP (\lipicsNumber{4}), or as a double negation elimination principle on $\compl\semidecidable$ relations into booleans (\lipicsNumber{5}):

\begin{lemma}[][LLPO_to_DM_Sigma_0_1]\label{lem:LLPO_equivs}
  The following are equivalent:
  \begin{enumerate}
    \coqitem[LLPO_to_DM_Sigma_0_1] $\LLPO$
    \coqitem[DM_Sigma_0_1_to_LLPO_split] \small $\forall f g : \nat \to \bool.\; \neg((\exists n. f n = \btrue) \land (\exists n. g n = \btrue)) \to \neg (\exists n. f n = \btrue) \lor \neg (\exists n. g n = \btrue)$
    \coqitem[DM_Sigma_0_1_iff_DM_sdec] $\forall X.\forall (p~q : X \to \Prop).\;\semidecidable  p \to \semidecidable q \to \forall x.\; \neg( p x \land  q x) \to \neg p x \lor \neg q x$
    \coqitem[LLPO_iff_DGP_sdec] $\forall X.\forall (p : X \to \Prop).\;\semidecidable  p \to \forall x y.\; (p x \to p y) \lor (p y \to p x)$
    \coqitem[DM_Sigma_0_1_iff_totality] $\forall X.\forall (R : X \to \bool \to \Prop).\; \compl\semidecidable R \to \forall x.\; \neg\neg (\exists b.~R x b) \to \exists b.~R x b$
    \coqitem[LLPO_split_to_LLPO] \small $\forall f.\; (\forall n m. f n = \btrue \to f m = \btrue \to n = m) \to (\forall n. f (2 n) = \bfalse) \lor (\forall n. f (2 n + 1) = \bfalse)$
  \end{enumerate}
\end{lemma}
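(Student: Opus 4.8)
The plan is to prove the backbone cycle $(1) \Rightarrow (2) \Rightarrow (6) \Rightarrow (1)$ and to attach $(3)$, $(4)$, $(5)$ to it by the routine equivalences $(2) \Leftrightarrow (3)$, $(1) \Leftrightarrow (4)$, and $(2) \Leftrightarrow (5)$. The guiding observation is that $(1)$ and $(4)$ are ``implication-disjunction'' forms, whereas $(2)$, $(3)$, and $(5)$ are ``negated-conjunction'' (De~Morgan) forms; passing from the latter back to the former is the only step with genuine content, and it will be isolated in $(6) \Rightarrow (1)$. Throughout I use that a semi-decidable predicate unfolds pointwise to $\sat{\cdot}$ and that a co-semi-decidable one unfolds to $\neg\sat{\cdot}$ by \Cref{fact:forall_neg}.

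First I would dispatch the repackaging steps. For $(1) \Rightarrow (2)$, given $\neg(\sat f \land \sat g)$, I case-split on the disjunction of implications supplied by $(1)$: in the branch $\sat f \to \sat g$, assuming $\sat f$ would yield $\sat f \land \sat g$, contradicting the hypothesis, so $\neg\sat f$ holds, and symmetrically the other branch gives $\neg\sat g$. The equivalence $(2) \Leftrightarrow (3)$ is bookkeeping: for fixed $x$ a semi-decidable $p\,x, q\,x$ read as $\sat{\cdot}$, so $(3)$ follows from $(2)$, while $(2)$ is $(3)$ on the unit type. Likewise $(1) \Leftrightarrow (4)$ is obtained by packaging two boolean functions into one semi-decidable predicate on $\bool$ (sending $\btrue \mapsto \sat f$, $\bfalse \mapsto \sat g$) and back. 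Finally $(2) \Leftrightarrow (5)$ rests on the intuitionistic identities $\neg\neg(P \lor Q) \leftrightarrow \neg(\neg P \land \neg Q)$ and $\neg\neg\neg P \leftrightarrow \neg P$: a co-semi-decidable $R\,x\,b$ is exactly $\neg\sat{a_b}$ for suitable boolean functions $a_\btrue, a_\bfalse$, so the premise $\neg\neg(\exists b.~R\,x\,b)$ rewrites to $\neg(\sat{a_\btrue} \land \sat{a_\bfalse})$ and the conclusion $\exists b.~R\,x\,b$ rewrites to $\neg\sat{a_\btrue} \lor \neg\sat{a_\bfalse}$, matching $(2)$ precisely.

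For $(2) \Rightarrow (6)$ I would split the given function (which by hypothesis has at most one index where it is $\btrue$) into its even and odd subsequences $a := \lambda n.\;f(2n)$ and $b := \lambda n.\;f(2n+1)$. At-most-one-truth immediately gives $\neg(\sat a \land \sat b)$, since $f(2n) = \btrue$ and $f(2m+1) = \btrue$ would force $2n = 2m+1$. Thus $(2)$ applied to $a$ and $b$ yields $\neg\sat a \lor \neg\sat b$, which by \Cref{fact:forall_neg} is exactly $(\forall n.\;f(2n) = \bfalse) \lor (\forall n.\;f(2n+1) = \bfalse)$.

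The main obstacle is $(6) \Rightarrow (1)$, where one must manufacture a single mutually-exclusive function out of two arbitrary ones. Given $f, g$ I would ``race'' them: using guarded minimisation ($\mu_\nat$) to locate the first index $n_0$ with $f\,n_0 = \btrue$ or $g\,n_0 = \btrue$, I define $h$ so that $h(2n) = \btrue$ marks ``$n$ is the first hit and $f$ fires there'' with ties broken in favour of $f$, and $h(2n+1) = \btrue$ marks ``$n$ is the first hit, $g$ fires, and $f$ does not''. Then $h$ is decidable and has at most one true value, so $(6)$ applies. In the case $\forall n.\;h(2n) = \bfalse$, a hypothetical $\sat f$ would yield a first hit at which $f$ fires, contradicting the case assumption, so that hit must be a pure $g$-point, giving $\sat f \to \sat g$; the case $\forall n.\;h(2n+1) = \bfalse$ symmetrically gives $\sat g \to \sat f$, which is $(1)$. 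The delicate points are arranging the tie-breaking so that at most one slot of $h$ is ever $\btrue$ and keeping $h$ decidable, both of which are handled by the least-witness search underlying $\mu_\nat$.
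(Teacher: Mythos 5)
Your proposal is correct and matches the paper's own (Coq-mechanised) decomposition exactly: the cycle $(1)\Rightarrow(2)\Rightarrow(6)\Rightarrow(1)$ with $(3)$, $(4)$, $(5)$ attached via $(2)\leftrightarrow(3)$, $(1)\leftrightarrow(4)$, and $(2)\leftrightarrow(5)$, as reflected in the lemma names \texttt{LLPO\_to\_DM\_Sigma\_0\_1}, \texttt{DM\_Sigma\_0\_1\_to\_LLPO\_split}, \texttt{LLPO\_split\_to\_LLPO}, etc. One cosmetic remark: $\mu_\nat$ is not needed to \emph{define} $h$ (bounded search over indices below $n$ suffices and keeps $h$ a boolean function); it is needed only in the two case analyses of $(6)\Rightarrow(1)$ to extract the first hit from $\sat f$ or $\sat g$.
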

\label{def:PFP}

We define the \introterm{principle of finite possibility} as $\PFPintro \definedas \forall f.\exists g.\; (\forall n.\; f n = \bfalse) \leftrightarrow (\exists n.\; g n = \btrue)$. $\PFP$ unifies $\WLPO$ and $\LLPO$:
\begin{fact}[][WLPO_PFP_LLPO_iff]
  $\WLPO \leftrightarrow \LLPO \land \PFP$
\end{fact}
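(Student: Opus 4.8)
The plan is to prove the two directions separately. Since $\WLPO \to \LLPO$ is already available from \Cref{coq:LPO_to_WLPO}, the forward direction reduces to establishing $\WLPO \to \PFP$. The key observation underlying everything is that, by \Cref{fact:forall_neg}, the antecedent $\forall n.\, f n = \bfalse$ of $\PFP$ is equivalent to $\neg\sat f$, so $\PFP$ is really the statement that for every $f$ there is a $g$ with $\neg\sat f \leftrightarrow \sat g$, i.e.\ that every $\Pi^0_1$ negation can be re-expressed as a $\Sigma^0_1$ proposition.

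For $\WLPO \to \PFP$, I would fix $f$ and apply $\WLPO$ to it, obtaining $\neg\neg\sat f \lor \neg\sat f$. In the disjunct $\neg\sat f$ I take the constantly-true $g := \lambda n.\,\btrue$, so that $\sat g$ holds and both sides of $\neg\sat f \leftrightarrow \sat g$ are true. In the disjunct $\neg\neg\sat f$ I take the constantly-false $g := \lambda n.\,\bfalse$, so that $\neg\sat g$ holds while $\neg\sat f$ is refuted by $\neg\neg\sat f$; hence both sides of the equivalence are false. Either way the required witness $g$ exists, which is all $\PFP$ asks.

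For the converse $\LLPO \land \PFP \to \WLPO$, which is the interesting direction, I would fix $f$, use $\PFP$ (together with \Cref{fact:forall_neg}) to obtain a $g$ with $\neg\sat f \leftrightarrow \sat g$, and then apply $\LLPO$ to the \emph{pair} $f,g$, yielding $(\sat f \to \sat g) \lor (\sat g \to \sat f)$. In the first case, composing $\sat f \to \sat g$ with the backward half $\sat g \to \neg\sat f$ of the equivalence gives $\sat f \to \neg\sat f$, hence $\neg\sat f$, which is the right disjunct of $\WLPO$. In the second case, composing the forward half $\neg\sat f \to \sat g$ with $\sat g \to \sat f$ gives $\neg\sat f \to \sat f$, hence $\neg\neg\sat f$, which is the left disjunct. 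Thus the single $\LLPO$ instance at $(f,g)$ delivers exactly the dichotomy $\WLPO$ requires for $f$.

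The only genuinely non-routine step is choosing the right $\LLPO$ instance in the backward direction: the trick is that $\PFP$ converts the co-semi-decidable predicate $\neg\sat f$ into the semi-decidable $\sat g$, so that $\LLPO$ — which only compares two $\Sigma^0_1$ predicates — can be brought to bear on $f$ and its $\PFP$-witness $g$. Once that pairing is set up, both cases collapse to short propositional manipulations, so I expect no real obstacle beyond this design decision.
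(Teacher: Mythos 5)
Your proof is correct and takes essentially the same route as the paper's (mechanised) proof: the forward direction combines the already-stated $\WLPO \to \LLPO$ with a case analysis via $\WLPO$ to pick a constant witness $g$ for $\PFP$, and the backward direction instantiates $\LLPO$ at the pair $(f,g)$ supplied by $\PFP$. All the constructive steps you rely on (deriving $\neg P$ from $P \to \neg P$, and $\neg\neg P$ from $\neg P \to P$, plus the boolean fact $(\forall n.\, f n = \bfalse) \leftrightarrow \neg(\exists n.\, f n = \btrue)$) are valid, so there is no gap.
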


\sfcommand{KS}\label{def:KS}
A principle unifying the classical axioms with their counterparts for $\Sigma^0_1$ is \introterm{Kripke's schema} $\KSintro \definedas \forall P : \Prop.\exists f : \nat \to \bool.~P \leftrightarrow \exists n.\; f n = \btrue$:
\begin{fact}[][LEM_to_KS]
  $\LEM \to \KS$
\end{fact}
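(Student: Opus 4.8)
The plan is to fix an arbitrary proposition $P$ and decide it using $\LEM$, then exhibit a constant boolean function witnessing Kripke's schema in each case. First I would instantiate $\LEM$ at $P$ to obtain $P \lor \neg P$ and perform a case analysis. Since the goal $\exists f : \nat \to \bool.\; P \leftrightarrow \exists n.\; f n = \btrue$ lives in $\Prop$, this case analysis is an ordinary (non-large) elimination of a propositional disjunction into $\Prop$, so no large elimination principle is required.

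In the first case, where $P$ holds, I would take $f \definedas \lambda n.\; \btrue$. Then $\exists n.\; f n = \btrue$ holds trivially, witnessed by $0$, and since $P$ also holds both sides of the biconditional are true, establishing $P \leftrightarrow \exists n.\; f n = \btrue$. In the second case, where $\neg P$ holds, I would take $f \definedas \lambda n.\; \bfalse$. Then $\exists n.\; f n = \btrue$ is false, because $f n = \bfalse \neq \btrue$ for every $n$, and $P$ is false by assumption, so again both sides of the biconditional agree, yielding $P \leftrightarrow \exists n.\; f n = \btrue$.

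There is essentially no obstacle here: the entire content of the argument is the appeal to $\LEM$ to split on $P$, with the two constant functions $\lambda n.\; \btrue$ and $\lambda n.\; \bfalse$ serving as the required witnesses. The only point worth remarking, in the spirit of this paper's treatment of eliminations, is that the whole construction stays inside $\Prop$ and therefore does not rely on any large elimination, in contrast to the principle $\CT_\Sigma$ discussed earlier whose $\Sigma$-formulation is what causes trouble.
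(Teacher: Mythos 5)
Your proof is correct and is essentially the same argument as in the paper's Coq development: case analysis on $\LEM$ applied to $P$ (an ordinary elimination into $\Prop$, since $\KS$ uses $\exists$ rather than $\Sigma$), with the constant functions $\lambda n.\,\btrue$ and $\lambda n.\,\bfalse$ as witnesses in the respective cases. Nothing is missing.
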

\begin{fact}[][KS_LPO_to_LEM]
  Given $\KS$ we have $\LPO \to \LEM$, $\WLPO \to \WLEM$, and $\LLPO \to \DGP$.
\end{fact}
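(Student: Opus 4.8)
The plan is to treat all three implications uniformly, exploiting that $\KS$ converts every proposition into a $\Sigma^0_1$ proposition of the form $\sat f$, on which the limited principles of omniscience are literally their classical counterparts. The only work beyond invoking $\KS$ and the relevant principle is transporting the resulting disjunction back along the equivalence supplied by $\KS$, which is purely propositional.

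For $\LPO \to \LEM$, I would take an arbitrary $P : \Prop$ and apply $\KS$ to obtain $f : \nat \to \bool$ with $P \leftrightarrow \sat f$. Applying $\LPO$ to $f$ yields $\sat f \lor \neg \sat f$; transporting each disjunct along $P \leftrightarrow \sat f$ (using that the equivalence immediately gives $\neg P \leftrightarrow \neg \sat f$) produces $P \lor \neg P$, which is $\LEM$. For $\WLPO \to \WLEM$, the same $f$ serves: $\WLPO$ applied to $f$ gives $\neg\neg \sat f \lor \neg \sat f$, and since $P \leftrightarrow \sat f$ entails both $\neg P \leftrightarrow \neg \sat f$ and $\neg\neg P \leftrightarrow \neg\neg \sat f$ (logical equivalence is stable under negation), transporting yields $\neg\neg P \lor \neg P$, i.e.\ $\WLEM$.

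For $\LLPO \to \DGP$, I would take $P, Q : \Prop$ and apply $\KS$ twice to obtain $f, g$ with $P \leftrightarrow \sat f$ and $Q \leftrightarrow \sat g$. Then $\LLPO$ applied to $f, g$ gives $(\sat f \to \sat g) \lor (\sat g \to \sat f)$, and transporting both implications along the two equivalences delivers $(P \to Q) \lor (Q \to P)$, which is $\DGP$. There is no genuine obstacle here; the only point requiring a little care is the transport step, but in each case it reduces to elementary intuitionistic manipulation of the $\KS$-equivalences ($\neg P \leftrightarrow \neg \sat f$, $\neg\neg P \leftrightarrow \neg\neg \sat f$, and the implication-level equivalences), needing no further axiom. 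Each of the three implications is therefore an immediate consequence of $\KS$ together with the corresponding limited principle of omniscience.
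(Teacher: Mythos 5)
Your proposal is correct and matches the paper's (mechanised) proof: the paper states this fact without an in-text argument, and its Coq proof is exactly your transport argument — use $\KS$ to replace each proposition by an equivalent $\Sigma^0_1$ statement $\sat f$, apply the corresponding omniscience principle, and transport the disjunction back along the equivalence. The only implicit point worth noting is that eliminating the existential in $\KS$ is legitimate here because each goal ($P \lor \neg P$, $\neg\neg P \lor \neg P$, $(P \to Q) \lor (Q \to P)$) lives in $\Prop$, so Coq's restriction on large elimination never interferes.
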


$\KS$ could be strengthened to state that every predicate is semi-decidable (to which $\KS$ is equivalent using $\AC_{\nat,\nat\to\nat}$).
The strengthening would be incompatible with $\CT$.%

In general, the compatibility of classical logical axioms (without assuming choice principles) with $\CT$ seems open.
We conjecture that Coq's restriction preventing large elimination principles for non-sub-singleton propositions makes $\LEM$ and $\CT$ consistent in Coq.

\section{Axioms of Russian Constructivism}
\label{sec:russ}
\label{def:MP}

The Russian school of constructivism morally identifies functions with computable functions, sometimes assuming $\CT$ explicitly.
Another axiom considered valid is Markov's principle:
\[ \MP \definedas \forall f : \nat \to \bool.\;\neg\neg (\exists n.\;f n = \btrue) \to \exists n.\;f n = \btrue\]

Markov's principle is consistent with \CT \cite{swan2019church} and follows from $\LPO$:

\begin{fact}[][LPO_MP_WLPO_iff]
  $\LPO \leftrightarrow \WLPO \land \MP$
\end{fact}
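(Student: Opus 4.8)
The plan is to prove the biconditional $\LPO \leftrightarrow \WLPO \land \MP$ by establishing each direction separately, where the key observation is that $\LPO$ decides satisfiability of a boolean function outright (either $\sat f$ or $\neg\sat f$), whereas $\WLPO$ only decides the double-negated version, and $\MP$ is exactly the principle that converts the double negation back into the positive statement.

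First I would prove the forward direction $\LPO \to \WLPO \land \MP$. The conjunct $\LPO \to \WLPO$ is already available as \Cref{coq:LPO_to_WLPO}, so it suffices to show $\LPO \to \MP$. Given $f : \nat \to \bool$ with $\neg\neg\sat f$, apply $\LPO$ to $f$ to obtain $\sat f \lor \neg\sat f$. In the left case we are done immediately; in the right case $\neg\sat f$ contradicts the hypothesis $\neg\neg\sat f$, so this branch is closed by explosion. This yields $\sat f$, which is exactly the conclusion of $\MP$.

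For the reverse direction $\WLPO \land \MP \to \LPO$, assume both principles and let $f : \nat \to \bool$ be arbitrary. Apply $\WLPO$ to $f$ to obtain $\neg\neg\sat f \lor \neg\sat f$. In the right disjunct we already have $\neg\sat f$, giving the right case of $\LPO$. In the left disjunct we have $\neg\neg\sat f$, and applying $\MP$ directly produces $\sat f$, giving the left case of $\LPO$. This establishes $\sat f \lor \neg\sat f$ for every $f$, which is $\LPO$.

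I do not expect any serious obstacle here: both directions are short case analyses that hinge only on unfolding the definitions of $\LPO$, $\WLPO$, and $\MP$ in terms of $\sat f = (\exists n.\; f n = \btrue)$, together with the trivial fact that $\neg\sat f$ refutes $\neg\neg\sat f$. The only point worth care is matching the three definitions syntactically so that the disjuncts line up correctly, and reusing \Cref{coq:LPO_to_WLPO} rather than reproving $\LPO \to \WLPO$ from scratch. The whole argument is essentially the observation that $\MP$ is precisely the missing implication $\neg\neg\sat f \to \sat f$ needed to upgrade the weak omniscience decision into the full one.
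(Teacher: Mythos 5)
Your proof is correct and is essentially the same argument as the paper's (the paper states this as a fact whose proof is left to the mechanisation, and that proof is exactly these two case analyses): $\LPO \to \MP$ by deciding $\sat f$ and refuting the $\neg\sat f$ branch against $\neg\neg\sat f$, and $\WLPO \land \MP \to \LPO$ by using $\MP$ to upgrade the $\neg\neg\sat f$ branch. Reusing the earlier fact $\LPO \to \WLPO$ is also the right move, matching how the paper orders these results ($\LPO \to \MP$ then falls out as a corollary, as in the paper).
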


\begin{corollary}[][LPO_to_MP]
  $\LPO \to \MP$.
\end{corollary}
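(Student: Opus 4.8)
The plan is to read this corollary directly off the immediately preceding \Cref{coq:LPO_MP_WLPO_iff}, which states $\LPO \leftrightarrow \WLPO \land \MP$. Assuming $\LPO$, the left-to-right direction of that equivalence yields the conjunction $\WLPO \land \MP$, and projecting onto the second conjunct delivers $\MP$. So the whole argument reduces to a single application of an already-established biconditional followed by a projection, and no genuine new work remains.

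For completeness it is worth noting that a self-contained direct argument is equally short and avoids routing through $\WLPO$. First I would fix an arbitrary $f : \nat \to \bool$ together with a hypothesis $\neg\neg\sat f$. Then I would instantiate $\LPO$ at $f$ to obtain the disjunction $\sat f \lor \neg\sat f$ and case-split on it. In the left case $\sat f$ is exactly the goal. In the right case the hypothesis $\neg\sat f$ contradicts $\neg\neg\sat f$, and explosion (the large elimination principle for $\bot$ recalled in \Cref{sec:prelim}) yields the goal $\sat f$.

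The hard part is essentially nonexistent: the only substantive step, namely discharging the double negation to pass from $\neg\neg\sat f$ to $\sat f$, is precisely the classical content that $\LPO$ supplies at each $f$, and it was already packaged into the equivalence of \Cref{coq:LPO_MP_WLPO_iff}. I therefore expect the formal proof to be a one-liner that simply extracts the relevant component of that fact.
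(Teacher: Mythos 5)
Your proposal is correct and matches the paper exactly: the corollary is stated there as an immediate consequence of \Cref{coq:LPO_MP_WLPO_iff}, obtained by projecting the second conjunct of $\WLPO \land \MP$ from $\LPO$. Your alternative direct argument (case split on $\sat f \lor \neg\sat f$ and refute the right case against $\neg\neg\sat f$) is also valid, but it is just the inlined content of the same fact, so nothing substantively different is going on.
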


It seems likely that the converse is not provable:
There is a logic where $\MP$ holds, but not $\LPO$ \cite{herbelin2010intuitionistic}.
As observed by Herbelin~\cite{herbelin2010intuitionistic} and Pedrót and Tabareau~\cite{PedrotMP}, $\IP \land \MP$ yields $\LPO$:

\begin{lemma}[][MP_IP_LPO]
  $\MP \to \IP \to \LPO$
\end{lemma}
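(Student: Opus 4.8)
The plan is to fix an arbitrary $f : \nat \to \bool$ and decide $\sat f \lor \neg\sat f$ in one combined move: $\MP$ supplies exactly the premise that $\IP$ requires, and then a single boolean case analysis on an extracted witness finishes the argument. The key observation driving the whole proof is that the premise of $\IP$, specialised appropriately, coincides with Markov's principle.

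First I would instantiate $\IP$ with the proposition $P := \neg\neg\sat f$ and the predicate $q := \lambda n.\; f n = \btrue$. Its premise then reads $\neg\neg\sat f \to \exists n.\; f n = \btrue$, which is literally $\neg\neg\sat f \to \sat f$ --- and this is precisely $\MP$ applied to $f$. Discharging the premise this way, $\IP$ yields $\exists n.\; (\neg\neg\sat f \to f n = \btrue)$. Since the goal $\sat f \lor \neg\sat f$ is itself a proposition, I may destruct this existential to obtain a concrete $n_0$ together with a proof of $\neg\neg\sat f \to f\, n_0 = \btrue$.

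Next I would case-split on the boolean value $f\, n_0$, which is decidable. If $f\, n_0 = \btrue$, then $n_0$ witnesses $\sat f$ and I select the left disjunct. If $f\, n_0 = \bfalse$, then $f\, n_0 = \btrue$ is contradictory, so the implication $\neg\neg\sat f \to f\, n_0 = \btrue$ forces $\neg(\neg\neg\sat f)$; invoking the intuitionistic collapse $\neg\neg\neg A \to \neg A$ this delivers $\neg\sat f$, so I select the right disjunct.

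The only genuinely delicate point is recognising that $\MP$ is exactly what is needed to discharge the premise of $\IP$ under the instantiation $P := \neg\neg\sat f$. Once the witness $n_0$ is in hand, the remaining reasoning is a routine split on $f\, n_0$ combined with the standard triple-to-single negation simplification, so I expect no real obstacle beyond identifying that instantiation.
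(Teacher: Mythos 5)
Your proof is correct and follows essentially the same route as the paper: instantiate $\IP$ with $P := \neg\neg\sat f$ and $q := \lambda n.\, f n = \btrue$, discharge its premise by $\MP$, extract the witness $n_0$, and conclude by a case analysis on $f\, n_0$. The only cosmetic difference is that the paper packages the final step as the equivalence $f\, n_0 = \btrue \leftrightarrow \sat f$, while you argue the $f\, n_0 = \bfalse$ case via triple-negation elimination; both are immediate.
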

\begin{proof}
  Given $f : \nat \to \bool$ there is $n_0 : \nat$ s.t.\ $\forall k.\; f k = \btrue \to f n_0 = \btrue$ using $\MP$ and $\IP$:
  By $\MP$, $\neg\neg (\exists k.\; f k = \btrue) \to \exists n.\; f n = \btrue$ and by $\IP$, $\exists n. \neg\neg(\exists k. f k = \btrue) \to f n = \btrue$, which suffices.
  Now $f n_0 = \btrue \leftrightarrow \exists n.\; f n = \btrue$ and $\LPO$ follows.
\end{proof}

A nicer factorisation would be to prove $\IP \to \WLPO$, but the implication seems unlikely.

\begin{lemma}[][MP_to_MP_semidecidable] The following are equivalent:
  \begin{enumerate}
  \coqitem[MP_to_MP_semidecidable] $\MP$
  \coqitem[MP_semidecidable_to_Post_logical]  $\forall X.\forall p : X \to \Prop.\;\semidecidable p \to \forall x.\;\neg\neg p x \to p x$
  \coqitem[Post_logical_to_Post] $\forall X.\forall p : X \to \Prop.\;\semidecidable p \to \semidecidable \compl p \to \forall x.\;p x \lor \neg p x$
  \coqitem[Post_to_MP] $\forall X.\forall p : X \to \Prop.\;\semidecidable p \to \semidecidable \compl p \to \decidable p$
  \coqitem[MP_iff_sdec_weak_total] $\forall X.\forall (R : X \to \bool \to \Prop).\; \semidecidable R \to \forall x.\; \neg\neg (\exists b.~R x b) \to \exists b.~R x b$
  \end{enumerate}
\end{lemma}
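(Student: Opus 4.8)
The plan is to prove the five statements equivalent by establishing the cycle $(1)\to(2)\to(3)\to(4)\to(5)\to(1)$, so that each implication is proved from its immediate predecessor. For $(1)\to(2)$, suppose $p$ is semi-decidable with $p\,x \leftrightarrow \exists n.\,f\,x\,n = \btrue$ and assume $\neg\neg p\,x$; then $\neg\neg(\exists n.\,f\,x\,n = \btrue)$, and $\MP$ applied to $\lambda n.\,f\,x\,n$ yields $\exists n.\,f\,x\,n = \btrue$, i.e.\ $p\,x$. For $(2)\to(3)$, given $\semidecidable p$ and $\semidecidable{\compl p}$, the predicate $r := \lambda x.\,(p\,x \lor \compl p\,x)$ is semi-decidable, since semi-decidable predicates are closed under disjunction (\Cref{coq:dec_compl}). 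As $\neg\neg(p\,x \lor \neg p\,x)$ is a constructive tautology we have $\neg\neg r\,x$, so $(2)$ gives $r\,x$, which is exactly $p\,x \lor \neg p\,x$.

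The step $(3)\to(4)$ is the heart of the argument and the main obstacle, being the one point where genuine computational content must be produced: $(3)$ only delivers the $\Prop$-level disjunction $\forall x.\,p\,x \lor \neg p\,x$, whereas $\decidable p$ demands an actual Boolean-valued decider. I recover one using the guarded minimisation $\mu_\nat$ (\Cref{coq:mu_nat}). Fix semi-deciders $f$ for $p$ and $g$ for $\compl p$, and let $h\,x\,n$ be $\btrue$ exactly when $f\,x\,n = \btrue$ or $g\,x\,n = \btrue$. From $(3)$, for every $x$ either $p\,x$ holds (so some $f\,x\,n = \btrue$) or $\neg p\,x = \compl p\,x$ holds (so some $g\,x\,n = \btrue$); in both cases $\exists n.\,h\,x\,n = \btrue$, which is the premise needed to run $\mu_\nat$ on $h\,x$ and obtain a concrete least index $n_0$ with $h\,x\,n_0 = \btrue$. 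I claim $x \mapsto f\,x\,n_0$ decides $p$: if $p\,x$ then $\compl p\,x$ is false, so $g$ never fires and hence $f\,x\,n_0 = \btrue$; if $\neg p\,x$ then $f$ never fires, so $f\,x\,n_0 = \bfalse$. Thus $p\,x \leftrightarrow f\,x\,n_0 = \btrue$, giving $\decidable p$.

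The two remaining implications both exploit the observation that a proposition known to be $\neg\neg$-true has a false, hence trivially semi-decidable, complement. For $(4)\to(5)$, fix $x$ and assume $\neg\neg(\exists b.\,R\,x\,b)$; the constant predicate $q := \lambda(\_:\nat).\,\exists b.\,R\,x\,b$ is semi-decidable (a finite disjunction over $b:\bool$ of semi-decidable predicates), and under the hypothesis $\compl q$ is false and thus semi-decidable via the constantly-$\bfalse$ tester, so $(4)$ makes $q$ decidable; the resulting pointwise disjunction cannot take its negative branch, leaving $\exists b.\,R\,x\,b$. For $(5)\to(1)$, given $f : \nat \to \bool$ define $R\,\_\,b := (b = \btrue) \land \exists n.\,f\,n = \btrue$, which is semi-decidable by closure under conjunction (\Cref{coq:dec_compl}) and satisfies $(\exists b.\,R\,\_\,b) \leftrightarrow \exists n.\,f\,n = \btrue$; instantiating $(5)$ then reads off precisely $\MP$ for $f$, closing the cycle. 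Every step besides $(3)\to(4)$ is a direct application of the hypothesis or the "false complement" trick, so the minimisation operator $\mu_\nat$ is the only genuinely constructive ingredient.
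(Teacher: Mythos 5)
Your proof is correct, and it deviates from the paper's in two ways worth noting. First, for the key step $\lipicsNumber{3} \to \lipicsNumber{4}$ the paper does not invoke $\mu_\nat$ directly: it appeals to the semi-decidable choice principle of \Cref{lem:andrej_sdec} with the relation $R\,x\,b := (p\,x \land b = \btrue) \lor (\neg p\,x \land b = \bfalse)$. Your argument --- merge the two semi-deciders into $h$, run $\mu_\nat$ (\Cref{coq:mu_nat}) on $h\,x$ under the totality guarantee from $\lipicsNumber{3}$, and read off which semi-decider fired at the minimal index --- is exactly the computational content of that choice principle, inlined; what the paper's route buys is modularity (the choice lemma is reused elsewhere, e.g.\ in the analysis of $\WKL$), while yours is self-contained and makes visible that guarded minimisation is the only non-logical ingredient. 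Second, the paper closes the cycle with $\lipicsNumber{4} \to \lipicsNumber{1}$ directly (using the same ``false complement'' trick you use) and its written proof does not treat $\lipicsNumber{5}$ at all, deferring it to the Coq development with the remark that it is dual to \Cref{lem:LLPO_equivs}~($\lipicsNumber{5}$); your cycle $\lipicsNumber{4} \to \lipicsNumber{5} \to \lipicsNumber{1}$ is a genuine improvement on the prose in this respect, since it integrates the fifth characterisation into the equivalence with two short reductions (the constant predicate $\lambda\_.\,\exists b.\,R\,x\,b$ with false complement, and the relation $R\,\_\,b := (b = \btrue) \land \exists n.\,f\,n = \btrue$, where the conjunct $b = \btrue$ is in fact not even needed). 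Both routes are sound; yours trades a citation for an explicit construction and yields a tighter, five-node cycle.
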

\begin{proof}
  \begin{itemize}
  \item $\lipicsNumber{1} \to \lipicsNumber 2$ is immediate.
  \item $\lipicsNumber 2 \to \lipicsNumber 3$:
    Since $\semidecidable$ is closed under disjunctions
    and since $\neg\neg(p x \lor \neg p x)$ is a tautology.
  \item $\lipicsNumber 3 \to \lipicsNumber 4$ is immediate by \Cref{lem:andrej_sdec} with $R x b := (p x \land b = \btrue) \lor (\neg p x \land b = \bfalse)$.
  \item $\lipicsNumber 4 \to \lipicsNumber 1$:
    Let $\neg\neg (\exists n. f n = \btrue)$.
    Let $p (x : \nat) \definedas  \exists n. f n = \btrue$.
    Now $p$ is semi-decided by $\lambda x.f$, $\compl p$ by $\lambda x n. \bfalse$, and $p 0 \lor \neg p 0$ by $\lipicsNumber{4}$. One case is easy, the other contradictory.
    \hfill\popQED
  \end{itemize}
  
\end{proof}

Note that \lipicsNumber{4} is often called ``Post's theorem''.
$\lipicsNumber{1} \leftrightarrow \lipicsNumber{3} \leftrightarrow \lipicsNumber{4}$ is already discussed in~\cite{forster2019synthetic}.
\lipicsNumber{5} is dual to \Cref{lem:LLPO_equivs} (\lipicsNumber{5}).
Replacing $\semidecidable p$ with $\compl\semidecidable p$ in \lipicsNumber{2} does however not result in an equivalent of $\LLPO$, but turns \lipicsNumber{2} into an assumption-free fact.
While in general $\semidecidable{\compl p} \leftrightarrow \compl\semidecidable p$ does not hold it seems possible that they can be exchanged in $\lipicsNumber{3}$ and $\lipicsNumber{4}$, but we are not aware of a proof.

\section{Choice Axioms}
\label{sec:choice}

We consider the axioms of choice $\AC$, unique choice $\AUC$, dependent choice $\ADC$, and countable choice $\ACC$. $\AC_{\nat,\nat}$ and $\AC_{\nat\to\nat,\nat}$ are often called $\AC_{0,0}$ and $\AC_{1,0}$ in the literature.%
{\small
  \label{def:AC}\label{def:AUC}\label{def:ADC}\label{def:ACC}
\begin{align*}
  \ACintro_{X,Y} &\definedas \forall R : X \to Y \to \Prop. (\forall x. \exists y.R x y) \to \exists f : X \to Y. \forall x.\; R x (f x) \\
  \AUCintro_{X,Y} &\definedas \forall R : X \to Y \to \Prop. (\forall x. \exists ! y.R x y) \to \exists f : X \to Y. \forall x.\; R x (f x) \\
  \ADCintro_X &\definedas \forall R : X \to X \to \Prop.(\forall x. \exists x'.R x x') \to \forall x_0. \exists f : \nat \to X. f 0 = x_0 \land \forall n.\; R (f n) (f (n + 1)))
\end{align*}%
\vspace{-1.8\baselineskip}
\begin{align*}
  &\ACintro := \forall X Y : \Type.\; \AC_{X,Y} && \AUCintro := \forall X Y.\; \AUC_{X,Y} & \ADCintro := \forall X : \Type.\; \ADC_{X} &&\ACCintro := \forall X : \Type.\; \AC_{\nat,X}  
\end{align*}}
\vspace{-1\baselineskip}
\begin{fact}[][AC_rel_to_ADC]\footnotesize{\footnotesize
  $\AC_{X,X} \to \ADC_X$, $\AC_{X,Y} \to \AUC_{X,Y}$, $\ADC \to \ACC$, $\ACC \to \AC_{\nat,\nat}$}, and {\footnotesize~$\AC_{\nat\to\nat,\nat} \to \AC_{\nat,\nat}$}.
\end{fact}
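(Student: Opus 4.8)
The plan is to establish the five implications in turn; three are essentially immediate and two require a short construction.

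For $\AC_{X,Y} \to \AUC_{X,Y}$ I would simply note that unique existence implies existence: from $\forall x.\,\exists! y.\,R x y$ one obtains $\forall x.\,\exists y.\,R x y$ and applies $\AC_{X,Y}$ to $R$ directly. For $\ACC \to \AC_{\nat,\nat}$ I would instantiate $\ACC = \forall X.\,\AC_{\nat,X}$ at $X := \nat$. For $\AC_{X,X} \to \ADC_X$, given $R : X \to X \to \Prop$ with $\forall x.\,\exists x'.\,R x x'$, I would apply $\AC_{X,X}$ to obtain a single function $g : X \to X$ with $\forall x.\,R x (g x)$, and then for a given starting point $x_0$ define $f$ by recursion as $f 0 := x_0$ and $f(\succN n) := g (f n)$; the required $R (f n)(f(\succN n))$ then holds by construction.

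The first genuinely non-trivial case is $\ADC \to \ACC$. Fixing $X$ and $R : \nat \to X \to \Prop$ with $\forall n.\,\exists x.\,R n x$, I would run dependent choice on the type $\nat \times X$ with the relation $R'(n,x)(m,y) := m = \succN n \land R m y$; totality of $R'$ follows from the hypothesis at $\succN n$. Instantiating the hypothesis at $0$ yields a witness $a : X$ with $R 0 a$ (extractable without large elimination, since the goal $\exists f.\,\forall n.\,R n (f n)$ is a proposition), and $\ADC_{\nat \times X}$ applied with starting point $(0,a)$ gives $g : \nat \to \nat \times X$ with $g 0 = (0,a)$ and $\forall n.\,R'(g n)(g(\succN n))$. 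A side induction shows $\pi_1 (g n) = n$ for all $n$, so that $f n := \pi_2 (g n)$ is the desired choice function: $R 0 (f 0)$ holds by the witness $a$, and $R (\succN n)(f(\succN n))$ by the step relation. The main obstacle is exactly this bookkeeping of the index component, ensuring the $\ADC$-sequence visits every $n$ in order so that its $X$-components assemble into a function satisfying $R$ pointwise.

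Finally, for $\AC_{\nat\to\nat,\nat} \to \AC_{\nat,\nat}$ I would exploit that $\nat$ is a retract of $\nat \to \nat$, via the constant-function embedding $\iota n := \lambda \_.\,n$ and the retraction $\rho g := g 0$, so that $\rho (\iota n) = n$. Given $R : \nat \to \nat \to \Prop$ with $\forall n.\,\exists m.\,R n m$, I would lift it to $R'\,g\,m := R (\rho g)\,m$; totality of $R'$ is immediate from the hypothesis at $\rho g$. Applying $\AC_{\nat\to\nat,\nat}$ to $R'$ produces $F : (\nat\to\nat)\to\nat$ with $\forall g.\,R (g 0)(F g)$, and $f n := F (\iota n)$ then satisfies $\forall n.\,R n (f n)$ because $\iota n\,0 = n$. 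This transfer works precisely because the index type $\nat$ embeds into $\nat\to\nat$ with a retraction.
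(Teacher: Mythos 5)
Your proposal is correct and follows essentially the same route as the paper's (mechanised) proof: the three easy implications are handled by direct instantiation, $\ADC \to \ACC$ goes through dependent choice on $\nat \times X$ with an index-incrementing relation and a side induction showing $\pi_1(g\,n) = n$, and $\AC_{\nat\to\nat,\nat} \to \AC_{\nat,\nat}$ uses the constant-function retraction of $\nat$ into $\nat\to\nat$. Your parenthetical remarks about where existentials may be eliminated (always into the propositional goal, never requiring large elimination) are exactly the points that make these arguments go through in Coq's type theory.
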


The following well-known fact is due to Diaconescu~\cite{diaconescu1975axiom} and Myhill and Goodman~\cite{doi:10.1002/malq.19780242514}:

\begin{fact}[][Diaconescu]
  $\AC \to \Fext \to \Pext \to \LEM$
\end{fact}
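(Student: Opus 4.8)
The plan is to run the classical Diaconescu argument, using $\Fext$ and $\Pext$ precisely to turn the logical equivalence of two predicates into a genuine equality that the choice function is forced to respect. Fix an arbitrary $P : \Prop$; the goal is $P \lor \neg P$. I would define two predicates $A, B : \bool \to \Prop$ by $A\,b := (b = \btrue) \lor P$ and $B\,b := (b = \bfalse) \lor P$. Both are inhabited, witnessed by the left disjuncts of $A\,\btrue$ and $B\,\bfalse$, so we obtain $h_A : \exists b.\; A\,b$ and $h_B : \exists b.\; B\,b$. Since $\AC$ requires a globally total relation, I would not range over all of $\bool \to \Prop$ (the empty predicate has no witness) but over the type $T := \Sigma S : \bool \to \Prop.\; \exists b.\; S\,b$ of \emph{inhabited} boolean predicates, with relation $R\,(S, \_)\,b := S\,b$. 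This $R$ is total exactly because membership in $T$ carries an inhabitation proof. Applying $\AC$ instantiated at $\AC_{T,\bool}$ yields $f : T \to \bool$ with $(\pi_1 t)(f\,t)$ for every $t$; setting $a := (A, h_A)$ and $b := (B, h_B)$ this gives $A\,(f\,a)$ and $B\,(f\,b)$.

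The crucial step is to establish $P \to a = b$, so that $P$ forces $f\,a = f\,b$ by congruence of $f$ on equal arguments. Assuming $P$, every $A\,b$ and $B\,b$ holds through its right disjunct, so $A\,b \leftrightarrow B\,b$ for all $b$, hence $A\,b = B\,b$ by $\Pext$; functional extensionality $\Fext$ then lifts this pointwise equality to $A = B$. To turn $A = B$ into $a = b$ in the sigma type $T$ I also need the two inhabitation witnesses to agree after transport, which follows from proof irrelevance $\PI$ — available here since $\Pext \to \PI$ (\Cref{coq:Pext_to_PI}). Thus $P \to f\,a = f\,b$, and contrapositively $f\,a \neq f\,b \to \neg P$.

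It then remains to decide the boolean equation $f\,a = f\,b$, which is legitimate because equality on $\bool$ is decidable. If $f\,a \neq f\,b$, the previous paragraph yields $\neg P$. If $f\,a = f\,b$, I would unfold $A\,(f\,a)$ as $(f\,a = \btrue) \lor P$ and $B\,(f\,b)$ as $(f\,b = \bfalse) \lor P$ and do a case analysis: either a right disjunct already gives $P$, or both left disjuncts hold, in which case $f\,a = \btrue$ and $f\,b = \bfalse$ contradict $f\,a = f\,b$ and $P$ follows by explosion. Either way $P \lor \neg P$ holds, establishing $\LEM$. The main obstacle, and the whole reason $\Fext$ and $\Pext$ are assumed, is the middle step: without extensionality the choice function could separate the logically equivalent predicates $A$ and $B$, so the genuine work lies in making $P$ collapse $a$ and $b$ into a single element of $T$ — which in turn forces the minor technical use of $\PI$ to handle the sigma-type witnesses.
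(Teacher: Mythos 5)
Your proposal is correct and is essentially the paper's proof (the paper states this as a known fact, citing Diaconescu and Myhill--Goodman, and its mechanisation runs the same classical argument): the two predicates $(b = \btrue) \lor P$ and $(b = \bfalse) \lor P$, choice over the type of inhabited predicates $\bool \to \Prop$ to satisfy the totality hypothesis of $\AC$, collapse of the two predicates under $P$ via $\Pext$ and $\Fext$, and the use of $\PI$ (from $\Pext$) to identify the dependent pairs. The final case analysis on the two membership disjuncts and the decidable boolean equality is exactly the standard conclusion, so there is nothing to add.
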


Given that $\AC_{\nat\to\nat,\nat}$ turns $\CT$ into $\CT_\Sigma$, and that $\EA \leftrightarrow \Sigma T. \CT_T$ we have:%

\begin{fact}[][AC_1_0_Fext_incompat]
  $\AC_{\nat\to\nat,\nat} \to \Fext \to \EA \to \bot$
\end{fact}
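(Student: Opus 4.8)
The plan is to follow the hint preceding the statement: use $\EA$ to extract an abstract model $T$ together with $\CT_T$, then use $\AC_{\nat\to\nat,\nat}$ to turn the existential in $\CT_T$ into a choice function, and finally replay the argument of \Cref{coq:CT_Sigma_wrong} with this choice function to decide the everywhere-zero predicate, contradicting \Cref{coq:forall_PCO_undec}.

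First I would fix the model. From $\EA$ we obtain $\Sigma T.\;\CT_T$ by composing \Cref{coq:EA_to_EPF} and \Cref{coq:EPF_to_CT}; fix such a $T$ together with $\CT_T : \forall f : \nat \to \nat.\exists n.\;n \sim f$. I then apply $\AC_{\nat\to\nat,\nat}$ to the relation $R\,f\,n := n \sim f$ with $X := \nat \to \nat$ and $Y := \nat$: its premise $\forall f.\exists n.\;R\,f\,n$ is exactly $\CT_T$, so I obtain $\exists G : (\nat \to \nat) \to \nat.\;\forall f.\;(G f) \sim f$. Since the goal is $\bot$, which is a proposition, I may eliminate this propositional existential and fix such a $G$.

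Next I would reuse the decision procedure from \Cref{coq:CT_Sigma_wrong}, now with $G$ playing the role of the $\Sigma$-projection $\pi_1$. Define $F f := \ite{G f = G (\lambda x.0)}{\btrue}{\bfalse}$ using decidable equality on $\nat$, and argue that $F$ decides $\lambda f.\;\forall n.\;f n = 0$. If $F f = \btrue$ then $G f = G(\lambda x.0)$, so from $(G f) \sim f$ and $(G f) \sim (\lambda x.0)$ together with the extensionality of $\sim$ (which holds because $T$ is monotonic) I get $f x = 0$ for all $x$. Conversely, if $\forall n.\;f n = 0$, then $\Fext$ yields $f = \lambda x.0$, hence $G f = G(\lambda x.0)$ and $F f = \btrue$. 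Thus $\decidable(\lambda f.\;\forall n.\;f n = 0)$, contradicting \Cref{coq:forall_PCO_undec}, which is available because we assumed $\EA$.

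I expect no deep obstacle, as the argument essentially repackages \Cref{coq:CT_Sigma_wrong}. The one genuine point is that $\AC_{\nat\to\nat,\nat}$ is precisely what replaces the $\Sigma$ of $\CT_\Sigma$ by a choice function $G$, after which the three ingredients combine exactly as before: extensionality of $\sim$ for one direction of the decision procedure, $\Fext$ for the other, and the undecidability of the everywhere-zero predicate for the final contradiction. The only subtlety worth flagging is that $G$ lives under a propositional $\exists$ rather than a $\Sigma$; this is harmless here precisely because the target is $\bot$, so the existential may be case-analysed.
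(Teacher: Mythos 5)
Your proof is correct and takes essentially the same route as the paper: extract $\Sigma T.\;\CT_T$ from $\EA$, use $\AC_{\nat\to\nat,\nat}$ on the relation $\lambda f\,n.\;n \sim f$ to obtain a choice function playing the role of the $\Sigma$-projection, and replay the argument of \Cref{coq:CT_Sigma_wrong} to decide $\lambda f.\;\forall n.\;f n = 0$, contradicting \Cref{coq:forall_PCO_undec}. Your remark that the propositional existential around $G$ may be eliminated because the goal is $\bot$ is precisely the observation that makes the paper's one-line justification rigorous.
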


We will later see that $\LLPO \land \AC_{\nat,\nat}$ implies weak Kőnig's lemma, which is incompatible with $\KT$.
Already now we can prove that $\WLPO \land \AUC_{\nat,\bool}$ is incompatible with $\EA$:

\begin{fact}[][AUC_to_dec]
  $\AUC_{\nat,\bool} \to (\forall n : \nat.~p n \lor \neg p n) \to \decidable p$
\end{fact}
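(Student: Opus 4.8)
The plan is to extract a decider for $p$ directly from the choice function that $\AUC_{\nat,\bool}$ provides. The key observation is that the hypothesis $\forall n.\; p\,n \lor \neg p\,n$ gives, for each $n$, a definite truth value for $p\,n$, which we can encode as a unique boolean.

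First I would define a relation $R : \nat \to \bool \to \Prop$ capturing ``$b$ is the truth value of $p\,n$'', namely
\[ R\,n\,b := (p\,n \land b = \btrue) \lor (\neg p\,n \land b = \bfalse). \]
To apply $\AUC_{\nat,\bool}$ I must verify its premise $\forall n.\; \exists!\, b.\; R\,n\,b$. For existence I case on the hypothesis $p\,n \lor \neg p\,n$: in the left case $b = \btrue$ witnesses $R\,n\,b$, in the right case $b = \bfalse$ does. For uniqueness, suppose $R\,n\,b_1$ and $R\,n\,b_2$; unfolding both disjunctions and using that $p\,n$ and $\neg p\,n$ cannot hold simultaneously, the only consistent combinations force $b_1 = b_2$.

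Applying $\AUC_{\nat,\bool}$ then yields a function $f : \nat \to \bool$ with $\forall n.\; R\,n\,(f\,n)$. I claim $f$ decides $p$, i.e.\ $\forall n.\; p\,n \leftrightarrow f\,n = \btrue$. For the forward direction, from $p\,n$ and $R\,n\,(f\,n)$ the right disjunct $\neg p\,n \land f\,n = \bfalse$ is impossible, so $f\,n = \btrue$. For the backward direction, from $f\,n = \btrue$ and $R\,n\,(f\,n)$ the disjunct $\neg p\,n \land f\,n = \bfalse$ again contradicts $f\,n = \btrue$, leaving $p\,n$. This establishes $\decidable p$.

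The main obstacle, such as it is, is discharging the uniqueness obligation $\exists!\,b.\;R\,n\,b$ cleanly, since this is the essential content distinguishing $\AUC$ from full $\AC$: the relation must genuinely be functional, which is exactly why the excluded-middle hypothesis on $p$ is needed rather than being derivable. Everything else is routine case analysis on booleans and on the disjunction witnessing $R$.
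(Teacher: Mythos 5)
Your proposal is correct and is essentially the paper's own proof (carried out in its Coq mechanisation): encode the truth value of $p\,n$ as the relation $R\,n\,b := (p\,n \land b = \btrue) \lor (\neg p\,n \land b = \bfalse)$, verify unique existence via the hypothesis $\forall n.\;p\,n \lor \neg p\,n$, and read the resulting choice function as a decider. One small terminological slip in your closing remark: the hypothesis is needed for the \emph{existence} (totality) half of $\exists!\,b.\;R\,n\,b$, while uniqueness of $R$ holds outright by case analysis; this does not affect the proof.
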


\begin{lemma}[][AC_0_0_LPO_incompat']
  $\WLPO \to \AUC_{\nat, \bool} \to \EA \to \decidable \compl \K_0$
\end{lemma}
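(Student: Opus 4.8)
The plan is to show that under $\EA$ the complement $\compl{\K_0}$ is co-semi-decidable, then use $\WLPO$ to obtain pointwise excluded middle for it, and finally use $\AUC_{\nat,\bool}$ to turn that pointwise disjunction into an actual Boolean decider. Recall that $\K_0 n := \W n n$, so $\compl{\K_0} n := \neg\, \W n n$; the whole difficulty is that although $\WLPO$ hands us $\forall n.\; \compl{\K_0} n \lor \neg \compl{\K_0} n$, this disjunction lives in $\Prop$ and, by Coq's restriction on large elimination, cannot by itself be eliminated into $\bool$.

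First I would establish $\compl\semidecidable \compl{\K_0}$. From $\EA$ we have $\enumerable \K_0$ (\Cref{coq:K0_enum}), and since $\nat$ is discrete, enumerability gives semi-decidability (\Cref{coq:enumerable_semi_decidable}), so $\semidecidable \K_0$. Because the complement of a semi-decidable predicate is co-semi-decidable (\Cref{coq:sdec_co_sdec_comp}), it follows that $\compl{\K_0}$ is co-semi-decidable.

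Next I would invoke $\WLPO$ in its co-semi-decidable formulation (\Cref{coq:WLPO_cosemidecidable_iff}), which states that co-semi-decidability implies pointwise decidability; applied to $\compl{\K_0}$ this yields $\forall n.\; \compl{\K_0} n \lor \neg \compl{\K_0} n$. I would then feed this to \Cref{coq:AUC_to_dec}, instantiated with $p \definedas \compl{\K_0}$, to conclude $\decidable \compl{\K_0}$.

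The main obstacle, and the reason all three hypotheses are genuinely needed, is the final reification step. The pointwise disjunction supplied by $\WLPO$ is a bare propositional $\lor$, so no decider can be read off directly. This is precisely what $\AUC_{\nat,\bool}$ resolves: the relation $R\, n\, b := (\compl{\K_0} n \land b = \btrue) \lor (\neg \compl{\K_0} n \land b = \bfalse)$ is uniquely satisfiable at every $n$, and unique choice for $\nat \to \bool$ extracts the characteristic function witnessing $\decidable \compl{\K_0}$. All remaining work is routine bookkeeping with the cited closure lemmas.
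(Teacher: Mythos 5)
Your proof is correct and follows essentially the same route as the paper: derive pointwise decidability of $\compl{\K_0}$ from $\WLPO$ together with the ($\EA$-provided) semi-decidability of $\K_0$, then reify it into a Boolean decider via \Cref{coq:AUC_to_dec}. The only cosmetic difference is that you invoke the co-semi-decidable characterisation of $\WLPO$ applied to $\compl{\K_0}$, while the paper uses the semi-decidable characterisation applied to $\K_0$ — both yield the identical intermediate statement $\forall n.\; \neg \K_0 n \lor \neg\neg \K_0 n$.
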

\begin{proof}
  $\WLPO$ implies $\forall n. \neg \K_0 n \lor \neg\neg \K_0 n$.
  By $\AUC_{\nat,\bool}$ and the last lemma $\compl \K_0$ is decidable.
\end{proof}
\begin{corollary}[][AC_0_0_LPO_incompat]
  $\WLPO \to \AUC_{\nat, \bool} \to \EA \to \bot$
\end{corollary}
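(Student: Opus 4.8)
The plan is to read the contradiction off directly from the immediately preceding lemma together with the non-decidability of $\compl{\K_0}$ already established in this section. By \Cref{coq:AC_0_0_LPO_incompat'} the three hypotheses $\WLPO$, $\AUC_{\nat,\bool}$, and $\EA$ jointly produce a decider for $\compl{\K_0}$, i.e.\ $\decidable \compl{\K_0}$. On the other hand, \Cref{coq:K0_undec} — proved under the standing assumption $\EA$ of \Cref{sec:synth} — asserts exactly $\neg\decidable \compl{\K_0}$. Applying the latter to the decider supplied by the former yields $\bot$, which is the desired conclusion.

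Concretely, I would first instantiate \Cref{coq:AC_0_0_LPO_incompat'} with the given hypotheses to obtain $\decidable \compl{\K_0}$, and then feed this into $\neg\decidable \compl{\K_0}$ from \Cref{coq:K0_undec}. No new construction is required: all the substance lives inside the preceding lemma, where — since $\K_0$ is semi-decidable on the discrete type $\nat$ — $\WLPO$ supplies the pointwise instance $\neg\K_0 n \lor \neg\neg\K_0 n$ by \Cref{coq:WLPO_semidecidable_iff}, and $\AUC_{\nat,\bool}$ then upgrades this logical disjunction to an actual function $\nat \to \bool$ via \Cref{coq:AUC_to_dec}.

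Since the corollary is a one-line composition of two already-proved statements, there is no real technical obstacle; the only point worth keeping straight is the bookkeeping that the $\EA$ needed by \Cref{coq:K0_undec} is the same $\EA$ assumed here, so that the decidable and non-decidable verdicts on $\compl{\K_0}$ genuinely clash.
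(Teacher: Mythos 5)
Your proposal is correct and follows exactly the paper's intended route: the corollary is obtained by composing \Cref{coq:AC_0_0_LPO_incompat'}, which yields $\decidable\,\compl{\K_0}$ from the three hypotheses, with the non-decidability result $\neg\decidable\,\compl{\K_0}$ of \Cref{coq:K0_undec}, which holds under the same standing assumption $\EA$. Your account of the inner mechanics of the preceding lemma (semi-decidability of $\K_0$, the $\WLPO$ characterisation, and \Cref{coq:AUC_to_dec}) also matches the paper.
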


\subsection{Provable choice axioms}

In contrast to predicative Martin-Löf type theory, Coq's type theory does not prove the axiom of choice, nor the axioms of dependent and countable choice.
This is due to the fact that arbitrary large eliminations are not allowed.
However, recall that a large elimination principle for the accessibility predicate is provable, resulting in \Cref{coq:mu_nat}.
Using \Cref{coq:mu_nat} we can then prove $\decidable\textsf{-}\AC_{X,\nat}$ for all $X$, i.e.\ choice for decidable relations into natural numbers:
\setCoqFilename{Axioms.principles}

\begin{lemma}[][decidable_AC]
  $\forall X. \forall R : X \to \nat \to \Prop.~\decidable R \to (\forall x.\exists n. R x n) \to \exists f : X \to \nat.\forall x.\;R x (f x)$.  
\end{lemma}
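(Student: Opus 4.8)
The plan is to build the choice function pointwise from the guarded minimisation operator $\mu_\nat$ of \Cref{coq:mu_nat}, which is precisely the one large elimination principle that Coq's type theory admits. The whole difficulty is organisational: I must stay inside $\Prop$ everywhere except for a single, legitimate use of $\mu_\nat$, which is exactly the device that lets the construction escape $\Prop$ where it has to.

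First I would unfold the decidability hypothesis. After (implicit) uncurrying, $\decidable R$ is itself a proposition, namely $\exists f : X \to \nat \to \bool.\ \forall x\,n.\ R x n \leftrightarrow f x n = \btrue$, and the goal $\exists f : X \to \nat.\ \forall x.\ R x (f x)$ is also a proposition. Hence I may destruct it to obtain a boolean decider $d : X \to \nat \to \bool$ together with its specification $\forall x\,n.\ R x n \leftrightarrow d x n = \btrue$; this is an ordinary (small) elimination of a $\Prop$ into a $\Prop$ and needs no large elimination. Transporting the totality hypothesis $\forall x.\ \exists n.\ R x n$ along this equivalence then yields $\forall x.\ \exists n.\ d x n = \btrue$ by a purely propositional rewriting.

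Next, for each $x$ I would apply $\mu_\nat$ to $d x$ together with the witness $\exists n.\ d x n = \btrue$ just obtained, producing an element of $\Sigma n.\ d x n = \btrue \land \forall m.\ d x m = \btrue \to m \geq n$. Setting $g\,x := \pi_1(\mu_\nat\,(d x)\,\dots)$ defines the required function $g : X \to \nat$. This is the one step that invokes the permitted large elimination: $\mu_\nat$ turns the propositional existence of a root of $d x$ into an actual natural number, so $g$ is a genuine term in $\Type$, even though its defining data live in $\Prop$.

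Finally I would supply $g$ as the witness for the (propositional) goal. For each $x$, the second projection $\pi_2(\mu_\nat\,(d x)\,\dots)$ gives $d x (g x) = \btrue$, and the decider specification then yields $R x (g x)$, which closes the goal. I expect the only genuinely delicate point to be respecting the elimination discipline: one must check that extracting $d$ and manipulating the existence witnesses all take place within $\Prop$, so that the sole large elimination used is the one packaged in \Cref{coq:mu_nat}; with that observation the argument goes through directly.
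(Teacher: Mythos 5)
Your proof is correct and follows the same route as the paper: the paper's own proof of this lemma is exactly the pointwise application of the guarded minimisation operator $\mu_\nat$ from \Cref{coq:mu_nat}, after destructing the (propositional) decidability and totality hypotheses inside $\Prop$. Your attention to the elimination discipline — that the only large elimination used is the one packaged in $\mu_\nat$ — matches the paper's emphasis that this lemma hinges precisely on that provable large elimination principle.
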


As a consequence and with no further reference to \Cref{coq:mu_nat} we can then prove choice principles for semi-decidable and enumerable relations, i.e.\ $\semidecidable\textsf{-}\AC_{X,\nat}$ and $\enumerable\textsf{-}\AC_{\nat,X}$ for all $X$:

\begin{lemma}[][semi_decidable_AC]\label{lem:andrej_sdec}
  The following two choice principles are provable\footnote{A formulation of (\lipicsNumber{1}) for disjunctions (equivalently: $R : X \to \bool \to \Prop$) is due to Andrej Dudenhefner and was received in private communication. ($\lipicsNumber{2}$) was anticipated by~Larchey-Wendling~\cite{larchey2017typing}, who formulated it for $\mu$-recursively enumerable instead of synthetically enumerable predicates.}:
  \begin{enumerate}
    \coqitem[semi_decidable_AC]
    $\forall X.\forall R : X \to \nat \to \Prop.~ \semidecidable R \to (\forall x.\exists n.\; R x n) \to \exists f : X \to \nat.\forall x.\;R x (f x)$
  \coqitem[enumerable_AC]
  $\forall X.\forall R : \nat \to X \to \Prop.~\enumerable R \to (\forall n.\exists x.\;R n x) \to \exists f : \nat \to X.\forall n.\;R n (f n)$
  \end{enumerate}
\end{lemma}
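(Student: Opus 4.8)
The plan is to reduce both principles to decidable choice into $\nat$ (\Cref{coq:decidable_AC}) by absorbing the existential witnesses hidden in semi-decidability and enumerability into the choice index.

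For the first principle, let $g : X \to \nat \to \nat \to \bool$ be a semi-decider of $R$, so that $R\,x\,n \leftrightarrow \exists m.\; g\,x\,n\,m = \btrue$. I would define a decidable relation on $X$ and $\nat$ by $R'\,x\,\langle n,m\rangle \definedas (g\,x\,n\,m = \btrue)$, using the pairing function to fold the two quantifiers $\exists n.\exists m$ into a single $\exists \langle n,m\rangle$. From the hypothesis $\forall x.\exists n.\;R\,x\,n$ together with the semi-decider I obtain $\forall x.\exists k.\;R'\,x\,k$: given $x$, pick $n$ with $R\,x\,n$, then $m$ with $g\,x\,n\,m = \btrue$, and $k = \langle n,m\rangle$ works. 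Since $R'$ is visibly decidable (it is a boolean test), \Cref{coq:decidable_AC} yields a choice function $h : X \to \nat$ with $R'\,x\,(h\,x)$ for all $x$. Writing $h\,x$ in the form $\langle n_x, m_x\rangle$ and setting $f\,x \definedas n_x$ gives $g\,x\,(f\,x)\,m_x = \btrue$, hence $R\,x\,(f\,x)$, as required.

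For the second principle the extra difficulty is that the target type $X$ need not be $\nat$, so I cannot directly choose elements of $X$; instead I choose an enumeration index and then read off the element. Let $g : \nat \to \option(\nat \times X)$ enumerate $R$, so $R\,n\,x \leftrightarrow \exists k.\; g\,k = \Some (n,x)$. Using discreteness of $\nat$, I define the decidable relation $R'\,n\,k$ that holds exactly when $g\,k$ has the shape $\Some (n',x)$ with $n' =_{\bool} n$. From $\forall n.\exists x.\;R\,n\,x$ I obtain $\forall n.\exists k.\;R'\,n\,k$ (the witnessing $k$ enumerates $(n,x)$), and \Cref{coq:decidable_AC} provides $h : \nat \to \nat$ with $R'\,n\,(h\,n)$ for all $n$.

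The remaining — and most delicate — step is to turn the index function $h$ into the required $f : \nat \to X$. I would define $f\,n$ by a dependent case analysis on $g\,(h\,n)$: in the branch $\Some (n',x)$ return $x$, and in the impossible branch $\None$ use the witness $R'\,n\,(h\,n)$, which there reduces to $\bfalse = \btrue$, together with explosion $\bot \to X$ to produce a value. This is where the large elimination principle for $\bot$ is essential, and care is needed to thread the equality proof $R'\,n\,(h\,n)$ through the match so that the $\None$ case is genuinely refuted. Finally, on the live branch $g\,(h\,n) = \Some (n,x)$ the defining property of the enumerator gives $R\,n\,x$, i.e.\ $R\,n\,(f\,n)$, closing the goal. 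One could equally invoke the first principle in place of \Cref{coq:decidable_AC} at the choice step, but the option-extraction in the last paragraph is unavoidable and is the main obstacle.
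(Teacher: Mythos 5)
Your proposal is correct and follows essentially the same route as the paper's mechanised proof: both principles are reduced to decidable choice into $\nat$ (\Cref{coq:decidable_AC}), the first by folding the semi-decidability witness into the choice index via pairing, the second by choosing an enumeration index and then extracting the $X$-component from the $\Some$ value of the enumerator, refuting the $\None$ branch via the threaded equality and explosion $\bot \to X$. Nothing is missing.
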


Principle $\lipicsNumber{2}$ can be relaxed to arbitrary discrete types instead of $\nat$, and in particular $\semidecidable\textsf{-}\AC_{\nat,\bool}$ follows from $\lipicsNumber{1}$.
In \Cref{sec:modesty} we discuss consequences of the here mentioned principles with regards to $\CT$~for~oracles and
in the next section $\compl\semidecidable\textsf{-}\AC_{\nat,\bool}$ will~be~central. %

\section{Axioms on Trees}
\label{sec:trees}
\setCoqFilename{Axioms.kleenetree}

We have already introduced (decidable) binary trees and Kleene trees in \Cref{sec:kleene}.
We now give a broader overview and give formulations of $\LPO$, $\WLPO$, $\LLPO$, and $\MP$ in terms of decidable binary trees, following Berger et al.~\cite{berger2012weak}.

\begin{fact}[][is_tree_subtree_at]
  Let $\tau$ be a tree. Then $\tau_u v \definedas \tau (u \app v)$ is a tree if and only if $\tau u$.
\end{fact}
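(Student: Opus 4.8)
The plan is to prove the two directions separately, both by direct verification. Write $\tau_u v \definedas \tau(u \app v)$ and recall the three defining properties of a tree: decidability, non-emptiness, and prefix-closedness, together with the earlier \Cref{coq:tree_nil} that every tree contains the empty list.

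For the direction ``$\tau_u$ is a tree $\Rightarrow \tau u$'', I would simply invoke \Cref{coq:tree_nil} applied to $\tau_u$: since $\tau_u$ is assumed to be a tree, $\tau_u []$ holds. Unfolding the definition, this is $\tau(u \app [])$, and since $u \app [] = u$ this is exactly $\tau u$. (Alternatively, one could use non-emptiness of $\tau_u$ to obtain some $v$ with $\tau(u \app v)$ and then apply prefix-closedness of $\tau$ along $u \sqsubseteq u \app v$, but routing through the empty list is shorter.)

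For the converse, I would assume $\tau u$ and check the three properties for $\tau_u$. Decidability transfers by composing the decider $f$ of $\tau$ with the append map, i.e.\ $\lambda v.\; f(u \app v)$ decides $\tau_u$. Non-emptiness is witnessed by $[]$, since $\tau_u [] = \tau(u \app []) = \tau u$ holds by hypothesis. For prefix-closedness, suppose $\tau_u v_2$ and $v_1 \sqsubseteq v_2$, say $v_2 = v_1 \app v'$; then $u \app v_2 = u \app (v_1 \app v') = (u \app v_1) \app v'$, so $u \app v_1 \sqsubseteq u \app v_2$, and prefix-closedness of $\tau$ applied to $\tau(u \app v_2)$ yields $\tau(u \app v_1)$, i.e.\ $\tau_u v_1$.

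There is no real obstacle here: the only content is the bookkeeping of the list-append laws $u \app [] = u$ and associativity $(u \app v_1) \app v' = u \app (v_1 \app v')$, which are standard. The mildly delicate point, if any, is noticing that the empty-list witness simultaneously discharges non-emptiness in the backward direction and (via \Cref{coq:tree_nil}) the entire forward direction, keeping both halves of the equivalence trivial.
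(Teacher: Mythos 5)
Your proof is correct and matches the intended argument: the paper states this as a routine fact (proved directly in the Coq development), and the proof is exactly the direct verification you give — decidability via $\lambda v.\,f(u \app v)$, non-emptiness and the forward direction both via the empty list together with $u \app [] = u$, and prefix-closedness via associativity of $\app$. Nothing is missing.
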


If $\tau u$ holds we call $\tau_u$ a \introterm{subtree} of $\tau$ and $\tau_{[b]}$ a \introterm{direct~subtree} of $\tau$.%

\begin{lemma}[][LPO_tree_iff]\label{lem:omni_equivs}
  The following equivalences hold:
  \begin{enumerate}
    \coqitem[LPO_tree_iff] $\LPO \leftrightarrow$ every tree is bounded or infinite.
    \coqitem[WLPO_tree_iff] $\WLPO \leftrightarrow$  every tree is infinite or not infinite.
    \coqitem[LLPO_tree_iff] $\LLPO \leftrightarrow$ every infinite tree has a direct infinite subtree.
    \coqitem[MP_tree_iff] $\MP \leftrightarrow$ if a tree is not infinite it is bounded.
    \coqitem[MP_tree_iff'] $\MP \leftrightarrow$ if a tree has no infinite path it is well-founded.
  \end{enumerate}
\end{lemma}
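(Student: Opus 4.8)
The plan is to prove all five equivalences uniformly, by moving back and forth between boolean functions and canonically associated trees. The crucial observation is that boundedness of a tree is a $\Sigma^0_1$ property, with infiniteness its $\Pi^0_1$ complement. For a tree $\tau$ I would define the decidable \emph{level-emptiness function} $b_\tau$, where $b_\tau\,n = \btrue$ exactly when no string of length $n$ lies in $\tau$; this is decidable since only finitely many boolean strings have a given length and $\tau$ is decidable. Prefix-closure (nothing at level $n$ forces nothing at any greater level) gives that $\tau$ is bounded iff $\sat{b_\tau}$, and hence, via \Cref{coq:not_bounded_infinite_iff}, that $\tau$ is infinite iff $\neg\sat{b_\tau}$. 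In the converse direction I would, for a boolean function $f$, use the single-spine tree $\tau_f\,u := \forall n < |u|.\; f n = \bfalse$, which is readily a tree and satisfies: $\tau_f$ is bounded (equivalently, has no infinite path) iff $\sat f$, and $\tau_f$ is infinite (equivalently, has an infinite path) iff $\neg\sat f$.

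With these two translations, items \lipicsNumber{1}, \lipicsNumber{2} and \lipicsNumber{4} are almost immediate. For \lipicsNumber{1}, applying $\LPO$ to $b_\tau$ gives $\sat{b_\tau} \lor \neg\sat{b_\tau}$, i.e.\ $\tau$ is bounded or infinite; conversely, instantiating ``bounded or infinite'' at $\tau_f$ yields $\sat f \lor \neg\sat f$. For \lipicsNumber{2}, ``infinite or not infinite'' is by the above exactly $\neg\sat{b_\tau} \lor \neg\neg\sat{b_\tau}$ on trees and $\neg\sat f \lor \neg\neg\sat f$ on functions, matching $\WLPO$. For \lipicsNumber{4}, ``not infinite'' unfolds to $\neg\neg$-boundedness, so ``not infinite implies bounded'' is precisely $\neg\neg\sat{b_\tau} \to \sat{b_\tau}$, i.e.\ $\MP$ on $b_\tau$, with the converse again using $\tau_f$.

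Item \lipicsNumber{5} I would argue path-wise. For a fixed path candidate $f$ the predicate $n \mapsto \tau[f 0,\dots,f n]$ is decidable, so ``$f$ is an infinite path'' ($\forall n.\;\tau[f 0,\dots,f n]$) has a negation which, by \Cref{fact:forall_neg} together with decidability, is equivalent to $\neg\neg\sat{h_f}$, where $h_f$ is the boolean function that is $\btrue$ exactly when $[f 0,\dots,f n]\notin\tau$; moreover well-foundedness at $f$ is exactly $\sat{h_f}$. Thus ``no infinite path'' supplies $\neg\neg\sat{h_f}$ for every $f$, and $\MP$ yields well-foundedness. The converse instantiates ``no infinite path implies well-founded'' at $\tau_f$, where ``no infinite path'' is $\neg\neg\sat f$ and well-foundedness is $\sat f$, so the principle becomes $\MP$.

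The main obstacle is item \lipicsNumber{3}, which rests on the binary König argument. For the forward direction I would use the De Morgan form of $\LLPO$ (\Cref{lem:LLPO_equivs}, item \lipicsNumber{2}): given an infinite tree $\tau$, let $g_\btrue$ and $g_\bfalse$ be the level-emptiness functions of the two direct subtrees $\tau_{[\btrue]}$ and $\tau_{[\bfalse]}$ (each a tree by \Cref{coq:is_tree_subtree_at} when the corresponding root bit lies in $\tau$). If both subtrees were bounded, then nothing would survive beyond a common level and $\tau$ itself would be bounded, contradicting infiniteness; hence $\neg(\sat{g_\btrue} \land \sat{g_\bfalse})$, and the De Morgan form delivers $\neg\sat{g_\btrue} \lor \neg\sat{g_\bfalse}$, i.e.\ one direct subtree is infinite. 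For the converse I would build a two-spine tree that keeps only constant strings, gating the all-$\btrue$ spine by $f$ (so $\btrue^k$ survives as long as $f$ has not output $\btrue$ below level $k$) and the all-$\bfalse$ spine by $g$ symmetrically; its two direct subtrees are then infinite iff $\neg\sat f$ resp.\ $\neg\sat g$, and the whole tree is infinite whenever $\neg(\sat f \land \sat g)$, since otherwise both spines die out. Applying ``every infinite tree has a direct infinite subtree'' then yields $\neg\sat f \lor \neg\sat g$, recovering the De Morgan form and hence $\LLPO$. Pinning down the exact match between the two-spine tree and the De Morgan formulation, and carrying out the König step constructively, is the delicate part.
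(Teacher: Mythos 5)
Your proposal is correct and follows essentially the same route as the paper's proof, which the paper itself does not spell out in the text (it defers to the Coq mechanisation, following Berger, Ishihara, and Schuster): the two translations you use, namely the decidable level-emptiness function $b_\tau$ with bounded $\leftrightarrow \sat{b_\tau}$ and infinite $\leftrightarrow \neg\sat{b_\tau}$ in one direction, and spine trees gated by the given boolean functions in the other, together with the De Morgan form of $\LLPO$ from \Cref{lem:LLPO_equivs} for item \lipicsNumber{3}, constitute exactly the standard argument. One small constructive slip in your setup paragraph: for the single-spine tree $\tau_f$, ``bounded'' is $\sat f$ whereas ``has no infinite path'' is only $\neg\neg\sat f$, so the parenthetical identifying the two is not valid without $\MP$ (well-foundedness, not absence of infinite paths, is what coincides with boundedness here); this is harmless, since in item \lipicsNumber{5}, where the distinction matters, you correctly work with $\neg\neg\sat f$.
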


Recall \Cref{lem:easyfacts_trees} stating that every bounded tree is well-founded and that every tree with an infinite path is infinite.
The respective converse implications are known as Brouwer's \introterm{fan theorem} \FAN and~\introterm{weak Kőnig's lemma} \WKL respectively:
\label{def:FAN}\label{def:WKL}
\begin{align*}
  \FAN &\definedas \textit{Every well-founded decidable binary tree is bounded.} \\
  \WKL &\definedas \textit{Every infinite decidable binary tree has an infinite path.} 
\end{align*}

\begin{fact}[][KT_FAN_contra]
  $\KT \to \neg \FAN$ and $\KT \to \neg \WKL$.
\end{fact}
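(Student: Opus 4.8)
The plan is to destruct $\KT$ once to obtain a single tree $\tau$ that is simultaneously infinite, well-founded, and a decidable binary tree, and then refute each of the two principles by instantiating it at precisely this $\tau$.

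For $\KT \to \neg\FAN$, I would assume $\FAN$ and feed it $\tau$. Since $\tau$ is a well-founded decidable binary tree, $\FAN$ returns that $\tau$ is bounded. But $\tau$ is also infinite, and by \Cref{coq:not_bounded_infinite_iff} being infinite is exactly being not bounded; hence boundedness and infiniteness of the same tree are contradictory, yielding $\bot$.

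For $\KT \to \neg\WKL$, I would assume $\WKL$ and again feed it $\tau$. Since $\tau$ is an infinite decidable binary tree, $\WKL$ produces an infinite path, i.e.\ some $f$ with $\forall n.\;\tau\,[f 0, \dots, f n]$. This directly contradicts well-foundedness of $\tau$, which applied to this very $f$ supplies an $n$ with $\neg\tau\,[f 0, \dots, f n]$, again yielding $\bot$.

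I expect no real obstacle: both directions are immediate once the definitions of bounded, infinite, well-founded, and infinite path are unfolded, together with the characterisation in \Cref{coq:not_bounded_infinite_iff}. The only subtlety worth flagging is that the same witness $\tau$ plays both roles, so $\KT$ should be destructed once up front to expose all three of its properties before either principle is invoked. Note also that no appeal to \Cref{lem:easyfacts_trees} is needed here: that fact records the \emph{easy} implications (bounded $\to$ well-founded and infinite-path $\to$ infinite), whereas the present argument exploits the definitions of well-foundedness and infiniteness directly against the \emph{hard} converses asserted by $\FAN$ and $\WKL$.
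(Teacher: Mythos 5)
Your proposal is correct and is essentially the paper's own argument: the paper states this as a fact whose proof is exactly the direct definitional clash you describe --- $\FAN$ makes the Kleene tree bounded, contradicting its infiniteness via \Cref{coq:not_bounded_infinite_iff}, and $\WKL$ hands the well-foundedness hypothesis an infinite path to refute. Your observation that \Cref{lem:easyfacts_trees} is not needed (it records the converse, easy implications) is also accurate.
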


Note that $\FAN$ is called $\FAN'_\Delta$ in~\cite{ishihara2006weak} and $\FAN_\Delta$ in \cite{dienerConstructiveReverseMathematics2020}, and $\WKL$ is called $\WKL_{\decidable{}}$ in~\cite{forster2020completenessext}. %
Ishihara~\cite{ishihara2006weak} shows how  to deduce $\FAN$ from $\WKL$ constructively:

\begin{fact}[][bounded_longest_path]
  Bounded trees $\tau$ have a longest element, i.e.\ $\exists u. \;\tau u \land \forall v.~ \tau v \to \length v \leq \length u$.
\end{fact}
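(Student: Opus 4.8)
The plan is to use boundedness to collapse the problem to a finite search. First I would unfold boundedness to obtain a number $n$ with $\forall u.~\length{u} \geq n \to \neg \tau u$; equivalently, every element of $\tau$ has length strictly below $n$. A longest element will be any $\tau$-element of maximal length, so it suffices to locate the greatest length $k_0 < n$ that is actually realised in $\tau$.

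To that end I would consider, for each fixed $k$, the predicate $P\,k := \exists u.~\length{u} = k \land \tau u$ and show it is decidable. There are only finitely many boolean lists of a given length $k$ (namely $2^k$), so after defining an auxiliary enumeration of all elements of $\List\bool$ of length $k$ and proving it complete, decidability of $P\,k$ follows from decidability of $\tau$ together with closure of decidable predicates under (finite) disjunction (\Cref{coq:dec_compl}). Moreover $P\,0$ holds, with witness $[]$, since $\tau[]$ holds by \Cref{coq:tree_nil}, and $P\,k$ fails for every $k \geq n$ by the bound.

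With $P$ decidable, false beyond $n$, and nonempty at $0$, I would then extract the greatest $k_0 < n$ with $P\,k_0$. This is a bounded maximisation of a decidable predicate over a nonempty finite range; concretely it can be obtained by minimising $j \mapsto P\,(n - 1 - j)$ with a guarded minimisation in the style of \Cref{coq:mu_nat}, or simply by taking the maximum of the finite list of all $k < n$ with $P\,k$. Let $u$ be the accompanying witness, so $\length{u} = k_0$ and $\tau u$. For the longest-element property, take any $v$ with $\tau v$: the bound gives $\length{v} < n$, and $P\,\length{v}$ holds with witness $v$ itself, so maximality of $k_0$ yields $\length{v} \leq k_0 = \length{u}$, which is exactly what is required.

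The main obstacle is packaging the decidability of $P\,k$, i.e.\ exhibiting a complete enumeration of the length-$k$ boolean lists and feeding it through decidability of $\tau$; once this finite-enumeration lemma is in place, everything else reduces to closure of decidability under finite disjunction and a routine bounded-maximisation argument. I note that prefix-closedness of $\tau$ is never actually used here: only decidability, $\tau[]$, and boundedness enter the argument.
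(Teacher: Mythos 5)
Your proof is correct and matches the natural argument behind the paper's mechanised proof: use the bound $n$ and decidability of $\tau$ to reduce to a finite search over lengths (equivalently, over the finitely many boolean lists of length below $n$), and extract the maximal realised length, which is constructively unproblematic since the final statement is propositional so the $\exists$-bound and $\exists$-decider may be eliminated. The bookkeeping details (decidability of $P\,k$ via enumerating the $2^k$ lists, downward bounded maximisation anchored at $\tau[]$) are all sound.
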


\begin{lemma}[][inf_to_longest]
  For every tree $\tau$ there is an infinite tree $\tau'$ s.t.\ for any infinite path $f$ of $\tau'$ 
  $\forall u.\; \tau u \to \tau [f 0, \dots, f |u|]$.
\end{lemma}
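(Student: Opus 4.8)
The plan is to build, out of the given tree $\tau$, a tree $\tau'$ that copies $\tau$ at every length at which $\tau$ is inhabited but permits arbitrary continuations once $\tau$ has run out, so that $\tau'$ is always infinite while any infinite path through it is forced to pass through $\tau$ at each inhabited length. Throughout I write $[f0,\dots,fn]$ for the length-$n$ prefix $[f\,0,\dots,f\,(n{-}1)]$ of $f$, and $\mathsf{pre}_k\,u$ for the length-$k$ prefix of a list $u$.

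The construction rests on the function returning the largest inhabited length below a bound. For $n:\nat$ let $\ell(n)$ be the largest $k \le n$ with $\exists w.\;|w| = k \land \tau w$. This is well defined and computable: for each fixed $k$ there are only finitely many $w$ with $|w| = k$ and $\tau$ is decidable, so $\lambda k.\,\exists w.\,|w|=k \land \tau w$ is decidable, and $k=0$ always qualifies since $\tau []$ holds by \Cref{coq:tree_nil}. A bounded descending search from $n$ therefore returns $\ell(n)$ together with a concrete witness $w_n$ satisfying $|w_n| = \ell(n)$ and $\tau\,w_n$. I then define
\[ \tau'\,u \definedas \tau\,(\mathsf{pre}_{\ell(|u|)}\,u). \]

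First I would check the three defining properties of a tree. Decidability is immediate since $\ell(|u|)$ is computable and $\tau$ is decidable; non-emptiness holds because $\ell(0)=0$ reduces $\tau'\,[]$ to $\tau\,[]$. For prefix-closedness, assume $\tau'\,u$ and $v \sqsubseteq u$: from $|v|\le|u|$ we get $\ell(|v|)\le\ell(|u|)$ and $\ell(|v|)\le|v|$, so $\mathsf{pre}_{\ell(|v|)}\,u$ is a prefix of $\mathsf{pre}_{\ell(|u|)}\,u\in\tau$, hence lies in $\tau$ by prefix-closedness of $\tau$, and it equals $\mathsf{pre}_{\ell(|v|)}\,v$ because $v\sqsubseteq u$ and $\ell(|v|)\le|v|$. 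Next, for infiniteness (i.e.\ $\forall n.\exists u.\;|u|\ge n \land \tau'\,u$, the form of \Cref{coq:not_bounded_infinite_iff}), fix $n$ and pad $w_n$ with $\bfalse$'s to length $n$, i.e.\ take $u \definedas w_n \app [\bfalse,\dots,\bfalse]$; then $\ell(|u|)=\ell(n)$ and $\mathsf{pre}_{\ell(n)}\,u = w_n \in \tau$, so $\tau'\,u$ with $|u|=n$. Finally, for the path property, let $f$ be an infinite path of $\tau'$ (so $\tau'\,(\mathsf{pre}_m\,f)$ for all $m$) and let $\tau\,u$ with $|u|=k$; then $\tau$ is inhabited at length $k$, so $\ell(k)=k$, and $\tau'\,(\mathsf{pre}_k\,f)$ unfolds to $\mathsf{pre}_{\ell(k)}\,(\mathsf{pre}_k\,f) = \mathsf{pre}_k\,f \in \tau$, i.e.\ $\tau\,[f0,\dots,f|u|]$, as required.

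The hard part is choosing the definition of $\tau'$ rather than verifying it. The intuitive candidate ``$u$ is a prefix of some node of $\tau$'' is not decidable, since deciding whether $u$ extends to a $\tau$-node hides an unbounded search for a witness; the definition must instead be phrased using only bounded information, namely the inhabited lengths of $\tau$ below $|u|$ and prefixes of $u$, which is exactly what $\ell(|u|)$ packages. The one place genuine constructive care is needed is infiniteness: it must avoid any case distinction on whether $\tau$ is bounded, and this is achieved by having the descending search for $\ell(n)$ return the concrete witness node $w_n$, from which the padded node is built uniformly in $n$.
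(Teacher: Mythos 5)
Your proof is correct and is essentially the paper's own construction (following Ishihara): define $\tau'\,u$ by truncating $u$ at the largest length $\ell(|u|)\le|u|$ at which $\tau$ is inhabited and requiring the truncation to lie in $\tau$, where the paper obtains $\ell$ and its witness via the adjacent fact that bounded trees have a longest element (applied to $\tau$ cut off at depth $n$), which is exactly what your bounded descending search with witness $w_n$ provides. Your explicit choice to read $[f0,\dots,f|u|]$ as the length-$|u|$ prefix is also the right disambiguation: under the literal ``inclusive'' reading the lemma would be refutable for trees with a deepest node (e.g.\ $\tau = \{[]\}$ would yield an infinite tree with provably no infinite path, contradicting the consistency of $\WKL$), so the statement only makes sense, and your verification only goes through, with the convention you fixed.
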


\begin{theorem}[][WKL_to_FAN]
  $\WKL \to \FAN$ 
\end{theorem}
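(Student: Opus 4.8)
The plan is to follow Ishihara's argument, using the infinite tree produced by \Cref{coq:inf_to_longest} as the bridge between well-foundedness and boundedness. Assume $\WKL$ and let $\tau$ be a well-founded decidable binary tree; the goal is to show $\tau$ is bounded, i.e.\ $\exists n.\;\forall u.\;|u| \geq n \to \neg \tau u$.

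First I would apply \Cref{coq:inf_to_longest} to $\tau$, obtaining an infinite tree $\tau'$ together with the tracking property that for any infinite path $f$ of $\tau'$ one has $\forall u.\;\tau u \to \tau [f 0, \dots, f |u|]$. Since $\tau'$ is infinite, $\WKL$ supplies an actual infinite path $f$ of $\tau'$. Fixing this $f$, I then feed it into the well-foundedness of $\tau$: by definition $\forall g.\,\exists n.\;\neg \tau [g 0, \dots, g n]$, so in particular there is some $n_0$ with $\neg \tau [f 0, \dots, f n_0]$.

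I claim $n_0$ witnesses boundedness of $\tau$. Suppose towards a contradiction that some $u$ with $|u| \geq n_0$ satisfies $\tau u$. By the tracking property of $f$ this gives $\tau [f 0, \dots, f |u|]$, a list of length $|u|+1$. Since $n_0 \leq |u|$, the list $[f 0, \dots, f n_0]$ is a prefix of $[f 0, \dots, f |u|]$, so prefix-closure of $\tau$ yields $\tau [f 0, \dots, f n_0]$, contradicting the choice of $n_0$. Hence no element of $\tau$ has length $\geq n_0$, which is exactly boundedness.

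The genuine content has already been isolated into \Cref{coq:inf_to_longest} (and its supporting \Cref{coq:bounded_longest_path}), so the main obstacle is not in this final combination but in that preparatory construction of $\tau'$, whose infinite paths must stay inside $\tau$ exactly as far as $\tau$ reaches; granting it, the remaining steps are a short contradiction argument relying only on prefix-closure and the definitions of well-foundedness and boundedness. The one point to handle carefully is the off-by-one bookkeeping between the length $|u|$ of an element and the length $|u|+1$ of the path prefix $[f 0, \dots, f |u|]$, but this does not disturb the choice of $n_0$ as the bound.
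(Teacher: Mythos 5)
Your proposal is correct and is essentially the paper's own proof: both apply \Cref{coq:inf_to_longest} together with $\WKL$ to obtain a path $f$ with the tracking property $\forall u.\,\tau u \to \tau[f0,\dots,f|u|]$, extract $n_0$ with $\neg\tau[f0,\dots,fn_0]$ from well-foundedness, and derive boundedness by prefix-closure. Your explicit handling of the $|u| \geq n_0$ case (versus the paper's $|u| > n$) is only a minor bookkeeping variation of the same argument.
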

\begin{proof}
  Let $\tau$ be well-founded.
  By \Cref{coq:inf_to_longest} and $\WKL$, there is $f$ s.t.\ $\forall a.~ \tau u \to \tau [f 0, \dots, f \length u]$.
  Since $\tau$ is well-founded there is $n$ s.t.\ $\neg \tau [f 0, \dots, f n]$.
  Then $n$ is a bound for $\tau$: For $u$ with $\length u > n$ and $\tau u$ we have $\tau [f 0, \dots, f n, \dots , f \length u]$.
  But then $\tau [f 0, \dots, f n]$, contradiction.
\end{proof}
\begin{corollary}[][KT_WKL_contra]
  $\KT \to \neg \WKL$.
\end{corollary}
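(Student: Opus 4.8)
The plan is to obtain this purely by contraposition, chaining two results already established: Theorem~\Cref{coq:WKL_to_FAN}, which gives $\WKL \to \FAN$, and Fact~\Cref{coq:KT_FAN_contra}, which gives $\KT \to \neg\FAN$. Note that $\KT \to \neg\WKL$ was already asserted directly in \Cref{coq:KT_FAN_contra}; the role of this corollary is to re-derive the same incompatibility as a consequence of the nontrivial implication $\WKL \to \FAN$. In other words, once Ishihara's reduction of $\FAN$ to $\WKL$ is available, the conflict between $\KT$ and $\WKL$ factors through the conflict between $\KT$ and $\FAN$, so no fresh combinatorial reasoning about trees is needed here.

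Concretely, I would assume $\KT$ and, toward a contradiction, also assume $\WKL$. Applying $\WKL \to \FAN$ to the latter yields $\FAN$. On the other hand, instantiating $\KT \to \neg\FAN$ with the assumption $\KT$ yields $\neg\FAN$. These two statements are contradictory, so $\WKL$ cannot hold, i.e.\ $\neg\WKL$. The argument never unfolds the definitions of trees, boundedness, well-foundedness, or infinite paths; it is just the composition of two implications with a negation.

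There is essentially no obstacle in the corollary itself: all the mathematical content resides elsewhere, namely in the proof of $\WKL \to \FAN$ (the construction of the longest-path tree $\tau'$ from \Cref{coq:inf_to_longest}, following Ishihara) and in the construction of a Kleene tree (\Cref{coq:T_K}) underpinning $\KT \to \neg\FAN$. The only subtlety is to orient the contraposition correctly — reasoning \emph{from} $\WKL$ \emph{to} $\FAN$ and, independently, \emph{from} $\KT$ \emph{to} $\neg\FAN$ — rather than attempting to refute $\WKL$ directly.
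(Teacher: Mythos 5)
Your proof is correct and is exactly the paper's intended argument: the corollary follows by composing Theorem~\ref{coq:WKL_to_FAN} ($\WKL \to \FAN$) with Fact~\ref{coq:KT_FAN_contra} ($\KT \to \neg\FAN$), deriving a contradiction from $\KT$ and $\WKL$ together. You also correctly observe that the incompatibility was already stated in Fact~\ref{coq:KT_FAN_contra} and that the corollary's point is to recover it through Ishihara's reduction.
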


Berger and Ishihara~\cite{berger2005brouwer} show that $\FAN \leftrightarrow \WKL!$, a restriction of \WKL stating that every infinite decidable binary tree with \textit{at most one} infinite path has an infinite path.
Schwichtenberg~\cite{schwichtenberg2005direct} gives a more direct construction and mechanises the proof in Minlog.

Berger, Ishihara, and Schuster~\cite{berger2012weak} characterise $\WKL$ as the combination of the logical principle $\LLPO$ and the function existence principle~$\compl\semidecidable\textsf{-}\AC_{\nat,\bool}$ (called $\Pi^0_1\CC^\lor$ in \cite{berger2012weak}).
We observe that $\WKL$ can also be characterised as one particular choice or dependent choice principle.
The proofs are essentially rearrangements of \cite[Theorem 27 and Corollary 5]{berger2012weak}.

\begin{theorem}[][WKL_to_LLPO]\label{thm:WKL_equivs}
  The following are equivalent:
  \begin{enumerate}
    \coqitem[WKL_to_LLPO] $\WKL $
    \coqitem[LLPO_coS_AC_on_to_coS_AC_on_weak] $\LLPO \land \compl\semidecidable\textsf{-}\AC_{\nat,\bool}$
    \coqitem[cos_AC_on_weak_to_coS_ADC_on_weak] {$\forall R : \phantom{\List}\nat \to \bool \to \Prop.~\compl\semidecidable R \to (\forall n.\neg\neg\exists b. R n b) \to \exists f : \nat \to \bool.\forall n.~R~n~(f n)$}
    \coqitem[coS_ADC_on_weak_to_WKL] { $\forall R : \List \bool \to \bool \to \Prop.~\compl\semidecidable R \to (\forall u.\neg\neg\exists b. R u b) \to \exists f : \nat \to \bool.\forall n.~R~[f 0, \dots, f (n - 1)]~(f n)$}
  \end{enumerate}
\end{theorem}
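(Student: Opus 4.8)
The plan is to close the cycle $(\lipicsNumber 1) \to (\lipicsNumber 2) \to (\lipicsNumber 3) \to (\lipicsNumber 4) \to (\lipicsNumber 1)$, matching the four hyperlinked implications and rearranging \cite[Theorem~27, Corollary~5]{berger2012weak}. For $(\lipicsNumber 1) \to (\lipicsNumber 2)$ I split the conjunction. The implication $\WKL \to \LLPO$ goes through \Cref{lem:omni_equivs}~($\lipicsNumber 3$): an infinite tree has an infinite path $f$ by $\WKL$, and the direct subtree $\tau_{[f\,0]}$ is again infinite since the shifted path $g\,n := f(\succN n)$ witnesses an infinite path (\Cref{lem:easyfacts_trees}), so every infinite tree has a direct infinite subtree. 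For $\WKL \to \compl\semidecidable\textsf{-}\AC_{\nat,\bool}$, given a total co-semi-decidable $R : \nat \to \bool \to \Prop$ with $R~n~b \leftrightarrow \forall m.\;g~n~b~m = \bfalse$, I build the decidable tree $\tau~u \definedas \forall i < |u|.\forall m < |u|.\;g~i~(u[i])~m = \bfalse$. Prefix-closedness holds because shrinking $|u|$ only relaxes the bound $m < |u|$, and the tree is infinite because total choice supplies, for each length $k$, bits $b_i$ with $R~i~b_i$, hence a node of length $k$. Reading off the $\WKL$-path $f$ yields $R~i~(f\,i)$ for all $i$, as each $g~i~(f\,i)~m = \bfalse$ is witnessed by a sufficiently long prefix.

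The implication $(\lipicsNumber 2) \to (\lipicsNumber 3)$ is short: by the totality formulation of $\LLPO$ (\Cref{lem:LLPO_equivs}, $\lipicsNumber 5$) each weak totality $\neg\neg\exists b.\;R~n~b$ upgrades to $\exists b.\;R~n~b$, and $\compl\semidecidable\textsf{-}\AC_{\nat,\bool}$ then delivers the choice function. For $(\lipicsNumber 3) \to (\lipicsNumber 4)$ the key observation is that countable choice already chooses on all nodes at once: transporting statement $(\lipicsNumber 3)$ along a bijection $\nat \cong \List\bool$ (available since $\List\bool$ is discrete and enumerable) and feeding it the co-semi-decidable, everywhere-weakly-total relation of $(\lipicsNumber 4)$ produces a total $h : \List\bool \to \bool$ with $\forall u.\;R~u~(h\,u)$. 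The path is then threaded by recursion: set $F\,0 \definedas []$, $F(\succN n) \definedas F\,n \app [h(F\,n)]$, and $f\,n \definedas h(F\,n)$; one checks $F\,n = [f\,0,\dots,f(n-1)]$ by induction, so $R~[f\,0,\dots,f(n-1)]~(f\,n)$ holds for every $n$.

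The substantial direction is $(\lipicsNumber 4) \to (\lipicsNumber 1)$. Let $\tau$ be an infinite decidable tree and let $\ell(u,k) : \bool$ decide whether $u$ extends to level $k$, i.e.\ $\exists v.\;|v| = k \land \tau(u\app v)$; this is decidable by bounded search and antitone in $k$. I use the co-semi-decidable relation $R~u~b \definedas \forall k.\;\ell(u,k+1) = \btrue \to \ell(u\app[b],k) = \btrue$. The point is that $R$ is weakly total at \emph{every} $u$, unconditionally: assuming $\neg R~u~\bfalse \land \neg R~u~\btrue$, distributing double negations over the conjunction reduces the goal to refuting the existence of stages $k_0,k_1$ with $\ell(u,k_i+1) = \btrue$ but $\ell(u\app[\bfalse],k_0) = \bfalse$ and $\ell(u\app[\btrue],k_1) = \bfalse$; comparing $k_0$ and $k_1$ and using that extension of $u$ to level $k+1$ forces one child to extend to level $k$, together with antitonicity, yields a contradiction. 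Applying $(\lipicsNumber 4)$ gives $f$ with $\forall n.\;R~[f\,0,\dots,f(n-1)]~(f\,n)$, and an induction on $n$ shows $\tau_{[f\,0,\dots,f(n-1)]}$ is infinite (base case \Cref{coq:tree_nil}; the step turns "$\tau_u$ extends to every level" into "$\tau_{u\app[f\,n]}$ extends to every level" via $R~u~(f\,n)$). In particular $\tau[f\,0,\dots,f\,n]$ for all $n$, so $f$ is an infinite path.

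The main obstacle is the relation design in $(\lipicsNumber 4) \to (\lipicsNumber 1)$: the naive choice "$\tau_{u\app[b]}$ is infinite" is only $\Pi^0_2$ and hence not co-semi-decidable, so the level-comparison reformulation above is needed both to stay $\Pi^0_1$ and to secure weak totality at every node without presupposing infiniteness. The second delicate idea, underlying $(\lipicsNumber 3) \to (\lipicsNumber 4)$, is that binary branching lets one replace genuine dependent choice by a single application of countable choice over the nodes $\List\bool$ followed by a plain recursion; once these two constructions and the step-indexed decidable tree approximations are in place, the remaining implications are routine.
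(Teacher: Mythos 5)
Your proposal is correct and takes essentially the same route as the paper's proof: the same cycle $\lipicsNumber{1}\to\lipicsNumber{2}\to\lipicsNumber{3}\to\lipicsNumber{4}\to\lipicsNumber{1}$, the same depth-indexed decidable tree $\forall i<|u|.\forall m<|u|.\;g\,i\,(u[i])\,m=\bfalse$ for $\WKL \to \compl\semidecidable\textsf{-}\AC_{\nat,\bool}$, and the same coding-function-plus-recursion argument for $\lipicsNumber{3}\to\lipicsNumber{4}$. Your only deviations are cosmetic: in $\lipicsNumber{2}\to\lipicsNumber{3}$ you invoke the double-negation-elimination characterisation of $\LLPO$ (item \lipicsNumber{5} of \Cref{lem:LLPO_equivs}) rather than the De Morgan one, and in $\lipicsNumber{4}\to\lipicsNumber{1}$ your relation $\forall k.\;\ell(u,k+1)=\btrue \to \ell(u\app[b],k)=\btrue$ is a harmless variant of the paper's $\forall m.\; d_{u\app[b]}\,m \lor \neg d_{u\app[\neg_\bool b]}\,m$, with weak totality secured by the same witness-comparison argument.
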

\begin{proof}
  For $\WKL \to \LLPO$ we use the characterisation $\lipicsNumber{3}$ of $\LLPO$ from \Cref{lem:omni_equivs}.
  Let $\tau$ be an infinite tree. By $\WKL$ there is an infinite path $f$.
  Then $\tau_{[f 0]}$ is a direct infinite subtree.

  For $\WKL \to \compl\semidecidable\textsf{-}\AC_{\nat,\bool}$ let $R$ be total and $f$ s.t.\ $\forall n b.~R n b \leftrightarrow \forall m. f n b m = \bfalse$.
  Define the tree $\tau u := \forall i < \length u. \forall m < \length u.~f i (u[i]) m = \bfalse$.
  Infinity of $\tau$ follows from $\forall n.\exists u.\length u = n \land \forall i < n.R i (u[i])$, proved by induction on $n$ using totality of $R$.
  If $g$ is an infinite path of $\tau$, $R n (g n)$ follows from $\forall m.\tau [g 0, \dots, g (n + m + 1)]$.

  $\lipicsNumber{2} \to \lipicsNumber{3}$ is immediate using characterisation~$\lipicsNumber{3}$ of $\LLPO$ from \Cref{lem:LLPO_equivs}.

  For $\lipicsNumber{3} \to \lipicsNumber{4}$ let $F : \nat \to \List\bool$ and $G : \List \bool \to \nat$ invert each other\rlap.\footnote{These so called coding functions is easy to construct even formally using e.g.\ techniques from~\cite{forster2019synthetic}.}
  Let $R : \List \bool \to \bool \to \Prop$ and  $f$ be the choice function obtained from $\lipicsNumber{3}$ for $\lambda n b.R (F n) b$.
  Then $\lambda n.f (G (g n))$ where $g 0 := []$ and $g (\succN n) := g n \app [f (G (g n))]$ is a choice function for $R$ as wanted.
  
  For $\lipicsNumber{4} \to \lipicsNumber{1}$ let $\tau$ be an infinite tree and let $d_u m := \exists v. \length v = m \land \tau_u v$, i.e.\ $d_u m$ if $\tau_u$ has depth at least $m$ and in particular $\tau_u$ is infinite iff $\forall m.d_u m$.
  Define $R u b := \forall m. d_{u \app [b]} m \lor \neg d_{u \app [\neg_\bool b]}$.
  $R$ is co-semi-decidable (since $d$ is decidable), and $\neg R\,u\, \btrue \land \neg R\, u\, \bfalse$ is contradictory.
  Thus $\lipicsNumber{4}$ yields a choice function $f$ which fulfils $\tau [f0, \dots, f n]$ by induction on $n$.
\end{proof}

\section{Continuity:\hspace{0.2em}Baire Space,\hspace{0.2em}Cantor Space,\hspace{0.2em}and Brouwer's Intuitionism}
\label{sec:cont}
\setCoqFilename{Axioms.baire_cantor}

The total function space $\nat \to \nat$ is often called \introterm{Baire space}, whereas $\nat \to \bool$ is called \introterm{Cantor space}.
We will from now on write $\nat^\nat$ and $\bool^\nat$ for the spaces.

Constructively, one cannot prove that $\nat^\nat$ and $\bool^\nat$ are in bijection.
However, $\KT$ is equivalent to the existence of a continuous bijection $\bool^\nat \to \nat^\nat$ with a continuous modulus of continuity, i.e.\ a modulus function which is continuous (in the point) itself~\cite{dienerConstructiveReverseMathematics2020}.
Furthermore, %
$\KT$ yields a continuous bijection $\nat^\nat \to \bool^\nat$~\cite{beeson2012foundations}.%

We call a function $F : A^\nat \to B^\nat$ \introterm{continuous} if $\forall f : A^\nat.\forall n : \nat.\exists L : \List\nat. \forall g : A^\nat.\; (\map~f~L = \map~g~L) \to F f n = F g n$.
A function $M : A^\nat \to \nat \to \List \nat$ is called the \introterm{modulus of continuity} for $F$ if
$\forall n : \nat.\forall f g : A^\nat.~\textsf{map}~f~(M f n) = \textsf{map}~g~(M f n) \to F f n = F g n$.
We define:
\label{def:Homeo}
\begin{align*}
  \Homeointro(A^\nat,B^\nat) &:= \exists F: A^\nat \to B^\nat.\exists M.~ M\,\textit{is a continuous modulus of continuity for } F
\end{align*}

We start by proving that $\KT \leftrightarrow \Homeo(\bool^\nat,\nat^\nat)$.
To do so, we say that $u \app [b]$ is a \introterm{leaf of a Kleene tree} $\tau_K$ if $\tau_K u$, but $\neg \tau_K (u \app [b])$.

\begin{fact}[][KT_inj_enum_leafs]
  For every $\tau_K$, there is an injective enumeration $\ell : \nat \to \List \bool$ of the leaves of $\tau_K$.
\end{fact}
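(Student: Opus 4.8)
The plan is to exhibit the leaf set as a decidable subpredicate of the discrete, enumerable type $\List\bool$, prove it has infinitely many elements, and then extract an injective enumeration by the standard selection argument using guarded minimisation.

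First I would observe that being a leaf is decidable. The predicate $\lambda w.\;\exists u\,b.\;w = u \app [b] \land \tau_K u \land \neg\tau_K(u \app [b])$ can be decided on $w : \List\bool$ by noting that a nonempty list decomposes uniquely as $u \app [b]$; since $\tau_K$ is decidable, both $\tau_K u$ and $\neg\tau_K(u \app [b])$ are decidable, and $[]$ is never a leaf. Thus the leaf predicate is decidable, and since $\List\bool$ is discrete and enumerable, it is in particular enumerable.

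The crux is that $\tau_K$ has leaves of unbounded length, hence infinitely many. Fix $N$. Since $\tau_K$ is infinite there is $u$ with $\length u \geq N$ and $\tau_K u$, and by prefix-closedness I may take $\length u = N$. Define $f : \nat \to \bool$ that follows $u$ on the first $N$ bits and then extends greedily: at a prefix $p$ with $\tau_K p$, choose the next bit to be $\bfalse$ if $\tau_K(p\app[\bfalse])$, else $\btrue$ if $\tau_K(p\app[\btrue])$, else arbitrarily. By well-foundedness there is some index with $\neg\tau_K[f0,\dots,fn]$, so $\mu_\nat$ applied to the boolean test ``$[f0,\dots,fn]$ is rejected by the decider of $\tau_K$'' yields the least such $n$. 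By minimality the length-$n$ prefix $[f0,\dots,f(n-1)]$ lies in $\tau_K$ while $[f0,\dots,fn]$ does not, so $[f0,\dots,fn]$ is a leaf; and since every prefix of $u$ (length $\leq N$) is in $\tau_K$, the exit cannot occur before depth $N$, so $n \geq N$ and the leaf has length $> N$. Consequently leaves of arbitrarily large length, and therefore infinitely many distinct leaves, exist.

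Finally, I combine decidability with infinitude to build $\ell$. Fix a bijective enumeration $g : \nat \to \List\bool$ of the discrete type $\List\bool$ and define $\ell$ by recursion: $\ell\,n$ is $g\,k$ for the $\mu_\nat$-least $k$ such that $g\,k$ is a leaf and $g\,k \neq \ell\,i$ for all $i < n$. The existence of such $k$ needed to invoke $\mu_\nat$ follows from infinitude (the finitely many values $\ell\,0,\dots,\ell\,(n-1)$ cannot exhaust the leaves, and $g$ lists the missing one), so each step is total; injectivity is immediate from the exclusion condition, and every leaf is eventually output because $g$ is surjective. This $\ell$ is the desired injective enumeration of the leaves. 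The main obstacle is the infinitude of the leaf set, where well-foundedness (forcing the greedy path to exit $\tau_K$) must be combined with infiniteness (pushing the exit past depth $N$), handled by the least-exit index computed with $\mu_\nat$.
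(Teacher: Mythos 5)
Your proof is correct and takes essentially the same route as the paper's mechanised proof: decidability of the leaf predicate, leaves of unbounded length obtained by combining infiniteness (which keeps the constructed path inside $\tau_K$ up to depth $N$) with well-foundedness (which forces an exit, located at its least index via $\mu_\nat$, yielding a leaf of length $>N$), and then the standard $\mu_\nat$-based selection of fresh leaves to produce the injective enumeration. The only point worth tightening is the final completeness claim: that every leaf is eventually output follows not from surjectivity of $g$ alone but from the minimality of the index chosen at each step, which forces leaf indices to be consumed in increasing order.
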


We define $F (f : \nat \to \nat) n := (\ell(f 0) \app \dots \app \ell(f (n + 1)))[n]$.
Since leaves cannot be empty, the length of the accessed list is always larger than $n$ and $F$ is well-defined.

\begin{lemma}[][F_inj]
  $F$ is injective w.r.t. $\equivwrt{\nat^\bool}$ and $\equivwrt{\nat^\nat}$.
\end{lemma}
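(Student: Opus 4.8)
The plan is to exhibit $F f$ as the infinite concatenation $\ell(f 0) \app \ell(f 1) \app \dots$ of leaves, read off position by position, and then to recover $f$ from this concatenation by \emph{uniquely decoding} it back into its constituent leaves. Concretely, I would assume $F f_1 \equivwrt{\bool^\nat} F f_2$, i.e.\ $F f_1\, n = F f_2\, n$ for all $n$, and prove $f_1 \equivwrt{\nat^\nat} f_2$, i.e.\ $f_1 k = f_2 k$ for all $k$. Since $\ell$ is injective, it suffices to show $\ell(f_1 k) = \ell(f_2 k)$ for every $k$, so the entire argument reduces to showing that the block structure of the concatenation is forced by its values.

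The key structural fact, which I would isolate as a lemma, is that the leaves of $\tau_K$ form a \emph{prefix-free} family of nonempty lists: every leaf $u \app [b]$ is nonempty, and if one leaf is a prefix of another then they are equal. For prefix-freeness, suppose $u_1 \app [b_1] \sqsubseteq u_2 \app [b_2]$ with both sides leaves but $u_1 \app [b_1] \neq u_2 \app [b_2]$. Then $\length{(u_1 \app [b_1])} < \length{(u_2 \app [b_2])} = \length{u_2} + 1$, hence $\length{(u_1 \app [b_1])} \leq \length{u_2}$, so $u_1 \app [b_1]$, being a prefix of $u_2 \app [b_2]$ of length at most $\length{u_2}$, is already a prefix of $u_2$. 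As $\tau_K u_2$ holds and $\tau_K$ is prefix-closed, we get $\tau_K (u_1 \app [b_1])$, contradicting that $u_1 \app [b_1]$ is a leaf. Consequently, any two leaves that are both prefixes of one common word are comparable by length and therefore equal.

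With prefix-freeness in hand I would run the decoding induction. Let $p_f(0) \definedas 0$ and $p_f(k+1) \definedas p_f(k) + \length{\ell(f k)}$ record where the $k$-th leaf starts in the concatenation; since leaves are nonempty, $p_f(k) \geq k$, which also reconfirms that $F f\, n$ is well defined at every index. By induction on $k$ I prove simultaneously that $p_{f_1}(k) = p_{f_2}(k)$ and $f_1 k = f_2 k$. In the step, the common starting position $q \definedas p_{f_1}(k) = p_{f_2}(k)$ is where both $\ell(f_1 k)$ and $\ell(f_2 k)$ begin, so for every $i$ below the shorter of the two lengths, $\ell(f_1 k)[i] = F f_1\, (q + i) = F f_2\, (q + i) = \ell(f_2 k)[i]$; hence the shorter leaf is a prefix of the longer, and by prefix-freeness $\ell(f_1 k) = \ell(f_2 k)$. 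Injectivity of $\ell$ then gives $f_1 k = f_2 k$, and the resulting length equality propagates $p_{f_1}(k+1) = p_{f_2}(k+1)$, closing the induction.

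I expect the main obstacle to be the indexing bookkeeping rather than the conceptual core. One must verify the block identity $F f\, (p_f(k) + i) = \ell(f k)[i]$ for $i < \length{\ell(f k)}$, which in turn requires that appending further leaves in the definition $F f\, n = (\ell(f 0) \app \dots \app \ell(f (n+1)))[n]$ never changes the value at an index already covered, and that each leaf occupies exactly its block of positions from $p_f(k)$ up to $p_f(k+1)$. These are routine facts about $\app$ and list indexing, but they have to be stated precisely so that the clean prefix-free decoding argument goes through. Prefix-freeness of leaves is the genuine mathematical content, and it is exactly there that prefix-closure of $\tau_K$ enters.
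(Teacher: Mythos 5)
Your proof is correct: prefix-freeness of the leaves of $\tau_K$ (derived from prefix-closure of the tree, which is indeed the only genuinely mathematical step) combined with the simultaneous induction on block start positions $p_f(k)$ and values $f k$ gives exactly the required implication $F f_1 \equivwrt{\bool^\nat} F f_2 \to f_1 \equivwrt{\nat^\nat} f_2$, and the block identity $F f\,(p_f(k)+i) = \ell(f k)[i]$ you flag as bookkeeping is valid since leaves are nonempty (so $p_f(k)\geq k$ and block $k$ lies inside the accessed concatenation). The paper itself gives no textual proof of this lemma -- it is established only in the Coq mechanisation, which the appendix describes as resting on exactly this kind of list-prefix manipulation -- so your argument matches the intended one.
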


\begin{lemma}[][continuous_F]\label{lem:F_cont}
  $F$ is continuous with continuous modulus of continuity.
\end{lemma}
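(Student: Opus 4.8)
The plan is to exhibit an explicit modulus $M$ whose value does not depend on its functional argument; this collapses the \emph{continuity} requirement on $M$ itself to a triviality. The crucial observation is that $F f n$ inspects only the finitely many values $f 0, \dots, f (n+1)$: by definition $F f n = (\ell(f 0) \app \dots \app \ell(f (n+1)))[n]$, and this concatenation is fully determined once these $n+2$ values of $f$ are fixed. Moreover the indexed element is well-defined, since each leaf $\ell(f i)$ is non-empty (a leaf has the shape $u \app [b]$, hence length at least $1$), so the concatenation has length at least $n + 2 > n$ and position $n$ always lies inside it.

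Accordingly I would set $M f n := [0, 1, \dots, n+1]$, the list of the first $n+2$ indices, which does not mention $f$. To check that $M$ is a modulus of continuity for $F$, assume $\map~f~(M f n) = \map~g~(M f n)$. Reading off the entries, this equality of lists says exactly that $f i = g i$ for all $i \leq n+1$, whence $\ell(f i) = \ell(g i)$ for each such $i$, and therefore the two concatenations $\ell(f 0) \app \dots \app \ell(f (n+1))$ and $\ell(g 0) \app \dots \app \ell(g (n+1))$ coincide. Taking their $n$-th element then yields $F f n = F g n$, as required.

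Continuity of $M$ is immediate: since $M f n$ does not depend on $f$, any witnessing list works — for instance $L := []$ — because the conclusion $M f n = M g n$ holds unconditionally, without using the hypothesis $\map~f~L = \map~g~L$. Continuity of $F$ itself is then witnessed by the same list $L := M f n$, reusing the argument of the previous paragraph. The only real care needed is the list-indexing bookkeeping, namely tracking that position $n$ falls within the prefix contributed by $\ell(f 0), \dots, \ell(f (n+1))$ and relating pointwise equality of $\map~f~L$ to the equalities $f i = g i$; but once the non-emptiness of leaves is in hand there is no genuine obstacle. What makes the proof go through cleanly is precisely that $M$ is independent of $f$, which is exactly what is demanded by the \emph{continuous} modulus of continuity.
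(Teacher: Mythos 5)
Your proof is correct and is essentially the argument behind the paper's (mechanised) proof: since $F f n$ reads $f$ only at positions $0,\dots,n+1$, the constant modulus $M f n := [0,\dots,n+1]$ works, and its independence of $f$ makes its own continuity trivial. The non-emptiness of leaves, which you invoke for well-definedness of the indexing, is exactly the observation the paper makes when defining $F$.
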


\sfcommandsp{pref}\label{def:pref}
\sfcommandsp{ind}\label{def:ind}

\begin{lemma}[][F_find_pref] The following hold for a Kleene tree $\tau_K$:
  \begin{enumerate}
  \item There is a function $\ell^{-1} : \List \bool \to \nat$ s.t.\ for all leafs $l$, $\ell (\ell^{-1} l) = l$.
  \item For all $l$ s.t.\ $\neg \tau_K l$ there exists $l' \sqsubseteq l$ s.t.\ $l'$ is a leaf of $\tau_K$.
  \item There is $\pref : (\nat \to \bool) \to \List \bool$ s.t.\ $\pref g$ is a leaf of $\tau_K$ and $\exists n.~\pref g = \map~g~[0, \dots, n]$.
  \end{enumerate}
\end{lemma}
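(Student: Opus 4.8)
The plan is to handle the three parts in turn, using throughout that $\tau_K$ is decidable --- so that removing the last entry of a list and testing membership makes the predicate ``$l$ is a leaf of $\tau_K$'' decidable --- together with the guarded minimisation function of \Cref{coq:mu_nat}.

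For \lipicsNumber{1}, since $\ell$ enumerates the leaves, every leaf $l$ lies in the range of $\ell$, i.e.\ $\exists n.\;\ell n = l$. As $\List\bool$ is discrete, equality of lists is decidable, so I feed this existence statement to $\mu_\nat$ applied to the boolean test that is $\btrue$ exactly when $\ell n = l$, obtaining a least $n$ with $\ell n = l$. Case-splitting on the decidability of being a leaf, I define $\ell^{-1} l$ to be this $n$ when $l$ is a leaf and $0$ otherwise; this yields a total function with $\ell(\ell^{-1} l) = l$ for every leaf $l$. (Injectivity of $\ell$ is not needed here.)

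For \lipicsNumber{2}, I follow the boundary between the tree and its complement. Since $\tau_K []$ holds by \Cref{coq:tree_nil} whereas $\neg \tau_K l$ by assumption, the prefixes of $l$ ordered by length begin inside $\tau_K$ and end outside it. Applying $\mu_\nat$ to the decidable test ``the length-$k$ prefix of $l$ is not in $\tau_K$'', which holds at $k = \length l$, yields a least such $k$; since $\tau_K []$ holds we have $k \geq 1$, so the length-$(k{-}1)$ prefix lies in $\tau_K$ while the length-$k$ prefix, obtained from it by appending the next entry of $l$, does not. That length-$k$ prefix is therefore a leaf, and being a prefix it satisfies $l' \sqsubseteq l$.

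For \lipicsNumber{3}, well-foundedness of $\tau_K$ (part of the definition of a Kleene tree) gives, for the given $g$, some $n$ with $\neg\tau_K(\map~g~[0,\dots,n])$. Setting $l := \map~g~[0,\dots,n]$ and applying \lipicsNumber{2} produces a leaf $l' \sqsubseteq l$; since every nonempty prefix of $\map~g~[0,\dots,n]$ has the form $\map~g~[0,\dots,m]$ for some $m \le n$, and a leaf is never empty, taking $\pref g := l'$ gives a leaf of the required shape. The only genuine difficulty is the off-by-one bookkeeping in \lipicsNumber{2} and \lipicsNumber{3} --- verifying that the first prefix to leave the tree is a bona fide leaf rather than merely off-tree --- which rests entirely on prefix-closure of $\tau_K$ and on $\tau_K []$.
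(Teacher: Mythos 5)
Your overall route---decidability of leafhood (strip the last entry and test both lists with the decider for $\tau_K$), witness extraction with $\mu_\nat$ (\Cref{coq:mu_nat}), the leaf below an off-tree list found as the first prefix that leaves the tree, and well-foundedness supplying such a list for part \lipicsNumber{3}---is the intended one, and parts \lipicsNumber{1} and \lipicsNumber{2} are argued correctly, including the observation that injectivity of $\ell$ is not needed and the $k \geq 1$ bookkeeping via $\tau_K\,[]$.

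There is, however, one genuine type-theoretic slip in part \lipicsNumber{3}, of exactly the kind this paper is about: part \lipicsNumber{2} as you stated and proved it is a propositional existential, while $\pref$ is a function, i.e.\ an inhabitant of a type in $\Type$. So you cannot ``apply \lipicsNumber{2}'' to get $l'$ and then \emph{define} $\pref g := l'$---that would be a large elimination of $\exists$ into $\Type$, which Coq's type theory forbids. The repair is immediate from your own construction: either record \lipicsNumber{2} at the $\Sigma$-level (your $\mu_\nat$ search already produces a $\Sigma$-type, so this is only a matter of not truncating it), or bypass \lipicsNumber{2} entirely by defining $\pref g := \map~g~[0,\dots,n_g]$, where $n_g$ is the \emph{least} $n$ with $\neg\tau_K\,(\map~g~[0,\dots,n])$, obtained by feeding the decidable test and the existence statement coming from well-foundedness into $\mu_\nat$; minimality of $n_g$, together with $\tau_K\,[]$, makes this list itself a leaf, and it has the required shape $\exists n.~\pref g = \map~g~[0,\dots,n]$ on the nose. (A third option: since leafhood is decidable and a list has only finitely many prefixes, the witness of the Prop-level \lipicsNumber{2} can be re-extracted inside the function definition by bounded search, the impossible failure branch being discharged by the large elimination principle for $\bot$.) With any of these repairs the proof is complete.
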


\sfcommand{nxt}\label{def:nxt}%
We can now define the inverse as $G~g~n\definedas \ell^{-1}(\pref (\nxt^n g))$ where $\nxt~g~n \definedas g (n + \length{ \pref g  })$.%
\begin{lemma}[][F_surj]\label{lem:F_inv}
  $F~(G~g) \equivwrt{\nat \to \bool} g$
\end{lemma}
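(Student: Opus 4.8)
The plan is to prove the extensional equality $F~(G~g) \equivwrt{\nat \to \bool} g$ pointwise, i.e.\ $F~(G~g)~n = g~n$ for all $n$, by first peeling off the outer layer and then identifying the concatenated prefixes with an initial segment of $g$. First I would unfold the definitions: by \Cref{coq:F_find_pref}, $\pref~h$ is always a leaf of $\tau_K$ and $\ell(\ell^{-1}~l) = l$ for every leaf $l$, so $\ell((G~g)~m) = \ell(\ell^{-1}(\pref(\nxt^m g))) = \pref(\nxt^m g)$ for every $m$. Hence, by the definition of $F$,
\[ F~(G~g)~n = \bigl(\pref(\nxt^0 g) \app \cdots \app \pref(\nxt^{n+1} g)\bigr)[n]. \]

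The crux is a telescoping claim, proved by induction on $m$ and generalised over $g$: writing $C_m(g) := \pref(\nxt^0 g) \app \cdots \app \pref(\nxt^m g)$, I would show that $C_m(g) = \map~g~[0, \dots, \length{C_m(g)} - 1]$ and that $\length{C_m(g)} \geq m + 1$. The base case is exactly property~(3) of \Cref{coq:F_find_pref} together with the fact that leaves of $\tau_K$ are non-empty. For the step I would use the identity $C_{m+1}(g) = \pref~g \app C_m(\nxt~g)$, which holds because $\nxt^{i+1} g = \nxt^{i}(\nxt~g)$ and $\nxt^0 g = g$; then apply the induction hypothesis to $\nxt~g$ and rewrite $\map~(\nxt~g)~L$ as a shifted $\map~g$ using $(\nxt~g)~i = g(i + \length{\pref~g})$. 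Combined with $\pref~g = \map~g~[0, \dots, \length{\pref~g} - 1]$, the two initial segments of $g$ compose into a single $\map~g~[0, \dots, \length{C_{m+1}(g)} - 1]$, and the length bound is preserved since $\length{\pref~g} \geq 1$.

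With the claim in hand, I would instantiate it at $m := n+1$: then $C_{n+1}(g) = \map~g~[0, \dots, \length{C_{n+1}(g)} - 1]$ with $\length{C_{n+1}(g)} \geq n+2 > n$, so the index $n$ lies within range and $C_{n+1}(g)[n] = g~n$. Since $F~(G~g)~n = C_{n+1}(g)[n]$, this is exactly the desired equation.

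The main obstacle I expect is the index bookkeeping inside the telescoping step: one must implicitly track the cumulative offset $\sum_{i \leq m} \length{\pref(\nxt^i g)}$, verify that $\pref(\nxt^m g)$ coincides with the corresponding window of $g$ after the shift, and -- crucially -- exploit non-emptiness of leaves to guarantee that the total length grows fast enough (at least linearly) so that the index $n$ is always covered by the first $n+2$ prefixes. Once the length bound $\length{C_m(g)} \geq m+1$ is carried along in the invariant, the remaining reasoning is routine list arithmetic about $\app$ and $\map$.
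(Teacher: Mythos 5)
Your proposal is correct and follows essentially the same route as the paper's (mechanised) proof: unfold $F(G\,g)\,n$ via $\ell(\ell^{-1}(\pref\,h)) = \pref\,h$ into a concatenation of prefixes, then show by an induction generalised over $g$ that this concatenation is an initial segment of $g$ whose length grows at least linearly, so that index $n$ falls inside it. The telescoping identity, the shift via $\nxt$, and the non-emptiness of leaves are exactly the list-prefix bookkeeping the paper's Coq development carries out.
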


\begin{lemma}[][continuous_G]\label{lem:G_cont}
  $G$ is continuous with continuous modulus of continuity.
\end{lemma}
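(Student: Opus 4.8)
The goal is to show that $G$, the inverse map defined via $G\,g\,n := \ell^{-1}(\pref(\nxt^n g))$, is continuous and admits a continuous modulus of continuity. The plan is to isolate how each output coordinate $G\,g\,n$ depends on finitely many coordinates of $g$, and then to package that dependency into an explicit modulus function whose own point-dependence is again finite.

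First I would analyse the structure of $G\,g\,n$. By construction $\nxt\,g\,n = g(n + \length{\pref g})$ and $\pref g$ is, by \Cref{coq:F_find_pref}(3), a leaf of $\tau_K$ of the form $\map\,g\,[0,\dots,k]$ for some $k$ depending on $g$. The crucial observation is that $\pref g$ depends only on the finite prefix $g\,0,\dots,g\,k$: running the leaf-finding procedure reads $g$ coordinate by coordinate until a leaf is detected, and since $\tau_K$ is decidable, this read-length is determined by the values seen so far. So the length $\length{\pref g}$ and the value $\ell^{-1}(\pref g)$ are both continuous in $g$ with an explicit finite modulus, namely the list $[0,\dots,k]$ where $k+1$ is the length of the discovered leaf.

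Next I would handle the iterated shift. Computing $G\,g\,n$ requires applying $\nxt$ exactly $n$ times and then reading one leaf of the resulting shifted stream. Each application of $\nxt$ consumes a finite block of coordinates whose size equals the length of the leaf found at that stage, so after $n$ iterations only coordinates $0,\dots,N$ of $g$ have been inspected, where $N$ is the sum of the lengths of the $n+1$ successive leaves encountered. This $N$ is itself computable from the same finite prefix of $g$. I would therefore define $M\,g\,n$ to be the list $[0,\dots,N]$, collecting all coordinates read during the computation of $G\,g\,n$; by the prefix-dependence argument above, $\map\,g\,(M\,g\,n) = \map\,h\,(M\,g\,n)$ forces $G\,g\,n = G\,h\,n$, establishing both continuity and that $M$ is a modulus for $G$.

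The main obstacle, and the part requiring care, is showing that $M$ is \emph{itself} continuous in the point — that is, that $M\,g\,n$ depends only on finitely many coordinates of $g$. This follows from the same mechanism: the indices read to produce $M\,g\,n$ are exactly the indices read to compute $G\,g\,n$, so $M\,g\,n$ is determined by $\map\,g\,(M\,g\,n)$, a self-referential but well-founded condition since the read-length is fixed once the finitely many consumed values are fixed. Formally I would prove by induction on $n$ that the whole computation (the running offset $\length{\pref(\nxt^i g)}$ for $i \le n$, together with the outputs) is invariant under any $h$ agreeing with $g$ on $M\,g\,n$, which simultaneously yields $G\,g\,n = G\,h\,n$ and $M\,g\,n = M\,h\,n$. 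This mirrors exactly the argument used for $F$ in \Cref{lem:F_cont}, transported through the decidable-tree leaf search, so the induction is routine once the offset-accumulation bookkeeping is set up.
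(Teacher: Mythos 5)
Your proof is correct and takes essentially the approach that the paper's definitions dictate (the paper gives no prose proof of this lemma, deferring to the Coq mechanisation, which the appendix describes as resting on exactly this kind of prefix-of-list bookkeeping): the modulus $M\,g\,n$ lists the coordinates consumed by the $n+1$ successive leaf searches, and a simultaneous induction on $n$ shows that agreement on $M\,g\,n$ forces both $G\,g\,n = G\,h\,n$ and $M\,g\,n = M\,h\,n$, giving continuity of $G$ and of $M$ at once. One small caveat: unlike for $F$, whose modulus $[0,\dots,n+1]$ is point-independent and hence trivially continuous, the modulus for $G$ genuinely depends on the point, so the argument does not ``mirror exactly'' that of \Cref{lem:F_cont} -- but your strengthened induction hypothesis (invariance of the running offsets together with the outputs) is precisely what closes that gap.
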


The following proof is due to Diener~\cite[Proposition 5.3.2]{dienerConstructiveReverseMathematics2020}.

\sfcommand{extend}

\begin{lemma}[][Homeo_M_Cantor_Baire_to_KT]
  $\Homeo(\bool^\nat,\nat^\bool) \to \KT$
\end{lemma}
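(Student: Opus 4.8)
The plan is to turn the continuous surjection onto Baire space supplied by the homeomorphism into a decidable binary tree all of whose branches terminate (well-foundedness) but which admits no finite depth bound (infinity), i.e.\ a Kleene tree. Write $F : \bool^\nat \to \nat^\nat$ for the bijection provided by $\Homeo(\bool^\nat,\nat^\nat)$, let $M$ be its continuous modulus of continuity, and set $g\,\alpha := F\,\alpha\,0$, a function $\bool^\nat \to \nat$. Then $g$ is continuous and inherits a continuous modulus from $M$. Moreover $g$ is surjective: since $F$ is onto $\nat^\nat$, the point $(k,0,0,\dots)$ has a preimage $\alpha$, and then $g\,\alpha = k$. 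This is the only place where surjectivity of $F$ is used.

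For $u : \List\bool$ let $\hat u : \bool^\nat$ be $u$ padded with $\bfalse$, and let $\sigma\,u$ be one more than the largest index occurring in $M\,\hat u\,0$. Define the decidable predicate $\mathsf{dec}\,u := \sigma\,u \le |u|$ and the tree $\tau\,u := \forall v \sqsubseteq u.\;\neg\,\mathsf{dec}\,v$. By construction $\tau$ is decidable (a finite conjunction over the prefixes of $u$), prefix-closed, and non-empty, since the empty prefix cannot fix the value of the non-constant $g$, so $\neg\,\mathsf{dec}\,[]$ and hence $\tau\,[]$. The role of $\mathsf{dec}$ is to be a finitely checkable sufficient condition for $u$ to determine $g$: if $\mathsf{dec}\,u$ holds then every $\beta$ extending $u$ agrees with $\hat u$ on all indices in $M\,\hat u\,0$, whence $g\,\beta = g\,\hat u$ by the modulus property.

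For well-foundedness, given $f : \bool^\nat$, continuity of $M$ yields a prefix length after which $M\,\cdot\,0$ is constantly $M\,f\,0$; choosing $N$ larger than this length and than every index in $M\,f\,0$, the prefix $v := [f\,0,\dots,f(N-1)]$ satisfies $M\,\hat v\,0 = M\,f\,0$ and therefore $\mathsf{dec}\,v$, so $\neg\,\tau\,v$ and the branch $f$ leaves $\tau$ at depth $N-1$. For infinity, by \Cref{coq:not_bounded_infinite_iff} it suffices to show $\tau$ is not bounded. If $\tau$ were bounded by $n$, then every $u$ of length $n$ would fail $\tau$, so each such $u$ would contain a prefix $v$ with $\mathsf{dec}\,v$; every point of $\bool^\nat$ extends one of these finitely many $v$, so by the previous observation the range of $g$ would be contained in the finite set $\{\,g\,\hat v : v \text{ a prefix of a length-}n \text{ string}\,\}$, contradicting surjectivity of $g$ onto $\nat$. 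Hence $\tau$ is infinite, well-founded, and decidable, i.e.\ a Kleene tree.

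The main obstacle is the mismatch between the modulus taken at a genuine point $f$ and the modulus taken at the padded finite approximation $\hat v$ occurring in the decidable predicate $\mathsf{dec}$. Reconciling the two is exactly what forces the use of a \emph{continuous} modulus of continuity rather than mere continuity: continuity of $M$ guarantees that along any branch the modulus of the padded prefix eventually coincides with the modulus at the limiting point, which is precisely what makes the well-foundedness step go through.
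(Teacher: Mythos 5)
Your proof is correct and takes essentially the same approach as the paper's, which follows Diener: the tree defined in the paper's one-line proof is (up to notation) exactly your tree of strings none of whose prefixes $v$ satisfies $\mathsf{dec}\,v$, i.e.\ none of whose padded prefixes has its modulus $M\,\hat v\,0$ contained below $|v|$, with well-foundedness obtained from the continuity of $M$ and infinity from surjectivity via the first-coordinate function $F\,(\cdot)\,0$. The paper merely exhibits the tree and leaves implicit the verification that you spell out in detail.
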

\begin{proof}
  Let $F$ be a bijection with continuous modulus of continuity $M$.
  Then $\tau u \definedas \forall 0 < i \leq |u|.\exists k < i. k \in M (\lambda n. \iteis{l[n]}{\Some b}{b}{\bfalse})~0$ is a Kleene tree.
\end{proof}
\begin{theorem}[][KT_iff_Homeo_N_nat_bool]
  $\KT \leftrightarrow \Homeo(\bool^\nat,\nat^\nat)$ and $\KT \to \Homeo(\nat^\nat,\bool^\nat)$.
\end{theorem}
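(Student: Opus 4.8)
The plan is to assemble the three claims from the lemmas already established for the concrete maps $F : \nat^\nat \to \bool^\nat$ and $G : \bool^\nat \to \nat^\nat$. The right-to-left direction of the equivalence is exactly \Cref{coq:Homeo_M_Cantor_Baire_to_KT}, so the genuine work lies in the two implications out of $\KT$, and the third claim $\KT \to \Homeo(\nat^\nat,\bool^\nat)$ will follow from the same construction read in the opposite direction.

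First I would fix a Kleene tree obtained from $\KT$ and its injective leaf enumeration $\ell$ via \Cref{coq:KT_inj_enum_leafs}, which yields the maps $F$ and $G$ together with four facts: $F$ is injective (\Cref{coq:F_inj}), $F$ is continuous with a continuous modulus (\Cref{lem:F_cont}), $G$ is a right inverse of $F$ up to extensionality, $F\,(G\,g) \equivwrt{\bool^\nat} g$ (\Cref{lem:F_inv}), and $G$ is continuous with a continuous modulus (\Cref{lem:G_cont}). The remaining step is purely algebraic: upgrade the one-sided inverse to a genuine bijection between $\bool^\nat$ and $\nat^\nat$.

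The key observation is that a right inverse of an injection is already a two-sided inverse. Instantiating $F\,(G\,g) \equivwrt{\bool^\nat} g$ at $g := F\,h$ gives $F\,(G\,(F\,h)) \equivwrt{\bool^\nat} F\,h$, and injectivity of $F$ then yields $G\,(F\,h) \equivwrt{\nat^\nat} h$; hence $F$ and $G$ are mutually inverse with respect to $\equivwrt{\bool^\nat}$ and $\equivwrt{\nat^\nat}$. Surjectivity of $G$ onto Baire space follows by exhibiting $F\,h$ as a preimage of $h$, and injectivity of $G$ follows by applying $F$ to an assumed $G\,g_1 \equivwrt{\nat^\nat} G\,g_2$ and cancelling with the right-inverse law. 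Thus $G$ is a bijection $\bool^\nat \to \nat^\nat$ which by \Cref{lem:G_cont} carries a continuous modulus of continuity, establishing $\Homeo(\bool^\nat,\nat^\nat)$. For the third claim the symmetric reading makes $F$ a bijection $\nat^\nat \to \bool^\nat$, continuous with a continuous modulus by \Cref{lem:F_cont}, which gives $\Homeo(\nat^\nat,\bool^\nat)$.

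The only delicate point I expect is the bookkeeping around the extensional equivalences: the injectivity and inverse lemmas are phrased up to $\equivwrt{\nat^\nat}$ and $\equivwrt{\bool^\nat}$ rather than intensional equality, so every cancellation must be performed at the level of these relations, and one must observe that $F$ and $G$, being defined pointwise, respect them. Modulo this, the theorem is a direct combination of \Cref{coq:F_inj}, \Cref{lem:F_cont}, \Cref{lem:F_inv}, \Cref{lem:G_cont}, and \Cref{coq:Homeo_M_Cantor_Baire_to_KT}.
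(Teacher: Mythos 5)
Your proof is correct and is essentially the paper's own (implicit) proof: the theorem is stated as the direct assembly of \Cref{coq:F_inj}, \Cref{lem:F_cont}, \Cref{lem:F_inv}, \Cref{lem:G_cont}, and \Cref{coq:Homeo_M_Cantor_Baire_to_KT}, with precisely the cancellation you describe (a right inverse of an injection is a two-sided inverse, carried out up to $\equivwrt{\bool^\nat}$ and $\equivwrt{\nat^\nat}$) making $F$ and $G$ mutually inverse continuous bijections. The subtlety you flag is real but harmless: $F$ and $G$ respect the extensional equivalences because they are continuous, so all cancellations go through at the level of $\equiv$.
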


Deiser~\cite{deiser2009simple} proves in a classical setting that $\Homeo(\nat^\nat,\bool^\nat)$ holds.
It would be interesting to see whether the proof can be adapted to a constructive proof $\WKL \to \Homeo(\nat^\nat,\bool^\nat)$.

We have already seen that $\CT$ is inconsistent with $\FAN$.
Besides $\FAN$, in Brouwer's intuitionism the continuity of functionals $\nat^\nat \to \nat$ is routinely assumed:
\label{def:Cont}
\[\text{\hspace{-0.5cm}
\scalebox{0.90}{
\[ \Cont \definedas \forall F : (\nat \to \nat) \to \nat.~ \forall f : \nat \to A.\exists L : \List\nat. \forall g : \nat \to A.\; (\map~f~L = \map~g~L) \to F f \equivwrt{B} F g \]}}\]

Since every computable function is continuous, we believe $\Cont$ to be consistent with $\CT$.
Combining $\Cont$ with $\AC_{\nat\to\nat,\nat}$ yields \introterm{Brouwer's continuity principle}\rlap,\footnote{But note that $\Cont \to \AC_{\nat\to\nat,\nat} \to \bot$, since the resulting modulus of continuity function allows for the construction of a non-continuous function~\cite{hotzel2015inconsistency}.} called $\WCN$ in~\cite{troelstra1988constructivism}:
\label{def:WCN}
\[ \WCN \definedas \forall R : (\nat \to \nat) \to \nat \to \Prop.(\forall f. \exists n. R f n) \to \forall f.\exists L n. \forall g.\; \map~f~L = \map~g~L \to R g n \]

\setCoqFilename{Axioms.principles}
\begin{theorem}[][WC_N_to_Cont]
  $\WCN \to \Cont$
\end{theorem}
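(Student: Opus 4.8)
The plan is to derive the weaker continuity principle $\Cont$ from the stronger Brouwer continuity principle $\WCN$ by instantiating the quantified relation in $\WCN$ with one built from the given functional $F$. Recall the two definitions: $\WCN$ speaks about relations $R : (\nat \to \nat) \to \nat \to \Prop$ satisfying the totality hypothesis $\forall f.\exists n.\,R\,f\,n$, and concludes the existence of a modulus $L$ and a witness $n$ with the continuity-style transport property; whereas $\Cont$ concerns a single functional $F : (\nat \to \nat) \to \nat$ and asks, for each $f$, for a list $L$ such that agreement of $f$ and $g$ on $L$ forces $F f \equivwrt{B} F g$.

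First I would fix an arbitrary $F : (\nat \to \nat) \to \nat$ and define the relation $R\,g\,n \definedas (F g = n)$, i.e.\ $R$ is the graph of $F$. The totality hypothesis $\forall g.\exists n.\,R\,g\,n$ is then immediate, witnessed by $n \definedas F g$ together with reflexivity of equality. Applying $\WCN$ to this $R$, I obtain for each $f$ a list $L$ and a number $n$ such that for all $g$, $\map~f~L = \map~g~L \to R\,g\,n$, that is, $\map~f~L = \map~g~L \to F g = n$. Instantiating the conclusion once at $g \definedas f$ (where the map-equality holds by reflexivity) yields $F f = n$, pinning down the witness $n$ as $F f$ itself. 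Substituting back, I get exactly $\map~f~L = \map~g~L \to F g = F f$ for all $g$, which is the defining clause of $\Cont$ (modulo the orientation of the equation and the fact that on the codomain the relevant instance of $\equivwrt{B}$ reduces to intensional equality on $\nat$).

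The only genuinely delicate point is the bookkeeping around the two different shapes of the conclusion: $\WCN$ produces an \emph{existentially} bound witness $n$ inside the modulus statement, while $\Cont$ wants the conclusion phrased directly in terms of $F f$ and $F g$. The step that resolves this is the self-instantiation $g \definedas f$ described above, which forces $n = F f$; I expect this to be the main (though minor) obstacle, since one must be careful that the same $L$ and $n$ returned by $\WCN$ are reused in both the self-instantiation and the final general instantiation. Everything else — discharging totality, rewriting $R$ to equality, and matching $\equivwrt{B}$ on the discrete codomain to $=$ — is routine. Hence $\WCN \to \Cont$ follows.
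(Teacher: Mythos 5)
Your proof is correct and is exactly the canonical argument: instantiate $\WCN$ with the graph relation $R\,g\,n \definedas (F g = n)$, use the self-instantiation $g \definedas f$ to identify the witness $n$ with $F f$, and conclude $F f = F g$ for all $g$ agreeing with $f$ on $L$. The paper gives no prose proof for this theorem (only the mechanisation), but this is precisely the intended one-step reduction, so there is nothing to add.
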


$\WCN$ is inconsistent with $\CT$, since the computability relation $\sim$ is not continuous:

\begin{theorem}[][WC_N_CT_inc]
  $\WCN \to \CT \to \bot$
\end{theorem}
\begin{proof}
  Recall that if two functions have the same code they are extensionally equal.
  By $\CT$, $\lambda f c. c \sim f$ is a total relation.
  Using $\WCN$ for this relation and $\lambda x.\;0$ yields a list $L$ and a code $c$ s.t.\ $\forall g.\;\map~g~L = [0, \dots, 0] \to c \sim g$.

  The functions $\lambda x.\;0$ and $\lambda x.\;\ite{x \in L}{0}{1}$ both fulfil the hypothesis and thus have the same code -- a contradiction since they are not extensionally equal. 
\end{proof}

\section{Conclusion}
\label{sec:conclusion}

In this paper we surveyed the known connections of axioms in Coq's type theory, a constructive type theory with a separate, impredicative universe of propositions, with a special focus on Church's thesis $\CT$ and formulations of axioms in terms of notions of synthetic computability.
Furthermore, all results are mechanised in the Coq proof assistant.

In constructive mathematics, countable choice is often silently assumed, as critised e.g.\ by Richman~\cite{richman2000fundamental,richmanConstructiveMathematicsChoice2001}.
In contrast, constructive type theory with a universe of propositions seems to be a suitable base system for matters of constructive (reverse) mathematics sensitive to applications of countable choice.
Due to the separate universe of propositions, such a constructive type theory neither proves countable nor dependent choice, allowing equivalences like the one in \Cref{thm:WKL_equivs} to be stated sensitively to choice.
We conjecture that \Cref{coq:semi_decidable_AC} deducing $\semidecidable\textsf{-}\AC_{X,\nat}$ and $\enumerable\textsf{-}\AC_{\nat,X}$ directly from $\decidable\textsf{-}\AC_{X,\nat}$ cannot be significantly strengthened.
The proof of $\decidable\textsf{-}\AC_{X,\nat}$ in turn crucially relies on a large elimination principle for $\exists n.\;f n = \btrue$ (\Cref{coq:mu_nat}). %
The theory of \cite{berger2012weak} proves $\decidable\textsf{-}\AC_{\nat,\bool}$ and thus likely also $\semidecidable\textsf{-}\AC_{\nat,\bool}$.

Predicative Martin-Löf type theory proves $\AC$ and type theories with propositional truncation and a semantic notion of (homotopy) propositions prove $\AUC_{\nat,\bool}$, thus $\LEM$ suffices to disprove $\CT$ for both these flavours of type theory.
Based on the current state of knowledge in the literature it seems likely that $\semidecidable\textsf{-}\AC_{\nat,\bool}$ and $\LEM$ together do not suffice to disprove $\CT$, which seems to require at least classical logic of the strength of $\LLPO$ and a choice axiom for \textit{co}-semi-decidable predicates.
Thus we conjecture that a consistency proof of e.g.\ $\LEM \land \CT$ might be possible for Coq's type theory.

Another advantage of basing constructive investigations on constructive type theory is that implementations of type theory in proof assistants already exist.
For this paper, mechanising the results in Coq was tremendously helpful in keeping track of all details.
For example, many of the presented proofs are very sensitive to small changes in formulations, and Coq actually helped in understanding the proofs and getting them right.

Besides consistency, another interesting property of axioms is admissibility.
For instance, Pédrot and Tabareau~\cite{PedrotMP} prove $\MP$ admissible in constructive type theory.
$\CT$ seems to be admissible in constructive type theory in the sense that for every defined function $f : \nat \to \nat$ one can define a program in a model of computation with the same input output behaviour, as witnessed by the certifying extraction for a fragment of Coq to the $\lambda$-calculus~\cite{forster_et_al:LIPIcs:2019:11072}.
An admissibility proof of $\CT$ could then serve as a theoretical underpinning of the Coq library of undecidability proofs~\cite{forster2020coq}.
However, any formal admissibility proof would have to deal with the intricacies of Coq's type theory.
It would be interesting to investigate whether Letouzey's semantic proof for the correctness of type and proof erasure~\cite{LetouzeyPhd} can be connected with the mechanisation of meta-theoretical properties of Coq's type theory~\cite{sozeau:hal-02167423} in the MetaCoq project~\cite{sozeau2019coq}, yielding a mechanised admissibility proof for $\CT$ in Coq's type theory.

\bibliography{biblio.bib}

\appendix

\section{Modesty and Oracles}
\label{sec:modesty}

Using $\decidable\textsf{-}\AC_{\nat,\nat}$ from \Cref{coq:decidable_AC} allows proving a choice axiom w.r.t.\ models of computation, observed by Larchey-Wendling~\cite{larchey2017typing} and called ``modesty'' by Forster and Smolka~\cite{forster2017weak}.

\begin{lemma}
  Let $T$ be an abstract computation function. We have
  \[\forall c. (\forall n.\exists m k.\;T c n k = \Some m) \to \exists f : \nat \to \nat. \forall n.\exists k.~T c n k = \Some (f n) \]
\end{lemma}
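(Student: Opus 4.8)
The plan is to reduce the claim to the decidable choice principle $\decidable\textsf{-}\AC_{\nat,\nat}$ from \Cref{coq:decidable_AC}. The subtle point is the choice of relation: the ``obvious'' candidate $\lambda n\, m.\; \exists k.\; T c n k = \Some m$, expressing that $m$ is the output of $c$ on input $n$, is only \emph{semi}-decidable, since deciding it would require an unbounded search over the step index $k$. To stay within reach of the \emph{decidable} choice principle I would fold the step index into the chosen witness using the pairing function $\langle\_,\_\rangle$.

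Concretely, fix $c$ and assume $\forall n.\exists m\, k.\; T c n k = \Some m$. Define $R : \nat \to \nat \to \Prop$ by $R\, n := \lambda \langle m, k\rangle.\; (T c n k = \Some m)$, reading the second argument as an encoded pair of output and step index. This relation is decidable: for a fixed $n$ and $\langle m, k\rangle$ one evaluates $T c n k$ and compares the result with $\Some m$, a decidable equality test because $\option\nat$ is discrete. It is also total: given $n$, the hypothesis supplies $m, k$ with $T c n k = \Some m$, so $\langle m, k\rangle$ witnesses $\exists p.\; R\, n\, p$.

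Applying $\decidable\textsf{-}\AC_{\nat,\nat}$ to $R$ yields a function $g : \nat \to \nat$ with $\forall n.\; R\, n\, (g\, n)$. I would then take $f\, n := (\lambda\langle m, k\rangle.\; m)(g\, n)$, the output component of the decoded pair $g\, n = \langle m, k\rangle$. Unfolding $R\, n\, (g\, n)$ gives $T c n k = \Some m = \Some(f\, n)$, so the step index $k$ carried inside $g\, n$ is exactly the witness needed for $\exists k.\; T c n k = \Some(f\, n)$, completing the proof.

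The argument is essentially routine once $R$ is set up, and it does not even use monotonicity of $T$. The only genuine obstacle — and the reason the lemma is phrased in terms of $\decidable\textsf{-}\AC_{\nat,\nat}$ rather than a semi-decidable variant — is recognising that the step index must travel inside the choice witness, so that checking membership in $R$ becomes a bounded, decidable evaluation of $T$ instead of an unbounded search.
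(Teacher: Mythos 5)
Your proof is correct and matches the paper's intended argument: the paper derives this ``modesty'' lemma precisely from $\decidable\textsf{-}\AC_{\nat,\nat}$ (\Cref{coq:decidable_AC}), and folding the step index into the choice witness via the pairing function is exactly the trick that makes the relation decidable rather than merely semi-decidable. Your observation that monotonicity of $T$ is not needed is also accurate.
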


That is, if $c$ is the code of a function inside the model of computation which is provably total, the total function can be computed outside of the model.
This modesty principle simplifies the mechanisation of computability theory in type theory as e.g.\ in~\cite{forster2019call}.
For instance, it allows to prove that defining decidability as ``a total function in the model of computation deciding the predicate'' and as ``a meta-level function deciding the predicate which is computable in the model of computation'' is equivalent.

However, the modesty principle prevents synthetic treatments of computability theory based on oracles.
Traditionally, computability theory based on oracles is formulated using a computability function $T_p$, s.t.\ for $p : \nat \to \Prop$ there exists a code $c_p$ representing a total function s.t.\ $\forall n. (\exists k. T c_p n k = \Some 0) \leftrightarrow p n$.

Synthetically, we would now like to assume an abstract computability function for every $p$ as ``Church's thesis with oracles''.
``Church's thesis with oracles'' implies \CT, and we know that under \CT the predicate $\K_0$ is not decidable.
However, under the presence of $\decidable\textsf{-}\AC_{\nat,\nat}$ we can use $T_{\K_0}$ and obtain $c_{\K_0}$ which can be turned into a decider $f : \nat \to \bool$ for $\K_0$ using the choice principle above -- a contradiction.

\section{Coq mechanisation}

The Coq mechanisation of the paper comprises 4250 lines of code, with 3300 lines of proofs and 950 lines of statements and definitions, i.e.\ 77\% proofs.
The mechanisation is based on the Coq-std$\texttt{++}$ library~\cite{stdpp}, plus around 1500 additional lines of code with custom extensions to Coq's standard library which are shared with the Coq library of undecidability proofs~\cite{forster2020coq}.

The 4250 lines of the main development are distributed as follows:
The basics of synthetic computability (decidablility, semi-decidability, enumerability, many-one reductions) need 1150 lines of code.
The mechanisation of \Cref{sec:CT}, covering $\CT$, $\EA$, and $\EPF$, comprises 400 lines of code.
120 lines of codes are needed for the undecidability results of \Cref{sec:synth}.
\Cref{sec:kleene} and \Cref{sec:trees}, covering trees and in particular Kleene trees, need 1000 lines of code.
\Cref{sec:cont} on continuity is mechanised in 800 lines.
The rest, i.e.\ \Cref{sec:ext,sec:class,sec:russ,sec:choice}, needs 750 lines of code.

No advanced mechanisation techniques were needed.
Discreteness and enumerability proofs for types were eased using type classes to assemble proofs for compound types such as $\List \bool \times \option \nat$, as already done in~\cite{forster2019synthetic}.
Defining the notions of $\equiv_{A \to B}$, $\equiv_{A \to \Prop}$, and so on was made possible by using type classes as well.

The technically most challenging mechanised proofs correspond to Lemmas \ref{coq:F_inj} - \ref{coq:continuous_G}, i.e.\ prove $\KT \to \Homeo(\bool^\nat, \nat^ \nat) \land \Homeo(\nat^\nat, \bool^ \nat)$.
For these proofs, lots of manipulation of prefixes of lists was needed, and while the functions $\texttt{firstn}$ and $\texttt{dropn}$ are defined in Coq's standard library, the very useful lemmas of Coq-std$\texttt{++}$ where needed to make the proofs feasible.

In the development of this paper, the Coq proof assistant, while also acting as proof checker, was truly used as an assistant:
Lots of proofs were developed and understood directly while working in Coq rather than on paper, allowing to identify for instance the equivalent characterisations of $\LLPO$, $\MP$, and $\WKL$ as in \Cref{lem:LLPO_equivs} (\lipicsNumber{5}), \Cref{coq:MP_to_MP_semidecidable} (\lipicsNumber{5}), and \Cref{coq:WKL_to_LLPO} (\lipicsNumber{3,4}), which are hard to observe on paper because lots of bookkeeping for side-conditions would have to be done manually then.

\end{document}